\newtheorem{theorem}{Theorem}
\newtheorem{proposition}[theorem]{Proposition}
\newtheorem{observation}[theorem]{Observation}
\newtheorem{definition}[theorem]{Definition}
\newtheorem{lemma}[theorem]{Lemma}
\newtheorem{example}[theorem]{Example}
\newcommand{\Ret}{\mathcal{R}}
\def\moverlay{\mathpalette\mov@rlay}
\def\mov@rlay#1#2{\leavevmode\vtop{%
   \baselineskip\z@skip \lineskiplimit-\maxdimen
   \ialign{\hfil$\m@th#1##$\hfil\cr#2\crcr}}}
\newcommand{\charfusion}[3][\mathord]{
    #1{\ifx#1\mathop\vphantom{#2}\fi
        \mathpalette\mov@rlay{#2\cr#3}
      }
    \ifx#1\mathop\expandafter\displaylimits\fi}
\newcommand{\cupdot}{\charfusion[\mathbin]{\cup}{\cdot}}
\newcommand{\unknownn}{\textsf{unknown}^{2n}}
\newcommand{\ssnote}[1]{}
\newcommand{\ssalert}[1]{}
\newcommand{\task}[1]{}
\newcommand{\ID}{\text{ID}}
\newcommand{\IDn}{\ensuremath{\ID^{2n}}}
\newcommand{\sym}{\text{MA}}
\newcommand{\symn}{\ensuremath{\sym^{2n}}}
\newcommand{\asym}{\text{MD}}
\newcommand{\asymn}{\ensuremath{\asym^{2n}}}
\newcommand{\assymn}{\ensuremath{\text{A-S-SYM}^{2n}}}
\newcommand{\chaos}{\text{CH}}
\newcommand{\unknown}{\chaos}
\newcommand{\unn}{\ensuremath{\unknown^{2n}}}
\DeclareMathOperator{\dist}{d}
\DeclareMathOperator{\nordist}{nd}
\newcommand{\retrodist}{\dist_{\operatorname{MAD}}}
\newcommand{\norretrodist}{\nordist_{\operatorname{MAD}}}
\newcommand{\normphi}{{{\mathrm{norm}\hbox{-}\phi}}}
\newcommand{\swap}{{{\mathrm{swap}}}}
\newcommand{\spear}{{{\mathrm{spear}}}}
\newcommand{\pos}{{\mathrm{pos}}}
\newcommand{\ag}{{\mathrm{ag}}}
\newcommand{\MA}{\mathcal{MA}}
\newcommand{\std}{\sigma}
\DeclarePairedDelimiter\floor{\lfloor}{\rfloor}
\newcommand{\speardist}{d_{\spear}}
\title{A Map of Diverse Synthetic Stable Roommates Instances}
\author[1]{Niclas Boehmer}
\author[1]{Klaus Heeger}
\author[2]{Stanisław Szufa}
\affil[1]{\small
  Technische Universit\"at Berlin, Algorithmics and Computational 
  Complexity\protect\\
  \{niclas.boehmer,heeger\}@tu-berlin.de}
  \affil[2]{\small
  AGH University, Kraków, Poland
 szufa@agh.edu.pl}
\date{\today}
\begin{document}

\maketitle

\begin{abstract}
Focusing on \textsc{Stable Roommates} (SR) instances, we contribute to the toolbox for conducting experiments for stable matching problems. We introduce a polynomial-time computable pseudometric to measure the similarity of SR instances, analyze its properties, and use it to create a map of SR instances. This map visualizes $460$ synthetic SR instances (each sampled from one of ten different statistical cultures) as follows: Each instance is a point in the plane, and two points are close on the map if the corresponding SR instances are similar to each other. Subsequently, we conduct several exemplary experiments and depict their results on the map, illustrating the map's usefulness as a non-aggregate visualization tool, the diversity of our generated dataset, and the need to use instances sampled from different statistical cultures.
Lastly, to demonstrate that our framework can also be used for other matching problems under preference, we create and analyze a map of \textsc{Stable Marriage} instances.
\end{abstract}

\section{Introduction}

Since their introduction by \citet{DBLP:journals/tamm/GaleS13}, stable matching problems have been extensively studied, both from a theoretical and a practical viewpoint. 
Numerous practical applications have been identified, and theoretical research has influenced the design of real-world matching systems~\cite{DBLP:books/daglib/0066875,Knuth76,DBLP:books/ws/Manlove13}. 
In addition to the rich theoretical literature, there are also several works containing empirical investigations of stable matching problems (see \cite{DBLP:journals/cor/PetterssonDGGKM21,DBLP:journals/eor/DelormeGGKMP19,DBLP:journals/jea/IrvingM09,DBLP:conf/or/KwanashieM13,Filho2016Automatic,DBLP:journals/tplp/0001FMP20,DBLP:conf/wea/CooperM20,DBLP:journals/corr/abs-1905-06626,DBLP:journals/corr/abs-2201-12484,DBLP:conf/aaai/TziavelisGJDKK20,DBLP:journals/dam/ManloveMO22,mertens2015stable,DBLP:journals/eor/TeoS00,DBLP:conf/cpaior/Genc0SO19,DBLP:conf/ijcai/Genc0OS17,DBLP:conf/cp/0002O17,DBLP:conf/siamcsc/ManneNLH16,DBLP:journals/corr/abs-2112-05777,DBLP:journals/corr/abs-2007-04948,DBLP:journals/corr/abs-2204-04162} as a certainly incomplete list). 
Although these examples indicate that experimental works regularly occur, many papers on stable matchings do not include an experimental part and instead solely focus on the computational or axiomatic aspects of some mechanism or problem. 
However, to understand the properties of problems and mechanisms in practice, experiments are vital. 

One reason for the lack of experimental work might be the rarity of real-world data (exceptions can be found in \cite{DBLP:journals/jea/IrvingM09,DBLP:journals/eor/DelormeGGKMP19,DBLP:journals/dam/ManloveMO22}). 
Consequently, researchers typically resort to some synthetic distribution, refereed to as a statistical culture, for generating synthetic data.
Remarkably, the vast majority of works simply use random preferences where all possible valid preferences are sampled with the same probability 
(out of the twenty works listed above, fourteen use this model, most of them as a single data source).
However, as we will see later, instances with random preferences  have very similar properties. 
Accordingly, conclusions drawn from experiments using only such instances (or, generally speaking, only instances sampled from one model) should be treated with caution, as it is unclear whether their results generalize.

With our work, we want to lay the foundation for more experimental work around stable matchings by introducing a measure for the similarity of instances and by creating a diverse synthetic dataset for testing together with a convenient framework to visualize and analyze it as a map (see \Cref{fig:mainMap} for an example).
We focus on instances of the \textsc{Stable Roommates} (SR)  problem, where we have a set of agents, and each agent has strict preferences over all other agents. 
We selected the SR problem for this first, exemplary study because it is the mathematically most natural stable matching problem (agents' preferences do not contain ties and are complete, and there are no different ``types" of agents). 
Consequently, statistical cultures for SR instances are relatively simple and do not need to distinguish between different types of agents.
Nevertheless, our general approach and several of our ideas and techniques can also be used to carry out similar studies for other stable matching problems, as demonstrated in \Cref{sec:map-of-SM}. 

As part of our agenda to empower experimental work on stable matchings, we carry out the following steps: 

\paragraph{Distances Between SR Instances (\Cref{sec:SR-dist}).} 
To judge the diversity of a dataset for testing and to compare different statistical cultures to each other, a similarity measure is needed. 
We introduce a notion of isomorphism between SR instances and show how distances between preference orders naturally extend to distances between SR instances. 
Most importantly, we propose the polynomial-time computable mutual attraction distance\footnote{Note that we use the terms ``distance (measure)'' in an informal sense to refer to some function mapping pairs of instances to a positive real number; in particular, all our distance measures are pseudometrics but not all are metrics.}, which we use in the following. 

\paragraph{Understanding the Space of SR Instances (\Cref{sec:understanding}).}
To better understand the space of SR instances induced by our mutual attraction distance, we introduce four canonical ``extreme" instances, which are far away from each other. 
Moreover, we prove that two of them form a diameter of our space, i.e., they are at the maximum possible distance.

\paragraph{A Map of Synthetic SR Instances (\Cref{sec:map}).}
We define a variety of statistical cultures to generate SR instances. 
From them, we generate a diverse test set for experimental work and picture it as a map of SR instances, a convenient framework to visualize non-aggregate experimental results. 
Moreover, we give intuitive interpretations of the different areas on the map. 
In addition, we analyze where different statistical cultures land on the map and how they relate to each other.

\paragraph{Using the Map of SR Instances (\Cref{sec:using}).}
To demonstrate possible use cases for the map, we perform exemplary experimental studies.
We analyze different quality measures for stable matchings, the number of blocking pairs for random/minimum-weight matchings, and the running time to compute an ``optimal" stable matching using an ILP. 
In sum, the instance-based view on experimental results provided by the map allows us to identify several interesting phenomena, for example, that instances sampled from the same culture all behave very similarly in our experiments. 
Moreover, we observe that instances from the same area of the map exhibit a similar behavior.  

\paragraph{Outlook: A Map of \textsc{Stable Marriage} Instances (\Cref{sec:map-of-SM}).}
To demonstrate the general applicability of our framework to draw maps of instances of other stable matching problems, we create a map of \textsc{Stable Marriage} (SM) instances---SM is the bipartite analogue of SR.
For this, we describe how to transfer the mutual attraction distance, extreme instances, and statistical cultures from the SR to the SM setting. 
Notably, the resulting drawn map of SM instances looks quite similar to the one for SR instances.
Finally, we illustrate the usefulness of the map of SM instances and verify that instances that are close to each other on the map have similar properties by conducting some exemplary experiments. 

\bigskip

From a methodological perspective, our work follows a series of recent papers on (ordinal) elections \cite{DBLP:conf/aaai/FaliszewskiSSST19,DBLP:conf/atal/SzufaFSST20,DBLP:conf/ijcai/BoehmerBFNS21}:
\citet{DBLP:conf/aaai/FaliszewskiSSST19} introduced the problem of computing the distance between elections, focusing on isomorphic distances. 
Following up on this, \citet{DBLP:conf/atal/SzufaFSST20} created a dataset of synthetic elections sampled from a variety of different cultures and visualized them as a map of elections. 
Subsequently, \citet{DBLP:conf/ijcai/BoehmerBFNS21} added several canonical elections to the map to give absolute positions a clearer meaning, and added some real-world elections.
Recently, \citet{DBLP:map-approval} created and analyzed a map of approval elections.
The usefulness of the maps has already been demonstrated in different contexts.
For example, \citet{DBLP:conf/atal/SzufaFSST20} identified  
that for elections from a certain region of the map, election winners are particularly hard to compute,
\citet{DBLP:conf/ijcai/BoehmerBFNS21} and \citet{DBLP:journals/corr/abs-2204-03589} analyzed the nature and relationship of real-world elections by placing them on the map, and
\citet{DBLP:conf/ijcai/BoehmerBFN21} evaluated the robustness of election winners using the map.
Although our general agenda and approach are similar to the works of \citet{DBLP:conf/aaai/FaliszewskiSSST19,DBLP:conf/atal/SzufaFSST20} and \citet{DBLP:conf/ijcai/BoehmerBFNS21}, the intermediate steps, used distance measures, cultures, experiments, and technical details are naturally quite different. 

\medskip 
The code for generating the map and conducting our experiments is available at \url{https://github.com/szufix/mapel}.
The generated datsets of \textsc{SR} and \textsc{SM} instances is available at \url{https://github.com/szufix/mapel_data}. 

\section{Preliminaries} \label{sec:prel}

We define some concepts and notation here and some in the corresponding sections.
For a positive integer $n\in \mathbb{N}$, let $[n]:=\{1,\dots,n\}$. 
For two real-valued vectors $x = (x_1, \ldots, x_n)$ and $y = (y_1, \ldots, y_n)$ and some $p\in \mathbb{R}$, their $\ell_p$-distance is 
$\ell_p(x,y) := (|x_1-y_1|^p + \cdots + |x_n-y_n|^p)^{\frac{1}{p}}$.

\paragraph{Preference Orders.}
Let $A$ be a set of agents. 
We denote by $\mathcal{L}(A)$ the set of all total orders over $A$ to which we refer to as preference orders.
We usually denote elements of $\mathcal{L}(A)$ as $\succ$ and for three agents $a$, $b$, and $c$, we say that $a$ is preferred to $b$ is preferred to $c$ if $a\succ b \succ c$.
Moreover, for a preference order~${\succ}\in \mathcal{L}(A)$ and an agent $a\in A$, let $\pos_{\succ}(a)$ denote the position of $a$ in $\succ$, i.e., the number of agents that are preferred to $a$ in $\succ$ plus one. 
Furthermore, for $i\in [|A|]$, let $\ag_{\succ}(i)$ be the agent ranked in $i$-th position in $\succ$, i.e., the agent $b\in A$ such that $i=\pos_{\succ}(b)$.

\paragraph{Distances Between Preference Orders.}
For two preference orders $\succ,\succ'\in \mathcal{L}(A)$, their swap distance $\swap(\succ,\succ')$ is the number of agent pairs on whose ordering $\succ$ and $\succ'$ disagree. 
Alternatively, the swap distance can also be interpreted as the minimum number of swaps of adjacent agents that are necessary to transform $\succ$ into $\succ'$. 
For two preference orders $\succ,{\succ'} \in \mathcal{L(A)}$, their Spearman distance $\spear(\succ,\succ')$ is $\sum_{a\in A} | \pos_{\succ}(a)- \pos_{\succ'}(a)|$.
As proven by \citet{diaconis1977spearman}, it holds that $\swap(\succ,\succ')\leq \spear(\succ,\succ')\leq 2\cdot \swap(\succ,\succ')$.

\paragraph{Stable Roommates Instances.}
A \textsc{Stable Roommates} (SR) instance $\mathcal{I}$ consists of a set $A$ of agents, with each agent $a\in A$ having a preference order ${\succ_a}\in \mathcal{L}(A\setminus \{a\})$ over all other agents. 
For the sake of simplicity, we will focus on instances with an even number of agents.

\paragraph{Stable Matchings.}
A matching of agents $A$ is a subset of agent pairs $\{a,a'\}$ with $a\neq a'\in A$ where each agent appears in at most one pair.
We say that an agent is unmatched in a matching $M$ if $a$ does not appear in any pair from $M$;
otherwise, we say that $a$ is matched.
For a matched agent $a\in A$ and a matching $M$, we write $M(a)$ to denote the partner of $a$ in $M$, i.e., $M(a)=a'$ if $\{a,a'\}\in M$.
A pair $\{a,a'\}$ of agents blocks a matching $M$ if $a$ is unmatched or prefers $a'$ to $M(a)$ and $a'$ is unmatched or prefers $a$ to~$M(a')$. 
A matching that is not blocked by any agent pair is called a stable matching. 

\paragraph{Mappings between SR Instances.}
For two sets $X$ and $Y$ with $|X|=|Y|$, we denote by~$\Pi(X,Y)$ the set of all bijections $\sigma: X\to Y$ between $X$ and $Y$.
Let $A$ and $A'$ be two sets of agents with $|A|=|A'|$ and let $\sigma\in \Pi(A,A')$. 
Then, for an agent $a\in A$ and a preference order~${\succ_a}\in \mathcal{L}(A\setminus \{a\})$, we write $\sigma(\succ_a)$ to denote the preference order over $A'\setminus \{\sigma(a)\}$ arising from~$\succ_a$ by replacing each agent $b\in A\setminus \{a\}$ by~$\sigma(b)\in A' \setminus \{\sigma(a)\}$.

\paragraph{Pseudometrics.} 
We call a function $d: X\times X\mapsto \mathbb{R}$ a pseudometric if for each three elements~$x,y,z\in X$, we have that $d(x,y)=d(y,x)\geq 0$, $d(x,x)=0$, and $d(x,z)\leq d(x,y)+d(y,z)$. 

\section{Distance Measures} \label{sec:SR-dist}
This section is devoted to measuring the distance between two SR instances, a key ingredient of our map. 
Other use cases include meaningfully selecting test instances, comparing different statistical cultures, and analyzing real-world instances. 
Specifically, in \Cref{sub:iso-dis}, we define an isomorphism between two SR instances, show how distance measures over preferences orders can be generalized to distance measures over SR instances, and prove that computing the Spearman distance between SR instances is computationally intractable. 
In \Cref{sub:MAD}, we introduce our mutual attraction distance and make some observations concerning its properties and the associated mutual attraction matrices.

\subsection{Isomorphism and Isomorphic Distances} \label{sub:iso-dis}

Two SR instances are isomorphic if renaming the agents in one instance can produce the other instance. 
For this, as each agent is associated with a preference order defined over other agents, a single mapping suffices. 
Accordingly, we define an isomorphism on SR instances: 
\begin{definition}
	Two SR instances $(A,(\succ_a)_{a\in A})$ and $(A',(\succ_{a'})_{a'\in A'})$ with $|A|=|A'|$ are isomorphic if there is a bijection $\sigma:A\rightarrow A'$ such that $\succ_{\sigma(a)}=\sigma(\succ_a)$ for all $a\in A$. 
\end{definition}
\begin{example}\label{ex:iso}
	Let $\mathcal{I}$ with agents $a, b, c$, and $d$ and $\mathcal{I}'$ with agents $x,y,z$, and $w$ be two SR instances with the following preferences:
	\begin{align*}
	a&: b\succ c \succ d,  &b&: c\succ a \succ d, &c&: b\succ d \succ a, &d&: a\succ c \succ b,\\
	x &:  y\succ w \succ z, &y&:  z\succ w \succ x, &z&:  w\succ y \succ x, &w&: z\succ x \succ y.
	\end{align*}
	
	\noindent $\mathcal{I}$ and $\mathcal{I}'$ are isomorphic as witnessed by the mapping $\sigma(a)=y$, $\sigma(b)=z$, $\sigma(c)=w$, and $\sigma(d)=x$. 
\end{example}

One can easily check whether two SR instances  $(A,(\succ_a)_{a\in A})$ and  $(A',(\succ_{a'})_{a'\in A'})$ are isomorphic:
Assuming that an isomorphism~$\sigma^{a'}$ maps~$a \in A$ to $a' \in A'$, then this already completely characterizes~$\sigma^{a'}$, as for any~$b \in A \setminus \{a\}$ with $\pos_{\succ_a}(b) = i$, we must have $\sigma^{a'} (b) = \ag_{\succ'_{a'}} (i)$.
Thus, it suffices to fix an arbitrary agent~$a \in A$ and then check for each~$a' \in A'$ whether $\sigma^{a'}$ is an isomorphism.
\begin{observation}
    Deciding whether two SR instances with $2n$ agents are isomorphic can be done in~$\mathcal{O}(n^3)$~time.
\end{observation}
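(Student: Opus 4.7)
I will follow the sketch already sketched just above the observation and turn it into an algorithm with the claimed running time. The key structural fact is that fixing the image $\sigma(a)$ of a single agent $a\in A$ uniquely determines the whole bijection: if $\sigma(a)=a'$, then for every other $b\in A$ with $\pos_{\succ_a}(b)=i$, the only value compatible with $\succ_{\sigma(a)}=\sigma(\succ_a)$ is $\sigma(b)=\ag_{\succ'_{a'}}(i)$. So it suffices to try each of the $O(n)$ choices for $\sigma(a)$ and verify the resulting candidate bijection.

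\textbf{Algorithm.} Fix an arbitrary $a\in A$. For each candidate $a'\in A'$:
\begin{enumerate}
\item Build the candidate bijection $\sigma^{a'}\colon A\to A'$ by setting $\sigma^{a'}(a)=a'$ and, for every $b\in A\setminus\{a\}$, $\sigma^{a'}(b):=\ag_{\succ'_{a'}}(\pos_{\succ_a}(b))$. With $\succ_a$ and $\succ'_{a'}$ given as arrays this is $O(n)$ time.
\item Check that $\sigma^{a'}$ is a bijection (e.g., by marking images in a length-$2n$ array), $O(n)$ time.
\item For every agent $b\in A$, verify the isomorphism condition $\succ_{\sigma^{a'}(b)}=\sigma^{a'}(\succ_b)$. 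Concretely, walk through $\succ_b$ position by position and check that the $i$-th agent of $\succ_b$ is mapped by $\sigma^{a'}$ to the $i$-th agent of $\succ'_{\sigma^{a'}(b)}$. This costs $O(n)$ per agent, $O(n^2)$ in total.
\end{enumerate}
If some candidate $a'$ passes all checks the instances are isomorphic; otherwise they are not.

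\textbf{Correctness and running time.} Correctness is immediate from the fact observed above: any isomorphism $\sigma$ with $\sigma(a)=a'$ must coincide with $\sigma^{a'}$, so if no $a'$ yields a valid $\sigma^{a'}$ there is no isomorphism at all; conversely, if some $\sigma^{a'}$ passes the verification in step~3 for every $b$, it satisfies the definition of isomorphism by construction. The running time per candidate $a'$ is dominated by the $O(n^2)$ verification in step~3, and there are $O(n)$ candidates, giving the claimed $O(n^3)$ bound.

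\textbf{Main obstacle.} There is no real obstacle here; the only care needed is in step~3, where a naive implementation that recomputes $\sigma^{a'}(\succ_b)$ as a list and compares it to $\succ'_{\sigma^{a'}(b)}$ should be done with direct positional comparison and $O(1)$ lookup of $\sigma^{a'}$ (stored as an array indexed by agents) to avoid slipping to $O(n^3)$ per candidate. With that implementation detail handled, the bound follows.
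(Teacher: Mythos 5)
Your proof is correct and follows exactly the approach the paper sketches in the paragraph preceding the observation: fix one agent $a$, note that each choice of $\sigma(a)=a'$ determines the candidate bijection $\sigma^{a'}$ uniquely, and verify each of the $O(n)$ candidates in $O(n^2)$ time. The added implementation details (array lookups, positional comparison) are a sensible elaboration but do not change the argument.
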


For each distance measure $p$ between preference orders, our notion of isomorphism can be easily used to extend~$p$ to a distance measure over SR instances:
The resulting distance between two SR instances is the minimum (over all bijections~$\sigma$ between the agent sets) sum (over all agents) of the distance between the preferences of~$a \in A$ and the preferences of~$\sigma (a)$ (measured by~$p$):
\begin{definition} \label{def:iso-dis}
	Let $p$ be a distance measure between preference orders. 
 Let $\mathcal{I}=(A,(\succ_a)_{a\in A})$ and  $\mathcal{I}'=(A',(\succ_{a'})_{a'\in A'})$ be two SR instances with $|A|=|A'|$. 
 Their $d_p$ distance  is: 
 $ d_{p}(\mathcal{I},\mathcal{I}'):=\min_{\sigma\in \Pi(A,A')} \sum_{a\in A} p(\sigma(\succ_a),\succ_{\sigma(a)}).$
\end{definition}
In particular, for all distance measures $p$ between preference orders where $p(x,y)=0$ if and only if~${x=y}$, the distance $d_p$ induces a metric on the equivalence classes defined by isomorphisms (i.e., when two matrices are equivalent if and only if they are isomorphic).
In other words, this means that for any two SR instances $\mathcal{I}$ and $\mathcal{I}'$ it holds that $d_p(\mathcal{I},\mathcal{I}')=0$ if and only if $\mathcal{I}$ and $\mathcal{I}'$ are isomorphic.
We will call such a distance also an \emph{isomorphic distance}.
\begin{example} \label{ex:2}
	Applying \Cref{def:iso-dis}, the Spearman distance $\spear(\cdot, \cdot)$  and the swap distance~$\swap(\cdot, \cdot)$ between preference orders (as defined in \Cref{sec:prel}) can be lifted to distance measures~$d_{\spear}$ and $d_{\swap}$ between SR instances. Let $\mathcal{I}$ with agents $a, b, c$, and $d$ and $\mathcal{I}'$ with agents~$x,y,z$, and $w$ be two SR instances with the following preferences: 
	\begin{align*}
	a&: b\succ c \succ d,  &b&: a\succ c \succ d, &c&: a\succ b \succ d, &d&: a\succ b \succ c,\\
	x &:  y\succ z \succ w, &y&:  x\succ z \succ w, &z&:  w\succ y \succ x, &w&: z\succ y \succ x.
	\end{align*}
	
	\noindent Then, for the mapping $\sigma(a)=x$, $\sigma(b)=y$, $\sigma(c)=z$, and $\sigma(d)=w$, the Spearman distance of~$\mathcal{I}$ and~$\mathcal{I}'$ is $8$ and the swap distance is $6$. While for the Spearman distance this is the optimal mapping (so $d_{\spear}(\mathcal{I},\mathcal{I}')=8$) for the swap distance the mapping $\sigma(a)=y$, $\sigma(b)=x$, $\sigma(c)=z$, and $\sigma(d)=w$ results in a smaller distance of~$4$. Indeed, we have $d_{\swap}(\mathcal{I},\mathcal{I}')=4$.
	
\end{example}

We consider the Spearman distance $d_{\spear}$ and the swap distance $d_{\swap}$ as ``ideal'' distances, as they are quite fine-grained and \emph{isomorphic} distances.
Unfortunately, both are hard to compute. 
For $d_{\swap}$ this follows from the NP-hardness of computing the Kemeny score of an election~\cite{DBLP:conf/www/DworkKNS01}. 
As we focus in the following only on the in some sense simpler
Spearman distance~$d_{\spear}$, here we only present that computing the Spearman distance between two SR instances is at least as hard as deciding whether two graphs are isomorphic, which is a famous candidate for the complexity class NP-intermediate.  

\begin{restatable}{proposition}{spearrr} \label{pr:spear}
	There is no polynomial-time algorithm to compute $d_{\spear}$, unless the \textsc{Graph Isomorphism} problem is in P. 
\end{restatable}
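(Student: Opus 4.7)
The plan is to reduce \textsc{Graph Isomorphism} (GI) to computing $d_{\spear}$. The first step is the simple observation that $d_{\spear}$ is an isomorphic distance: since the Spearman distance of two preference orders vanishes iff the orders are equal, $d_{\spear}(\mathcal{I},\mathcal{I}')=0$ iff $\mathcal{I}\cong\mathcal{I}'$, as already noted after \Cref{def:iso-dis}. Hence a polynomial-time algorithm for $d_{\spear}$ would decide SR-isomorphism in polynomial time, and it suffices to show that deciding SR-isomorphism is GI-hard.

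Given two graphs $G_1,G_2$ on $n$ vertices, my plan is to construct in polynomial time SR instances $\mathcal{I}_{G_1},\mathcal{I}_{G_2}$ such that $G_1\cong G_2$ iff $\mathcal{I}_{G_1}\cong\mathcal{I}_{G_2}$. The basic scheme: for each vertex $v$ of $G$, introduce two agents $a_v$ and $b_v$; the agent $a_v$ places its ``partner'' $b_v$ in the top position, and the adjacency of $G$ is encoded in the relative ranking of $a_u$ versus $b_u$ inside $a_v$'s preference (namely, $a_u$ before $b_u$ iff $\{u,v\}\in E$, and analogously inside $b_v$'s preference). The mutual-top structure pins down the pairing $\{a_v,b_v\}$, while the local orderings encode adjacency. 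A graph isomorphism $\phi$ then lifts naturally to the SR-isomorphism $a_v\mapsto a_{\phi(v)}$, $b_v\mapsto b_{\phi(v)}$; conversely, any SR-isomorphism must preserve mutual-top pairs and the adjacency encoding, so it descends to a graph isomorphism on the vertex set.

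The main obstacle will be a subtle technicality of the construction: since preference orders are total, one must also linearly order the $n-1$ pairs $\{a_u,b_u\}$ within each preference, and a naive choice (say, by vertex index) breaks the forward direction because a graph isomorphism need not respect the indexing. My plan to handle this is to insert enough padding to make the coarse ordering of pairs \emph{symmetric}—for instance by replacing each $b_v$ by several indistinguishable copies, or by inserting anchor agents that are ordered only by a block-type partition (neighbors vs.\ non-neighbors) and are themselves permutation-invariant within blocks—so that every permutation of pairs becomes an automorphism of the padded preferences. With this symmetry in place, graph isos still lift to SR-isos while the local adjacency encoding continues to force the converse direction, completing the reduction from GI to SR-isomorphism and hence to computing $d_{\spear}$.
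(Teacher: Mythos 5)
Your reduction strategy cannot work, and the obstruction is already in the paper: deciding whether two SR instances are \emph{isomorphic} is solvable in $\mathcal{O}(n^3)$ time. The reason is the rigidity of SR instances --- once you decide that an isomorphism $\sigma$ maps a fixed agent $a\in A$ to some $a'\in A'$, the entire map is forced, since the agent $b$ with $\pos_{\succ_a}(b)=i$ must be sent to $\ag_{\succ_{a'}'}(i)$; one then simply checks each of the $2n$ candidate maps. Consequently SR-isomorphism is in P and cannot be \textsc{Graph Isomorphism}-hard (unless GI is in P, which is the very hypothesis you may not assume). Your first step --- ``a polynomial-time algorithm for $d_{\spear}$ decides SR-isomorphism, so it suffices to show SR-isomorphism is GI-hard'' --- therefore reduces the problem to something provably easy, and no amount of gadgetry can rescue it. This rigidity is also exactly why your proposed padding fails: ``several indistinguishable copies'' of $b_v$ cannot actually be indistinguishable, because each copy is a distinct agent carrying its own total order over all other agents, and any two such orders must disagree somewhere, which re-breaks the symmetry you are trying to install.

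The paper's proof sidesteps this by \emph{not} reducing to the zero-distance question. It reduces GI to deciding whether $d_{\spear}(\mathcal{I},\mathcal{I}')$ is at most a specific \emph{nonzero} threshold $\frac{1}{3}\nu^4(\nu^2-1)+\nu^3$. The two constructed instances are deliberately non-isomorphic even when $G\cong G'$: the instance built from $G$ has $\nu^4$ dummy agents all ranking the vertex agents identically, while the instance built from $G'$ has $\nu$ blocks of $\nu^3$ dummies ranking the vertex agents in cyclically shifted orders. This forces a large but exactly computable ``baseline'' contribution to the distance from the dummy agents, on top of which the vertex agents contribute little if and only if the mapping restricted to them is a graph isomorphism; the correctness argument is then a careful accounting showing that any mapping violating the block structure blows past the budget. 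If you want to salvage your attempt, you would need to make this same conceptual move: the hardness must live in evaluating the minimum over $\sigma$ at a nonzero value, not in testing whether it is zero.
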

\begin{proof}
	For a graph $G=(V,E)$ and a vertex $v\in V$, let $N_G(v)$ be the set of vertices adjacent to $v$ in $G$. 
	In the \textsc{Graph Isomorphism} problem we are given two graphs $G=(V,E)$ and $G'=(V',E')$ with $|V|=|V'|$ and the question is whether there is a bijection $\mu: V \to V'$ such that $\{v,v'\}\in E$ if and only if $\{\mu(v),\mu(v')\}\in E'$. 
	We will now reduce \textsc{Graph Isomorphism} to the problem of computing~$d_\spear$.
	
	\paragraph{Construction.}
	Given an instance $(G=(V,E),G'=(V',E'))$ of \textsc{Graph Isomorphism}, we construct two SR instances as follows.  
	Without loss of generality, we assume that there are no isolated vertices in $G$ and $G'$ and that $\nu=|V|=|V'|>2$. 
	From $G$, we construct an SR instance~$\mathcal{I}$ with agent set~$A$ as follows:
	First, we add each vertex~$v \in V$ as an agent to~$A$. 
	Moreover, we add a set $D$ of $\nu^4$ dummy agents. 
	We now describe the preferences of the agents.
    In order to do so, we denote for a set $B$ of agents by~$[B]$ an arbitrary but fixed total order of agents from $B$. 
    For an agent~$b \in B$, we denote by~$[B]\setminus \{b\}$ the order arising from~$[B]$ through the deletion of~$b$.
	The preferences of the agents are as follows:
	\begin{align*}
	v & : [N_G(v)]\succ [D] \succ [V\setminus ( N_G(v)\cup \{v\})] \qquad & \forall v\in V \\
	d & : [D] \setminus \{d\}\succ [V] \qquad & \forall d\in D
	\end{align*}
	
	From $G'=(\{v'_1,\dots, v'_{\nu}\},E')$, we construct a second SR instance $\mathcal{I}'$ with agent set $A'$.
	We add each vertex $v'\in V'$ as an agent to $A'$ and for each $i\in [\nu]$ a set $D'_i$ of $\nu^3$ dummy agents. 
	We set  $D':=\bigcup_{i\in [\nu]} D'_i$. 
	The preferences of the agents are as follows: 
	\begin{align*}
	v' & : [N_{G'}(v')]\succ [D'] \succ [V'\setminus ( N_{G'}(v')\cup \{v'\})] \qquad & \forall v'\in V' \\
	d' & : [D'] \setminus \{d'\} \succ v'_i \succ v'_{i+1} \succ v'_{i+2} \succ \dots \succ v'_{\nu} \succ v'_1 \succ \dots \succ v'_{i+1} \qquad & \forall d'\in D'_i,  \forall i\in [\nu]
	\end{align*}
    
	We now prove that the given \textsc{Graph Isomorphism} instance is a yes-instance if and only if $d_{\spear}(\mathcal{I},\mathcal{I}')\leq \nu^3\cdot \sum_{j\in [\nu]}\sum_{i\in [\nu]} |i-j|+\nu^3 = \frac{1}{3}\nu^4 (\nu^2 -1) + \nu^3$.
	
	\paragraph{Proof of Correctness.}
	$(\Rightarrow)$ 
	Let $\pi:D\to D'$ be the mapping that maps for $i\in [\nu^4]$ the dummy agent ranked in position $i$ in $[D]$ to the dummy agent ranked in position $i$ in $[D']$.
	Assume that $G$ and~$G'$ are isomorphic witnessed by the bijection $\mu:V\to V'$. 
	Then, we construct a bijection~$\sigma: A\to A'$ by mapping $v$ to $\mu(v)$ for all $v\in V$ and $d$ to $\pi(d)$ for all $d\in D$.
	We start by upper-bounding the distance between $\sigma(\succ_v)$ and $\succ_{\sigma(v)}$ for~$v\in V$. 
	As $\mu$ is an isomorphism between $G$ and $G'$, we have that $\{\sigma(w)\mid w\in N_G(v)\}=N_{G'}(\sigma(v))$. Moreover, we have $\{\sigma(w)\mid w\in V\setminus ( N_G(v)\cup \{v\})\}=V'\setminus ( N_G'(\sigma(v))\cup \{\sigma(v)\})$. 
 	Thus, in $\sigma(\succ_v)$ the same agents from $V'$ appear before the first dummy agent as in~$\succ_{\sigma(v)}$ and the same agents from $V'$ appear after the last dummy agent. 
	Moreover, note that all dummy agents are ranked in the same position in the two preference orders.
	Thus, we can upper bound $d_{\spear}(\sigma(\succ_v),\succ_{\sigma(v)})\leq \nu^2$: 
	For each of the $\nu-1$ agents from $V'\setminus \{\sigma(v)\}$ their position in the two preference orders can differ by at most $\nu$, since in both preference orders the same at most $\nu$ agents appear before the first dummy agent and the same at most $\nu$ agents after the last dummy agent.
	Consequently, we have 
	\begin{equation}\label{eq:1}
	    \sum_{v\in V} d_{\spear}(\sigma(\succ_v),\succ_{\sigma(v)}) \leq \nu^3\,.
	\end{equation}
	
	Turning to the dummy agents, note that for each $d\in D$, $\sigma(\succ_d)$ ranks all dummy agents in the same position as $\succ_{\sigma(d)}$. 
	Thus, only the different ordering of the agents from $V'$ in $\sigma(\succ_d)$ and~$\succ_{\sigma(d)}$ contribute to the Spearman distance between the two. 
	Observe that for each two $b,d\in D$, agents~$\sigma(\succ_b)$ and $\sigma(\succ_d)$ rank each agent from $V'$ in the same position. 
	Moreover, observe that considering the preference orders of agents from $D'$, each agent $v'\in V'$ appears exactly $\nu^3$ times in position~$(\nu^4-1)+i$ for each $i\in [\nu]$.
	Let us now focus on agent  $v' := \sigma (v)\in V'$ where $v$ is ranked in position~$(\nu^4-1)+j$ for $j\in [\nu]$ by $\succ_d$ for each $d\in D$. 
	Then $v'$ contributes $|j-i|$ to $d_{\spear}(\sigma(\succ_d),\succ_{\sigma(d)})$ for each~$d\in D$ where $v'$ is ranked in position $(\nu^4-1)+i$ in~$\succ_{\sigma(d)}$. 
	Together with our previous observation that each vertex agent appears $\nu^3$ times in position $(\nu^4-1)+i$ for each $i\in [\nu]$ in the preferences of agents from $D'$ this implies that agent $v'$ overall contributes $\nu^3\cdot (\sum_{i\in [\nu]} |i-j|)$ to the Spearman distance between the mapped preference orders of dummy agents. 
	Summing up over all~$j\in [\nu]$, we get that the total Spearman distance between the mapped preference orders of dummy agents is $\nu^3\cdot \sum_{j\in [\nu]}\sum_{i\in [\nu]} |i-j|$. 
    Combining this with \Cref{eq:1}, we get that  $d_{\spear}(\mathcal{I},\mathcal{I}')\leq \nu^3\cdot \sum_{j\in [\nu]}\sum_{i\in [\nu]} |i-j|+\nu^3$. 
	
	$(\Leftarrow)$ 
    Let~$\sigma : A \rightarrow A'$ witness $d_{\spear} (\mathcal{I}, \mathcal{I'}) \le \frac{1}{3}\nu^4 (\nu^2 -1) + \nu^3$.
    
    We first show that $\sigma$ does not map any agent from $D$ to an agent from $V'$. 
    To show this, let $X\subseteq D$ be the subset of agents from $D$ which are mapped to agents from $V'$ in $\sigma$ and $Y' := \sigma (X) \subseteq V'$ be the subset of agents from $V'$ to which an agent from $D$ is mapped in $\sigma$.
    Assume for the sake of contradiction that $x:=|X|=|Y'|>0$.
    Now we compute the summed distance between the preferences of the agents from $D\setminus X$ and the preferences of the agent they are mapped to in $\sigma$ and show that this distance already exceeds the given budget. 
    In particular, we give a lower bound on
    \begin{align}
	    \sum_{d\in D\setminus X} d_{\spear}(\sigma(\succ_d),\succ_{\sigma(d)}) \ge &\sum_{d \in D\setminus X} \sum_{v' \in V' \setminus Y'} |\pos_{\succ_d} (\sigma^{-1} (v')) - \pos_{\sigma (\succ_d)} (v')| \notag\\
	    & + \sum_{d \in D\setminus X} \sum_{y' \in Y'} |\pos_{\succ_d} (\sigma^{-1} (y')) - \pos_{\sigma (\succ_d)} (y')|\label{eq:2}\,.
	\end{align}
    We first give a lower bound on the first summand, i.e., $\sum_{d \in D\setminus X} \sum_{v' \in V' \setminus Y'} |\pos_{\succ_d} (\sigma^{-1} (v')) - \pos_{\sigma (\succ_d)} (v')|$.
    Note that for each two agents $b,d\in D\setminus X$ and each $v'\in V'\setminus Y'$ it holds that $\sigma(\succ_b)$ and $\sigma(\succ_d)$ rank $v'$ in the same position.
    Let us now focus on agent~$v'\in V'\setminus Y'$ where $v'$ is ranked in position $(\nu^4-1)+j$ for some $j\in [\nu]$ by $\sigma(\succ_d)$ for each $d\in D\setminus X$. 
    Then $v'$ contributes $|j-i|$ to $d_{\spear}(\sigma(\succ_d),\succ_{\sigma(d)})$ for each $d\in D\setminus X$ where $v'$ is ranked in position~$(\nu^4-1)+i$ in~$\succ_{\sigma(d)}$.
    This together with the facts that there are $\nu^3$ agents from $D'$ ranking $v'$ in position $i$ for each~$i\in [n]$, $|X|=x$, and $|j-i|\leq \nu$, we get that (where $j = \pos_{\sigma(\succ_d)} (v') - (\nu^4-1)$ for some $d\in D$)
    \begin{align}
        \sum_{d \in D\setminus X} |\pos_{\succ_d} (\sigma^{-1} (v')) - \pos_{\sigma (\succ_d)} (v')| \ge \nu^3\cdot \bigl(\sum_{i\in [\nu]} |i-j|\bigr)-x\nu \label{eq:lb1}\,.
    \end{align}
    
    We now turn to the second summand of \Cref{eq:2}, i.e., $\sum_{d \in D\setminus X} \sum_{y' \in  Y'} |\pos_{\succ_d} (\sigma^{-1} (y')) - \pos_{\sigma (\succ_d)} (y')|$. 
    Note that for each agent $y'\in Y'$, we have that it is placed in position $\nu^4-i \le \nu^4 -1$ for some $i\in [\nu^4-1]$ in $\sigma(\succ_d)$ for all $d\in D\setminus X$, as a dummy agent is mapped to $y'$ and dummy agents appear only in the first $\nu^4-1$ positions in $\sigma (\succ_d)$. 
    Thus, as there are $\nu^3$ agents in $D'$ that rank $y'$ in position $(\nu^4-1)+j$ for~$j\in [\nu]$ and as $|X|=x$, we get that 
    \begin{align}
        \sum_{d\in D\setminus X} |\pos_{\succ_d} (\sigma^{-1} (y')) &- \pos_{\sigma (\succ_d)} (y')| \notag \ge - x\nu+\nu^3\sum_{j=1}^{\nu} j= \nu^3\cdot \frac{\nu\cdot (\nu + 1)}{2} - x \nu\\
        & = \nu^4 + \nu^3\frac{\nu^2 - \nu}{2}-x\nu \,.\label{eq:lb2}
    \end{align}
    
    Summing \Cref{eq:lb1,eq:lb2} over all $v' \in V'$ and using (for the second inequality) that $\sum_{i\in [\nu]} |i-j|\leq \frac{\nu^2 - \nu}{2}$ for all $j\in [\nu]$, we get
    \begin{align*}
        \sum_{d\in D\setminus X} d_{\spear}&(\sigma(\succ_d),\succ_{\sigma(d)}) \\
        & \ge
        \sum_{v' \in V' \setminus Y'} \Bigl(\nu^3\cdot \bigl(\sum_{i\in [\nu]} |i-\pos_{\sigma(\succ_b)} (v') + (\nu^4-1)|\bigr)-x\nu \Bigr) + \sum_{y' \in Y'} \Bigl( \nu^4 + \nu^3\frac{\nu^2 - \nu}{2}-x\nu \Bigr)\\
        & \ge |Y'| \cdot \nu^4 + \sum_{v' \in V'} \Bigl(\nu^3 \cdot \bigl(\sum_{i\in [\nu]} |i-\pos_{\sigma(\succ_b)} (v') + (\nu^4-1)|\bigr) - x \nu\Bigr) \\
        & = x \nu^4 + \nu^4 \cdot \bigl(\sum_{j \in [\nu]}\sum_{i\in [\nu]} |i-j|\bigr) - x \nu^2\\
        & > \nu^4 \cdot \bigl(\sum_{j\in [\nu]}\sum_{i\in [\nu]} |i-j|\bigr) + \nu^3
    \end{align*}
    where we used our assumption~$\nu>2$ as well as $x > 0$ for the last inequality. 
    Thus, we have reached a contradiction to $\sigma$ witnessing a solution, implying that $|X| = 0$. 
    Consequently, we may assume in the following without loss of generality that $\sigma$ matches dummy agents from $D$ to dummy agents from $D'$ and vertex agents from $V$ to vertex agents from $V'$ in $\sigma$. 
    
    Observe that the arguments given in the forward direction of the proof imply that independent of how $\sigma$ maps vertex agents to vertex agents and dummy agents to dummy agents we have that $\sum_{d\in D} d_{\spear}(\sigma(\succ_d), \succ_{\sigma(d)})\geq \frac{1}{3}\nu^4\cdot(\nu^2-1)$. 
    Thus, it needs to hold that $\sum_{v\in V} d_{\spear}({\sigma(\succ_v)}, {\succ_{\sigma(v)})} \leq\nu^3<\nu^4$.
    As we have $\nu^4$ dummy agents, this implies that for each~$v\in V$, we need to have that $\sigma(\succ_v)$ and $\succ_{\sigma(v)}$ rank the same agents before the first dummy agent: The position difference of an agent that appears in one preference order before the dummy agents and in the other after the dummy agents would be at least $\nu^4$, which is not possible. 
    Thus, we have that $\{\sigma(w)\mid w\in N_G(v)\}=N_{G'}(\sigma(v))$. 
    Thus, restricting the mapping $\sigma$ to the agents from $V$ leads to a mapping~$\mu:V\to V'$ that induces an isomorphism from $G$ to $G'$.
\end{proof}

\subsection{Mutual Attraction Distance} \label{sub:MAD}

In this section, we introduce and discuss our main distance measure, which we call mutual attraction distance. 

\paragraph{Intuition.} One characteristic of SR instances, which distinguishes them from classical elections studied by Szufa et al.~\cite{DBLP:conf/atal/SzufaFSST20}, is that each agent is associated with a preference order and also appears in the preference order of other agents. 
Thus, when considering, for instance, stable matchings, for an agent $a$ it is not only important which agents $a$ likes, but also whether they like $a$ as well.
Accordingly, our mutual attraction distance focuses on how pairs of agents rank each other. 
In particular, each agent~$a$ is characterized by a mutual attraction vector whose $i$-th entry contains the position in which $a$ appears in the preferences of the agent who $a$ ranks in $i$-th position.
In the mutual attraction distance (see \Cref{def:MAD} for a formal definition), we match the agents from two different instances such that the $\ell_1$ distance between the mutual attraction vectors of matched agents is minimized. 

\paragraph{Notation.}
For a matrix $M\in \mathbb{R}^{p\times q}$ and some $i\in [p]$, let $M_{i}$ denote the $i$-th row of $M$.
For an SR instance $\mathcal{I}=(A=\{a_1,\dots a_{2n}\}, (\succ_a)_{a\in A})$, an agent $a\in A$, and some $i\in [2n-1]$, let $\MA_{\mathcal{I}}(a,i)$ be the position of $a$ in the preference order of the agent~$a'$ which is ranked in position $i$ by $a$, i.e., $\MA_{\mathcal{I}}(a,i) :=\pos_{\succ_{a'}}(a)$ where~$a' := \ag_{\succ_a}(i)$.
Then, the mutual attraction vector of agent $a$ is $\MA_{\mathcal{I}}(a)=\bigl(\MA_{\mathcal{I}}(a,1),\dots, \MA_{\mathcal{I}}(a,2n-1)\bigr)$. 
Lastly, the mutual attraction matrix~$\MA_{\mathcal{I}}$ of $\mathcal{I}$ is the matrix whose $i$-th row is the vector~$\MA (a_i)$.

 \begin{definition}\label{def:MAD}
  The mutual attraction distance between two SR instances $\mathcal{I}$ with agents $A$ and $\mathcal{I}'$ with agents $A'$ with $|A|=|A'|$ is defined by their mutual attraction matrices as 
$$\retrodist(\mathcal{I},\mathcal{I}'):=\retrodist(\MA_\mathcal{I},\MA_{\mathcal{I}'}):=\min_{\sigma\in \Pi([|A|],[|A'|])} \sum_{i\in [|A|]} \ell_1\big((\MA_{\mathcal{I}})_i, (\MA_{\mathcal{I}'})_{\sigma(i)}\big).$$ 
 \end{definition}
As the mutual attraction distance is defined over mutual attraction matrices, we sometimes speak about mutual attraction matrices without specifying the underlying SR instance. 

\begin{example}

Consider the two SR instances $\mathcal{I}$ and $\mathcal{I}'$ defined in \Cref{ex:2}. 
Their mutual attraction matrices are: 
 $$
	   \MA_\mathcal{I} = 
	\kbordermatrix{ & 1 & 2 & 3  \\
		a &                1 & 1 & 1  \\
		b &                1 & 2 & 2  \\
		c &                2 & 2 & 3   \\
		d &                3 & 3 & 3   
	}
	\text{, }
	                              \MA_\mathcal{I'} = 
	\kbordermatrix{ & 1 & 2 & 3  \\
		x &                1  & 3  & 3 \\
		y &                1  & 2  & 2 \\
		z &              1  & 2 & 2 \\
		w &              1  & 3  & 3  
	}
	$$
Their mutual attraction distance is $2+0+2+2=6$ as witnessed by the mapping $\sigma(a)=z$, $\sigma(b)=y$, $\sigma(c)=x$, and $\sigma(d)=w$.  

\end{example}

\paragraph{Computation.}
Given two SR instances $\mathcal{I}$ over agents $A$ and $\mathcal{I}'$ over agents $A'$, computing their mutual attraction distance reduces to finding an minimum-weight perfect matching in a complete bipartite graph~$G = (A\cupdot A', E)$ with the following weights:
Edge $\{a,a'\}\in E$ has weight~$\ell_1(\MA_{\mathcal{I}}(a), \MA_{\mathcal{I}'}(a'))$.

\begin{observation}
 Given two SR instances $\mathcal{I}$ and $\mathcal{I}'$ with $2n$ agents each,  $\retrodist(\mathcal{I}, \mathcal{I}')$ can be computed in $\mathcal{O}(n^3)$ time.
\end{observation}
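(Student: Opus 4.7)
The plan is to follow the reduction to minimum-weight perfect matching suggested in the paragraph preceding the observation, and to bound each step carefully so that the total cost stays within $\mathcal{O}(n^3)$. There are three stages: building the two mutual attraction matrices, computing all pairwise row distances to form the cost matrix of the assignment problem, and finally solving that assignment problem.

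First I would explain how to build $\MA_\mathcal{I}$ in $\mathcal{O}(n^2)$ time. For each agent $a$, I precompute an array storing $\pos_{\succ_a}(b)$ for every $b \in A\setminus\{a\}$ by a single scan of $a$'s preference list, costing $\mathcal{O}(n)$ per agent and $\mathcal{O}(n^2)$ overall. With these lookup tables in hand, each entry $\MA_\mathcal{I}(a,i) = \pos_{\succ_{\ag_{\succ_a}(i)}}(a)$ can be evaluated in constant time, so filling in the $(2n) \times (2n-1)$ matrix takes $\mathcal{O}(n^2)$ time in total. The same is done for $\mathcal{I}'$.

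Next, I would construct the cost matrix $C \in \mathbb{R}^{2n \times 2n}$ with $C_{ij} := \ell_1((\MA_\mathcal{I})_i, (\MA_{\mathcal{I}'})_j)$. Each of the $(2n)^2 = 4n^2$ entries is an $\ell_1$ distance of two vectors of length $2n-1$, computable in $\mathcal{O}(n)$ time, giving $\mathcal{O}(n^3)$ for this stage. By \Cref{def:MAD}, $\retrodist(\mathcal{I},\mathcal{I}')$ is exactly the value of a minimum-weight perfect matching in the complete bipartite graph on $[2n]\cupdot[2n]$ with these edge weights, i.e., the optimum of a balanced assignment problem with cost matrix $C$.

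Finally, I would invoke the Hungarian algorithm (or any of its equivalents), which solves the $n \times n$ assignment problem in $\mathcal{O}(n^3)$ time, to obtain the optimal bijection $\sigma$ and hence $\retrodist(\mathcal{I},\mathcal{I}')$. Summing the three stages gives $\mathcal{O}(n^2) + \mathcal{O}(n^3) + \mathcal{O}(n^3) = \mathcal{O}(n^3)$, as required. There is no real obstacle here: the only thing to be careful about is that none of the preparatory steps exceed the $\mathcal{O}(n^3)$ budget set by the assignment solver, which is why the position lookup tables are worth mentioning explicitly rather than recomputing positions on the fly.
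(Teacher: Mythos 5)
Your proposal is correct and follows exactly the route the paper takes: it reduces the computation to a minimum-weight perfect matching in the complete bipartite graph with $\ell_1$ row-distances as edge weights and solves it with a cubic-time assignment algorithm. The paper leaves the bookkeeping (building the matrices, filling the cost matrix) implicit, and your explicit accounting of those steps is accurate and stays within the $\mathcal{O}(n^3)$ budget.
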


\paragraph{Realizable Mutual Attraction Matrices.}
Not every $(2n)\times (2n-1)$ matrix is the mutual attraction matrix of some SR instance. 
Accordingly, we call a matrix $M$ \emph{realizable} if there is an SR instance $\mathcal{I}$ with $\MA_{\mathcal{I}}=M$. 
Realizable matrices exhibit certain characteristics.
For example, as each agent ranks exactly one agent at position~$j$ for every~$j\in [2n-1]$, every realizable matrix~$M\in \mathbb{N}^{(2n)\times (2n-1)}$ contains each number from~$[2n-1]$ exactly $2n$ times. 
Unfortunately, we prove that unless P=NP there cannot be a list of sufficient polynomial-time checkable conditions for when a matrix is realizable:
\begin{restatable}{theorem}{realiz} \label{pr:real}
Given a $(2n)\times (2n-1)$ matrix $M$, deciding whether there is an SR instance $\mathcal{I}$ with $\MA_{\mathcal{I}}=M$ is NP-complete.
\end{restatable}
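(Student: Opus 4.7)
Membership in NP is immediate: an SR instance $\mathcal{I}$ is a polynomial-size certificate, and one can verify $\MA_{\mathcal{I}}=M$ in $\mathcal{O}(n^3)$ time. For NP-hardness I would first reformulate realizability as a highly constrained CSP. For each cell $(i,j)$ one must choose a ``neighbor'' $g(i,j)\in[2n]\setminus\{i\}$ so that (i) $j\mapsto g(i,j)$ is a bijection onto $[2n]\setminus\{i\}$ at every $i$, and (ii) $g(i,j)=k$ implies $g(k,M[i,j])=i$. Condition (ii) forces $M[k,M[i,j]]=j$, so the candidate set $C(i,j):=\{k\ne i:M[k,M[i,j]]=j\}$ is determined by $M$ alone, and the remaining task is to pick one candidate per cell consistently. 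This framing as a ``constrained perfect matching on cells'' is the starting point for the hardness reduction.

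The plan is to reduce from 3-Dimensional Matching: given $X,Y,Z$ of size $n$ and triples $T\subseteq X\times Y\times Z$, decide whether $T$ contains a perfect matching. I would introduce an ``element agent'' for each $e \in X\cup Y\cup Z$, a ``triple agent'' for each $t\in T$, and a pool of dummy agents padding the instance to $2N$ agents with $N$ polynomial in $n+|T|$. The matrix $M$ is crafted so that all but a constant number of ``free'' positions at element and triple agents are rigidly forced to be filled by specific dummies, while the candidate sets $C(i,j)$ at the free positions encode the incidence relation: the free position of an element agent $a_e$ has candidate set exactly $\{a_t : e\in t\}$, and each of the three free positions of a triple agent $a_t$ has candidate set exactly the element agents in $t$. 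The mutuality condition (ii) couples the element's choice with the triple's choice, so a realization picks, for each element, one covering triple in a globally consistent way that exhausts $X\cup Y\cup Z$---precisely a 3-dimensional matching.

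Correctness then follows by a two-way argument: a 3DM solution realizes $M$ by placing the chosen incidences in the free positions and dummies elsewhere, and conversely, from any realization the permutation constraint at each triple agent together with mutuality forces the selected triples to partition $X\cup Y\cup Z$. The main technical obstacle is designing the dummy gadget to simultaneously (a) rigidly fill all non-free positions with no ambiguity, leaving no spurious combinatorial freedom, and (b) keep the candidate sets $C(i,j)$ at the free positions exactly equal to the incidence-encoding sets, so that no dummy or unrelated element/triple agent sneaks in as a valid candidate. I expect to handle (a) by using pairwise distinct ``large'' positional values among the dummies so that each dummy cell has a unique consistent partner, and (b) by reserving distinct position values for different free slots so that only incidence-related agents satisfy the required equalities $M[k,M[i,j]]=j$. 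Once these rigidity and isolation properties are secured, the reduction is polynomial-time and the correspondence with 3DM is transparent.
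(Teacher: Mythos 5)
Your NP-membership argument and your reformulation of realizability as a constraint-satisfaction problem are both correct: choosing $g(i,j)\in[2n]\setminus\{i\}$ subject to the row-bijection condition and the mutuality condition $g(k,M[i,j])=i$ (whence $M[k,M[i,j]]=j$) is exactly what realizing $M$ amounts to, and this matches the structural viewpoint underlying the paper's proof. However, your hardness reduction is only a plan, and the plan as stated has a concrete gap. The paper reduces from partitioning the edge set of a $3$-regular graph into three perfect matchings (equivalently, $3$-edge-coloring cubic graphs), which is a problem with \emph{no selection}: every edge receives a color, and the reduction only has to force an assignment of positions $1,2,3$ to the three neighbors of each vertex agent so that mutuality holds. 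You instead reduce from \textsc{3-Dimensional Matching}, which does involve selecting a subset of triples, and this is where your construction breaks. The row-bijection constraint forces \emph{every} position of \emph{every} agent to be filled, including the three ``free'' positions of each \emph{unselected} triple agent $a_t$. If the candidate sets of those positions are exactly the element agents of $t$, then realizability requires each element $e\in t$ to reciprocally rank $a_t$ at the position $M[a_t,j]$, and for the coupling to work that position must be $e$'s single free position --- but an element appearing in several triples can reciprocate at its free position for only one of them. So your matrix would be realizable only when every element lies in exactly one triple, which trivializes the instance; in all other cases the unselected triples make the matrix unrealizable regardless of whether a 3DM solution exists.

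To repair this you would need a ``selected/unselected'' gadget: alternative dummy fillers for the free positions of unselected triple agents, together with a mechanism ensuring the three positions of a triple agent are all selected or all unselected consistently. That enlarges the candidate sets beyond the incidence-encoding sets and reintroduces exactly the isolation problem you flagged as the main technical obstacle, so it is not a routine fix. You should also be aware that the two deferred properties (rigid filling of non-free positions and exact control of candidate sets) are where the real work lies even in the paper's simpler reduction --- it needs $\nu^4$-sized dummy blocks, cyclic shifts so that no two dummies rank each other at the same position, and a middle block of positions encoding non-adjacency. As it stands, your proposal is a reasonable research direction but not a proof; either complete the unselected-triple gadget, or switch the source problem to one without a selection component (as the paper does with edge coloring), where your CSP framing applies directly.
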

\begin{proof}
We reduce from the NP-complete problem of deciding whether the edge set of a $3$-regular graph can be partitioned into three disjoint perfect matchings \citep{DBLP:journals/siamcomp/Holyer81a}. 

\paragraph{Construction.}
Given a $3$-regular graph $G=(V=\{v_1,\dots, v_{\nu}\},E)$ , we construct matrix $M$ as follows. 
First, we need to introduce some notation. 
Let $\{v_{p_1},v_{q_1}\},\{v_{p_2},v_{q_2}\},\dots, \{v_{p_z},v_{q_z}\}$ be a list of all vertex pairs that are not adjacent in $G$ (this means that $z={{\nu}\choose{2}}-\frac{3\nu}{2}$).

To construct matrix $M$, we first construct a dummy SR instance $\mathcal{J}$ consisting of dummy and vertex agents:
We introduce one \emph{vertex agent} $a_v$ for each vertex $v\in V$. 
Moreover, we introduce one \emph{dummy agent} $d_{i,j}$ for~$i\in [z]$ and $j\in [\nu]\setminus \{p_i,q_i\}$.
Concerning the agent's preferences, we start by constructing the preferences of some vertex agent $a_{v_{\ell}}$ for some $\ell\in [\nu]$. 
Vertex agent $a_{v_{\ell}}$ ranks the three agents corresponding to the three vertices adjacent to it in $G$ in the first three position in arbitrary order. 
For the subsequent positions for~$i\in [z]$, if $v_{\ell}=p_{i}$ or $v_{\ell}=q_{i}$, then $a_{v_{\ell}}$ ranks $a_{q_{i}}$, respectively, $a_{p_{i}}$ in position $i+3$; otherwise $a_{v_{\ell}}$ ranks $d_{i,\ell}$ in position $i+3$.
All remaining agents are appended to the preferences in some arbitrary order. 
Concerning the dummy agents, agent $d_{i,j}$ for $i\in [z]$ and $j\in [\nu]\setminus \{p_i,q_i\}$ ranks agent $a_{v_j}$ in first position.
Moreover, the dummy agents rank all other dummy agents in an arbitrary order in the subsequent positions such that no two dummy agents rank each other on the same position (this can be achieved by performing cyclic shifts). 
Subsequently, they rank all vertex agents in some arbitrary ordering. 

Let $M':=\Ret_{\mathcal{I}}$. 
To obtain $M$ we modify $M'$:
For each vertex agent, we set the first entry of its vector to one, the second entry to two, and the third entry to three. 

\paragraph{Proof of Correctness.}
$(\Rightarrow)$ Given a partitioning of $E$ into three perfect matching $M_1$, $M_2$, and~$M_3$, for each $\ell\in [\nu]$ and $i\in [3]$, let $v_{\ell,i}$ be the vertex adjacent to $v_{\ell}$ in $M_i$.
To construct an SR instance $\mathcal{I}$ realizing $M$, we start with the above constructed instance $\mathcal{J}$ and modify the first three positions of each vertex agent as follows: 
For $\ell\in [\nu]$ and $i\in [3]$, agent~$a_{v_{\ell}}$ ranks $a_{v_{\ell,i}}$ in position $i$. 
Note that as $M_1$, $M_2$, and $M_3$ are perfect disjoint matchings, in the resulting instance each agent ranks all other agents in its preferences.
We now claim that $\Ret_{\mathcal{I}}=M$. 
Note that $\Ret_{\mathcal{I}}$ and $\Ret_{\mathcal{J}}$ are identical up to the first three columns in rows corresponding to vertex agents. 
As for each edge~$\{v,w\}\in M_i$ for $i\in [3]$ agent~$v$ ranks $w$ in position $i$ and $w$ ranks $v$ in position $i$, in $\Ret_{\mathcal{I}}$ the vector of each vertex agent starts with $1,2,3$. 
Thus, $\Ret_{\mathcal{I}}=M$. 

$(\Leftarrow)$
Assume that there is an SR instance $\mathcal{I}$ with $\Ret_{\mathcal{I}}=M$.
First observe that in $M$ in each column~$i\in [4,z+3]$ there are exactly two rows which contain an $i$ at position $i$, that are, the two rows corresponding to vertex agents $a_{v_{p_{i-3}}}$ and $a_{v_{q_{i-3}}}$: All other vertex agents rank a dummy agent in this position, which in turn ranks the vertex agent in first position. Moreover, by construction we have that dummy agents rank only other dummy agents in position $4$ to $z+3$ and that no two dummy agents rank each other in the same position. 
Thus, it follows that $a_{v_{p_{i-3}}}$ and $a_{v_{q_{i-3}}}$ rank each other in position $i$. 
This implies that a vertex agent~$a_v$ ranks all agents corresponding to vertices that are not adjacent to $v$
in $G$ between positions~$4$ to~$z+3$. 
We claim that this further implies that $a_v$ ranks the agents corresponding to adjacent vertices at $G$ in the first three positions in $\mathcal{I}$. 
By construction, for no dummy agent does its mutual attraction vector contain an $i$ at position~$i$ for $i\in [3]$.
Thus, $a_v$ needs to rank vertex agents in the first three positions, and the vertex agents for adjacent vertices are the only remaining ones. 
Furthermore, observe that if $a_v$ ranks~$a_w$ in position~$i$ for $i\in [3]$, then by the construction of $M$, agent~$a_w$ ranks $a_v$ in position~$i$. 
For~$i\in [3]$, let $M_i:=\{\{a_v,a_w\}\mid a_v \text{ and } a_w \text{ rank each other in position } i \text{ in } \mathcal{I}\}$. 
Note that $M_i$ is clearly a matching as each agent can only rank one other agent in each position. 
Moreover, by our above observations, $M_i$ is perfect. 
Furthermore, $M_1$, $M_2$, and $M_3$ need to be disjoint again because each agent can rank only one agent on each position. 
Thus, we found a solution to the given instance. 
\end{proof}

\paragraph{Properties of Mutual Attraction Distance.}
As the $\ell_1$-distance between vectors satisfies the triangle inequality, the mutual attraction distance also satisfies the triangle inequality (as mappings between agent sets can be simply applied on top of each other). 
From this, we easily get the following: 
\begin{observation}
    The mutual attraction distance is a pseudometric. 
\end{observation}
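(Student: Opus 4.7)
The plan is to verify the three axioms of a pseudometric directly, relying on the fact that the $\ell_1$ distance on vectors is itself a metric (in particular, non-negative, symmetric in its two arguments, and satisfies the triangle inequality). Because $\retrodist$ is defined as a minimum over bijections of a sum of $\ell_1$ terms, each property will transfer almost mechanically from the underlying $\ell_1$-distance, with the only nontrivial observation being how to combine two separately optimal bijections into a single bijection witnessing the triangle inequality.

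First I would handle non-negativity and $\retrodist(\mathcal{I},\mathcal{I})=0$ in one line each: non-negativity is immediate since each summand $\ell_1((\MA_{\mathcal{I}})_i,(\MA_{\mathcal{I}'})_{\sigma(i)})$ is non-negative, and choosing $\sigma$ to be the identity on $[2n]$ in $\retrodist(\mathcal{I},\mathcal{I})$ makes every summand zero. Symmetry follows by observing that for any bijection $\sigma \in \Pi([|A|],[|A'|])$, the inverse $\sigma^{-1} \in \Pi([|A'|],[|A|])$ yields the same sum because $\ell_1$ is symmetric in its two arguments; hence the two minima defining $\retrodist(\mathcal{I},\mathcal{I}')$ and $\retrodist(\mathcal{I}',\mathcal{I})$ coincide.

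The only step that requires a genuine argument is the triangle inequality $\retrodist(\mathcal{I},\mathcal{K}) \le \retrodist(\mathcal{I},\mathcal{J}) + \retrodist(\mathcal{J},\mathcal{K})$ for three SR instances $\mathcal{I},\mathcal{J},\mathcal{K}$ on agent sets of equal size. I would take bijections $\sigma_1$ and $\sigma_2$ witnessing the minima for $\retrodist(\mathcal{I},\mathcal{J})$ and $\retrodist(\mathcal{J},\mathcal{K})$ respectively, and consider the composition $\sigma := \sigma_2 \circ \sigma_1$, which is again a bijection between the index sets of $\mathcal{I}$ and $\mathcal{K}$. Applying the triangle inequality for $\ell_1$ row-by-row to $(\MA_{\mathcal{I}})_i$, $(\MA_{\mathcal{J}})_{\sigma_1(i)}$, and $(\MA_{\mathcal{K}})_{\sigma(i)}$, then summing over $i$, yields a sum bounded above by $\retrodist(\mathcal{I},\mathcal{J}) + \retrodist(\mathcal{J},\mathcal{K})$; since $\sigma$ is only one candidate in the minimum defining $\retrodist(\mathcal{I},\mathcal{K})$, the desired inequality follows.

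The main (small) obstacle is bookkeeping: the matrices $\MA_{\mathcal{I}}$, $\MA_{\mathcal{J}}$, $\MA_{\mathcal{K}}$ are indexed by agent-orderings internal to each instance, so it is worth being explicit that reindexing rows via a bijection on indices is the correct abstraction, and that composing two such bijections preserves the pointwise $\ell_1$ triangle inequality. Once this is set up, the proof is a two-line chain of inequalities, so the statement is essentially a corollary of the metric property of $\ell_1$ combined with closure of bijections under composition and inversion.
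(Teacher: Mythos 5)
Your proof is correct and follows exactly the argument the paper intends: the paper justifies the observation in one sentence by noting that $\ell_1$ satisfies the triangle inequality and that "mappings between agent sets can be simply applied on top of each other," which is precisely your composition-of-optimal-bijections step. Your write-up just makes explicit the routine verifications (identity bijection for $d(x,x)=0$, inverse bijection for symmetry) that the paper leaves implicit.
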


However, note that the mutual attraction distance is not isomorphic, i.e., there exist multiple non-isomorphic SR instances having the same mutual attraction matrix:

\begin{observation}\label{lem:non-iso}
 The mutual attraction distance is not an isomorphic distance.
\end{observation}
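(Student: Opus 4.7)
The claim asks for two non-isomorphic SR instances with the same mutual attraction matrix, after which $\retrodist = 0$ is immediate from the definition by taking $\sigma$ to be the identity. The plan is therefore to exhibit such a pair on four agents.

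The search strategy I would use is the following: start from some SR instance $\mathcal{I}$ and look for three agents $x$, $y$, $z$ such that (i) $z$ occupies the same position in both $\succ_x$ and $\succ_y$, and (ii) $\MA_{\mathcal{I}}(x)$ and $\MA_{\mathcal{I}}(y)$ coincide in every coordinate except the one that records the position of $x$ (respectively of $y$) inside $\succ_z$. Under these two conditions, swapping $x$ and $y$ inside $\succ_z$ alone interchanges the MA vectors of $x$ and $y$ and leaves every other row of the MA matrix untouched; relabelling the modified instance by $x \leftrightarrow y$ then produces a second instance $\mathcal{I}'$ whose MA matrix equals $\MA_{\mathcal{I}}$ row by row. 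A short hand search yields the following concrete example: let $\mathcal{I}$ be defined by $\succ_a = c \succ d \succ b$, $\succ_b = c \succ a \succ d$, $\succ_c = a \succ b \succ d$, $\succ_d = c \succ b \succ a$, and let $\mathcal{I}'$ be defined by $\succ_a = c \succ b \succ d$, $\succ_b = c \succ d \succ a$, $\succ_c = a \succ b \succ d$, $\succ_d = c \succ a \succ b$. A direct computation shows that both instances give $\MA(a) = (1, 3, 2)$, $\MA(b) = (2, 3, 2)$, $\MA(c) = (1, 1, 1)$, and $\MA(d) = (3, 3, 2)$.

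To rule out isomorphism, I would use that any isomorphism $\phi \colon \mathcal{I} \to \mathcal{I}'$ satisfies $\MA_{\mathcal{I}}(a) = \MA_{\mathcal{I}'}(\phi(a))$ for every agent $a$. Since each of the four MA vectors above is distinct, $\phi$ is forced to be the identity on $\{a, b, c, d\}$. But the preference order of $a$ in $\mathcal{I}$ is $c \succ d \succ b$, whereas in $\mathcal{I}'$ it is $c \succ b \succ d$, so the identity is not an isomorphism and consequently no isomorphism exists. The main obstacle is locating the four-agent example in the first place; once this is in hand, both the equality of MA matrices and the failure of isomorphism reduce to straightforward inspection.
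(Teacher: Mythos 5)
Your proposal is correct and follows essentially the same route as the paper: both exhibit a pair of non-isomorphic four-agent instances whose mutual attraction matrices coincide row by row (indeed, your example produces the same multiset of MA vectors $(1,1,1),(1,3,2),(2,3,2),(3,3,2)$ as the paper's), from which $\retrodist=0$ is immediate. Your explicit non-isomorphism argument---that an isomorphism must preserve MA vectors, which are pairwise distinct here, forcing the identity map, which visibly fails---is a welcome addition, as the paper merely asserts non-isomorphism without justification.
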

\begin{proof}
    Let $\mathcal{I}$ with agents $a, b, c$, and $d$ and $\mathcal{I}'$ with agents $a',b',c'$, and $d'$ be two SR instances with the following preferences:
	\begin{align*}
		a&: b\succ c \succ d,  &b&: a\succ c \succ d, &c&: a\succ d \succ b, &d&: a\succ b \succ c,\\
	a' &:  b'\succ c' \succ d', &b'&:  a'\succ d' \succ c', &c'&:  a'\succ b' \succ d', &d'&: a'\succ c' \succ b'.
	\end{align*}
	The mutual attraction matrices of the two instances are: 
	$$
	   \MA_\mathcal{I} = 
	\kbordermatrix{ & 1 & 2 & 3  \\
		a &                1 & 1 & 1  \\
		b &                1 & 3 & 2  \\
		c &                2 & 3 & 2   \\
		d &                3 & 3 & 2  
	}
	\text{, }
	                              \MA_\mathcal{I'} = 
	\kbordermatrix{ & 1 & 2 & 3  \\
		a' &                1 & 1 & 1 \\
		b' &                1 & 3 & 2 \\
		c' &              2 & 3 & 2 \\
		d' &              3 & 3 & 2  
	}.
	$$
    So we have $\retrodist(\mathcal{I},\mathcal{I}')=0$, yet $\mathcal{I}$ and $\mathcal{I}'$ are not isomorphic. 
\end{proof}

Thus, we say that a matrix has a unique realization if any two SR instances realizing the matrix are~isomorphic.
Unfortunately, there even exist mutual attraction matrices realized by two non-isomorphic SR instances $\mathcal{I}_1$ and $\mathcal{I}_2$ where $\mathcal{I}_1$ admits a stable matching but $\mathcal{I}_2$ does not.
This indicates that the mutual attraction distance between two instances has only a limited predictive value for their relationship in terms of their (distance to) stability, which in turn is not too surprising given that stability is dependent on local configurations. 
\begin{observation}
    There are two non-isomorphic instances~$\mathcal{I}_1$ and $\mathcal{I}_2$ such that $\retrodist (\mathcal{I}_1, \mathcal{I}_2) = 0$, $\mathcal{I}_1$ admits a stable matching, and $\mathcal{I}_2$ does not admit a stable matching.
\end{observation}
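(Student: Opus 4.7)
The plan is to exhibit two concrete SR instances $\mathcal{I}_1$ and $\mathcal{I}_2$ on the same agent set whose mutual attraction matrices coincide as multisets of rows, with $\mathcal{I}_1$ admitting a stable matching and $\mathcal{I}_2$ admitting none. Equality of row multisets means that a suitable row permutation serves as the bijection $\sigma$ in \Cref{def:MAD} and yields $\retrodist(\mathcal{I}_1,\mathcal{I}_2)=0$; non-isomorphism is then automatic, because every isomorphism sends stable matchings to stable matchings.

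Concretely, I would take $\mathcal{I}_2$ to be a small SR instance already known to lack a stable matching---for example, the classical four-agent instance $a: b\succ c\succ d$, $b: c\succ a\succ d$, $c: a\succ b\succ d$, $d: a\succ b\succ c$, whose instability follows by checking each of the three possible matchings and locating a blocking pair in each---and then construct $\mathcal{I}_1$ by a coordinated sequence of entry swaps in several agents' preference orders that preserves every row of the mutual attraction matrix while breaking the preference cycle responsible for the instability. The swaps must be balanced: a single swap typically shifts entries in multiple MA rows simultaneously, so one exploits a pair of agents whose roles in each other's preferences can be symmetrically exchanged, in the same spirit as the two non-isomorphic realizations witnessing \Cref{lem:non-iso}. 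If the four-agent MA matrix of the classical example turns out to be too rigid (its reciprocity constraints may only admit unstable realizations), I would pass to a six-agent instance by appending two spectator agents $e,f$ that are mutual top choices and occupy identical positions in all original preference lists of both instances, giving extra room for a non-trivial modification without disturbing the MA rows of the core four.

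Verification then reduces to (i)~directly comparing the two mutual attraction matrices row-by-row, (ii)~exhibiting an explicit stable matching in $\mathcal{I}_1$, and (iii)~inspecting the constant number of matchings of $\mathcal{I}_2$ to identify a blocking pair in each. The main obstacle is the construction step: finding two non-isomorphic realizations of a single MA matrix such that one is stable and the other is not. Because the mutual attraction matrix encodes only reciprocal ranking information and loses the identity of who ranks whom, its realizations can diverge in subtle ways, but the space of realizations for small matrices is often severely constrained by the reciprocity equations; the most direct attack is therefore a systematic enumeration---by hand for 4 agents, where the realizations of a given MA matrix can be worked out via a case analysis over the 1st-choice relation, and with a brief computer search at 6 agents if needed.
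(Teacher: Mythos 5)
Your surrounding logic is sound: if the two instances have the same multiset of mutual attraction rows then $\retrodist(\mathcal{I}_1,\mathcal{I}_2)=0$ by \Cref{def:MAD}, and non-isomorphism comes for free once one instance has a stable matching and the other does not. The verification recipe (compare matrices row by row, exhibit a stable matching in one instance, exhibit a blocking pair for every matching of the other) is also exactly what is needed. However, the statement is a pure existence claim whose entire content is the pair of witnesses, and your proposal does not produce them. You correctly flag the construction as ``the main obstacle'' and then leave it to a hoped-for sequence of balanced swaps, a possibly-too-rigid four-agent case analysis, or a computer search at six agents; none of these is carried out, and none is guaranteed in advance to succeed. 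In particular, your four-agent starting point is genuinely problematic: its mutual attraction matrix has rows $(2,1,1)$, $(2,1,2)$, $(2,1,3)$, $(3,3,3)$, and you give no argument that this matrix admits a stable realization at all. Likewise, appending two spectator agents who are mutual top choices and sit in identical positions in all core preference lists does not by itself change the stability status of the core, so the six-agent fallback is also only a hope. As it stands this is a proof plan, not a proof.

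For comparison, the paper simply exhibits two explicit six-agent instances whose mutual attraction matrix is the constant matrix with every row equal to $(2,1,4,3,5)$: one instance is built from two preference $3$-cycles and admits no stable matching, while the other admits two stable matchings, namely $\{\{b_1,b_2\},\{b_3,b_4\},\{b_5,b_6\}\}$ and $\{\{b_2,b_3\},\{b_4,b_5\},\{b_6,b_1\}\}$. Choosing a maximally symmetric target matrix (all rows identical) is what makes the construction tractable, since then only the multiset of reciprocal-rank patterns must match and one has great freedom in how the agents are wired together. If you want to complete your argument, aiming for such a constant-row matrix at six agents (rather than perturbing the classical four-agent counterexample) is the step you are missing.
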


\begin{proof}
    Consider the following two instances:
    \begin{align*}
        a_1 &: a_2 \succ a_3 \succ a_4 \succ a_5 \succ a_6 & b_1 & :b_2 \succ b_6 \succ b_4 \succ b_5 \succ b_3\\
        a_2 &: a_3 \succ a_1 \succ a_6 \succ a_4 \succ a_5 & b_2 & : b_3 \succ b_1 \succ b_6 \succ b_4 \succ b_5\\
        a_3 &: a_1 \succ a_2 \succ a_5 \succ a_6 \succ a_4 & b_3 & :b_4 \succ b_2 \succ b_5 \succ b_6 \succ b_1\\
        a_4 &: a_5 \succ a_6 \succ a_1 \succ a_1 \succ a_3 & b_4 & :b_5 \succ b_3 \succ b_2 \succ b_1 \succ b_6\\
        a_5 &: a_6 \succ a_4 \succ a_3 \succ a_3 \succ a_2 & b_5 & :b_6 \succ b_4 \succ b_1 \succ b_3 \succ b_2\\
        a_6 &: a_4 \succ a_5 \succ a_2 \succ a_2 \succ a_1 & b_6 & :b_1 \succ b_5 \succ b_3 \succ b_2 \succ b_4
    \end{align*}
    For both instances, the mutual attraction matrix is the following:
    \[
    \begin{bmatrix}
     2 & 1 & 4 & 3 & 5\\
     2 & 1 & 4 & 3 & 5\\
     2 & 1 & 4 & 3 & 5\\
     2 & 1 & 4 & 3 & 5\\
     2 & 1 & 4 & 3 & 5\\
     2 & 1 & 4 & 3 & 5\\
    \end{bmatrix}
    \]
    The left instance does not admit a stable matching, while the right instance admits the two stable matchings~$M_1 = \{\{b_1, b_2\}, \{b_3, b_4\}, \{b_5, b_6\}\}$ and~$M_2 = \{\{b_2, b_3\}, \{b_4, b_5\}, \{b_6, b_1\}\}$.
\end{proof}
This is in partial contrast to the Spearman distance, where instances at distance zero are isomorphic and thus either both or neither of them admits a stable matching. 
Concerning instances which are at a non-zero distance, for Spearman there also exist SR instances at the minimum distance of $2$ where one admits a stable matching and the other does not. 
However, for the Spearman distance it holds that if a matching $M$ is stable in instance $\mathcal{I}$, then $M$ admits at most $\speardist (\mathcal{I},\mathcal{I}')$ blocking pairs in instance $\mathcal{I}$. 
Thus, under Spearman, if two instances are close to each other and one of them admits a stable matching, then the other instance is also guaranteed to contain a matching which is almost stable. 
While this is not the case for the mutual attraction distance, in \Cref{sec:min-BPS}, we will demonstrate that all our synthetically generated instances are anyway close to admitting a stable matching in the sense that in all instances there is a matching blocked by only few pairs. 

To better understand the general properties of the mutual attraction distance, we continue by proving upper and lower bounds on the distance of two SR instances.  
\begin{restatable}{proposition}{maxdist} \label{lem:max-dist}
	For any two SR instances $\mathcal{I}_1$ and $\mathcal{I}_2$ with $2n$ agents each and ${\retrodist(\mathcal{I}_1, \mathcal{I}_2) > 0}$, we have $2\leq \retrodist (\mathcal{I}_1, \mathcal{I}_2) \le  4\cdot (n-1)\cdot n^2$.
\end{restatable}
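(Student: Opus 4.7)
I will prove the two inequalities separately. For the lower bound, I will establish that $\retrodist(\mathcal{I}_1, \mathcal{I}_2)$ is always an even nonnegative integer, so that being strictly positive forces it to be at least~$2$. The key identity is $|a - b| \equiv a - b \pmod 2$ for integers $a,b$, which gives, for every bijection $\sigma$,
\[
\sum_{i, j} |\MA_{\mathcal{I}_1}(i, j) - \MA_{\mathcal{I}_2}(\sigma(i), j)| \equiv \sum_{i, j} \MA_{\mathcal{I}_1}(i, j) - \sum_{i, j} \MA_{\mathcal{I}_2}(i, j) \pmod 2.
\]
Swapping the order of summation and using that $\pos_{\succ_{a'}}$ is a bijection from $A \setminus \{a'\}$ onto $[2n-1]$, the total entry sum of any realizable mutual attraction matrix equals $2n \cdot \sum_{k = 1}^{2n-1} k = 2n^2(2n - 1)$, which depends only on $n$. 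The two sums on the right-hand side therefore cancel modulo $2$, so $\retrodist(\mathcal{I}_1,\mathcal{I}_2)$ is even, giving the lower bound.

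For the upper bound, the plan is to apply the triangle inequality with an auxiliary ``central'' matrix. Let $M^{*}$ be the $(2n) \times (2n-1)$ matrix with every entry equal to $n$. The triangle inequality $\retrodist(M_1, M_3) \leq \retrodist(M_1, M_2) + \retrodist(M_2, M_3)$ holds for arbitrary matrices of the right shape (realizability is not used), by composing bijections and applying the $\ell_1$ triangle inequality row by row, exactly as in the proof that $\retrodist$ is a pseudometric. Because all rows of $M^{*}$ are identical, $\retrodist(\MA_\mathcal{I}, M^{*})$ is independent of the chosen bijection and simplifies to $\sum_{a, j} |\MA_\mathcal{I}(a, j) - n|$. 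Swapping summation once more gives
\[
\sum_{a' \in A} \sum_{a \in A \setminus \{a'\}} |\pos_{\succ_{a'}}(a) - n| \;=\; 2n \cdot \sum_{k = 1}^{2n - 1} |k - n| \;=\; 2n \cdot n(n-1) \;=\; 2n^{2}(n - 1),
\]
again using that $\pos_{\succ_{a'}}$ is a bijection onto $[2n-1]$. Combining the two triangle-inequality terms then yields $\retrodist(\mathcal{I}_1, \mathcal{I}_2) \leq 2 \cdot 2n^{2}(n - 1) = 4(n-1)n^{2}$.

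The main conceptual step, and the only non-routine part of the argument, is guessing the right intermediate matrix $M^{*}$: the choice of the constant-$n$ matrix turns the triangle-inequality bound into the exact equality $\retrodist(\MA_\mathcal{I}, M^{*}) = 2n^{2}(n-1)$ for \emph{every} realizable $\mathcal{I}$, so the stated constant $4(n-1)n^{2}$ falls out exactly. The rest is bookkeeping with swap-summation identities and relies on no structural property of realizability beyond $\pos_{\succ_{a'}}$ being a bijection onto $[2n-1]$ for each fixed $a'$.
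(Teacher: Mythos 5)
Your proof is correct, but both halves take a genuinely different route from the paper's. For the lower bound, the paper argues combinatorially: since $M_1$ and $M_2$ contain each number of $[2n-1]$ equally often, a single mismatched position under any bijection forces a second mismatched position, and each contributes at least $1$. Your parity argument---that $\sum_{i,j}|M_1(i,j)-M_2(\sigma(i),j)| \equiv \sum M_1 - \sum M_2 \equiv 0 \pmod 2$ because the total entry sum $2n^2(2n-1)$ of a realizable matrix depends only on $n$---is a clean alternative and in fact yields the slightly stronger fact that $\retrodist$ is always even, not merely at least $2$ when positive. For the upper bound, the paper bounds $\sum(\max-\min)$ directly by summing the $2n(2n-1)$ largest entries of the combined multiset and subtracting the $2n(2n-1)$ smallest, whereas you route through the constant-$n$ matrix $M^*$ via the triangle inequality, using that $\retrodist(\MA_{\mathcal{I}},M^*)=2n\sum_{k=1}^{2n-1}|k-n|=2n^2(n-1)$ exactly for every realizable $\mathcal{I}$. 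Your version is more modular and makes it transparent why the constant $4(n-1)n^2$ appears (it is twice the common distance to the ``center''); its only extra obligation is that the triangle inequality must hold with the non-realizable intermediate $M^*$, which you correctly note follows from the row-by-row $\ell_1$ argument without any realizability assumption. Both computations check out, so the proposal stands as a valid, independent proof.
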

\begin{proof}
	Let $M_1:=\MA_{\mathcal{I}_1}$ and $M_2:=\MA_{\mathcal{I}_2}$.
	As every realizable matrix~$M\in \mathbb{N}^{(2n)\times (2n-1)}$ contains each number from~$[2n-1]$ exactly $2n$ times, both $M_1$ and $M_2$ contain each number from~$[2n-1]$ exactly $2n$ time.
	
	For the upper bound, note that from this it follows that each number from~$[2n-1]$ appears exactly $4n$ times in $M_1$ and $M_2$ together.
	Since $|x-y|= \max\{x, y\} - \min \{x, y\}$ holds for all~$x, y\in \mathbb{R}$, we can upper bound $\retrodist (M_1, M_2)$ by summing up the $2n \cdot (2n-1)$ largest numbers appearing in~$M_1$ and~$M_2$ and subtracting the $2n \cdot (2n-1)$ smallest numbers appearing in~$M_1$ and~$M_2$.
	Consequently, we have
	\begin{align*}
	\retrodist (M_1, M_2) &\le 4n \cdot (\sum_{j= n + 1}^{2n-1} j - \sum_{j= 1}^{n-1}j) = 4n \cdot (\sum_{j= 1}^{n-1} (n+j) - \sum_{j= 1}^{n-1}j)\\
	& = 4n \cdot (n \cdot (n-1) + \sum_{j = 1}^{n-1} j - \sum_{j=1}^{n-1} j) = 4n^2 \cdot (n-1) 
	\end{align*}
	
	For the lower bound, note that if $\retrodist (M_1, M_2)\neq 0$, then for each $\sigma\in \Pi([2n],[2n])$ there is some $i\in [2n]$ and $j\in [2n-1]$ with ${M_1}_{i,j}\neq {M_2}_{\sigma(i),j}$. 
	As each number appears in $M_1$ and $M_2$ the same number of times from this it follows that there also needs to be at least one other pair $i'\in [2n]$ and $j'\in [2n-1]$ with ${M_1}_{i',j'}\neq {M_2}_{\sigma(i'),j'}$ and $(i,j)\neq (i',j')$.
	From this it follows that $\retrodist (M_1, M_2)\geq 2$.
\end{proof}

In fact, it is easy to see that the lower bound is tight. 
Later in \Cref{pr:calc}, we will also establish the tightness of the upper bound. 
\begin{observation} \label{ob:MAD2}
	There are two SR instances $\mathcal{I}$ and $\mathcal{I}'$ with $\retrodist(\mathcal{I},\mathcal{I}')=2$. 
\end{observation}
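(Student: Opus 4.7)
The plan is to exhibit two small, explicit SR instances whose mutual attraction matrices are at $\ell_1$ distance exactly $2$ under the identity bijection. Since \Cref{lem:max-dist} already shows that any two instances at nonzero mutual attraction distance satisfy $\retrodist \ge 2$, it suffices to construct a pair achieving this lower bound.

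The construction hinges on the following minimal-perturbation idea: if in some SR instance $\mathcal{I}$ two agents $b$ and $c$ occupy adjacent positions $i$ and $i+1$ in agent $a$'s preference list and, crucially, also rank $a$ at the same position in their own lists, then transposing $b$ and $c$ in $a$'s list produces an instance $\mathcal{I}'$ whose mutual attraction matrix differs from $\MA_{\mathcal{I}}$ in only two entries, each by exactly one. Row $a$ is unaffected because the two swapped positions in $a$'s list store the common value $\pos_{\succ_b}(a) = \pos_{\succ_c}(a)$; rows of agents outside $\{a,b,c\}$ are untouched because neither $b$ nor $c$ moved in any preference list other than $a$'s; and rows $b$ and $c$ each change by exactly one at the single entry recording $a$'s ranking of $b$ (respectively $c$), which shifts by one under the transposition.

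Concretely, I would take $2n = 4$ and the instance $\mathcal{I}$ with preferences $a: b \succ c \succ d$, $b: c \succ a \succ d$, $c: d \succ a \succ b$, and $d: a \succ b \succ c$; here both $b$ and $c$ rank $a$ at position $2$, and $b, c$ are adjacent in $a$'s list. Let $\mathcal{I}'$ be obtained by replacing $a$'s list with $c \succ b \succ d$ and keeping the rest of the preferences unchanged. Computing both mutual attraction matrices confirms the structural prediction and yields $\retrodist(\mathcal{I}, \mathcal{I}') \le 2$ via the identity bijection on agents. The only remaining step is to rule out $\retrodist(\mathcal{I}, \mathcal{I}') = 0$, which follows immediately by comparing the multisets of rows: they differ (for instance, $\MA_{\mathcal{I}}$ contains the row $(3,1,2)$, which is absent from $\MA_{\mathcal{I}'}$), so no bijection can align them. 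Combined with \Cref{lem:max-dist}, this gives $\retrodist(\mathcal{I}, \mathcal{I}') = 2$. No step poses a real obstacle; the only thing to get right is the coordination of the adjacency of $b, c$ in $a$'s list with the equal-position requirement, which is what allows the perturbation to cost only $2$ rather than more.
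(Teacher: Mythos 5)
Your proof is correct and follows essentially the same approach as the paper: both exhibit a pair of four-agent instances differing by a single swap of two adjacent agents in one agent's preference list (where those two agents rank that agent in the same position), so that the mutual attraction matrices differ in exactly two entries by one each. Your write-up is slightly more careful than the paper's in explicitly ruling out distance zero via the multiset-of-rows argument and invoking \Cref{lem:max-dist} for the lower bound, but the construction and computation are the same in substance.
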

\begin{proof}
	Let $\mathcal{I}$ with agents $a, b, c$, and $d$ and $\mathcal{I}'$ with agents $a',b',c'$, and $d'$ be two SR instances with the following preferences:
	\begin{align*}
	a&: b\succ c \succ d,  &b&: a\succ c \succ d, &c&: a\succ b \succ d, &d&: a\succ b \succ c,\\
	a' &:  c'\succ b' \succ d', &b'&:  a'\succ c' \succ d', &c'&:  a'\succ b' \succ d', &d'&: a'\succ b' \succ c'.
	\end{align*}
	The mutual attraction matrices of the two instances are: 
	$$
	\MA_\mathcal{I} = 
	\kbordermatrix{ & 1 & 2 & 3 \\
		a &                1 & 1 & 1  \\
		b &                1 & 2 & 2  \\
		c &                2 & 2 & 3   \\
		d &                3 & 3 & 3  
	}
	\text{, }
	\MA_\mathcal{I'} = 
	\kbordermatrix{ & 1 & 2 & 3  \\
		a' &                1 & 1 & 1 \\
		b' &                2 & 2 & 2 \\
		c' &              1 & 2 & 3 \\
		d' &              3 & 3 & 3
	}.
	$$
	It clearly holds that $\retrodist(\MA_\mathcal{I}, \MA_\mathcal{I'})=2$.
\end{proof}

\paragraph{Correlation of Mutual Attraction and Spearman Distance.}
As the Spearman distance~$d_{\spear}$ is a very natural and intuitively appealing distance measure, we checked the correlation between the mutual attraction and Spearman distance.
For this we used the test dataset of $460$ instances described in \Cref{sub:creat} for twelve agents.\footnote{We computed the Spearman distance by iterating over all possible agent mappings $\sigma$ and twelve was the largest number of agents we could handle within weeks.} 
To evaluate the correlation, we use the Pearson Correlation Coefficient (PCC).\footnote{PCC is a measure for the linear correlation between two quantities and $0$ if there is no correlation and $1$ if there is a perfect positive linear correlation.} The PCC between mutual attraction and Spearman distances on our test dataset is $0.801$, which is typically regarded as a strong correlation~\cite{schober2018correlation}. 
In particular, for~$95\%$ of instance pairs $(\mathcal{I}, \mathcal{I}')$ we have that $0.82\cdot\retrodist(\mathcal{I},\mathcal{I}')\leq d_{\spear}(\mathcal{I},\mathcal{I}')\leq 1.48\cdot \retrodist(\mathcal{I},\mathcal{I}')$. 
\Cref{fig:MAD} depicts this correlation on the instance level.  
Unfortunately, the ratio between the mutual attraction and Spearman distance is unbounded.
First, as the Spearman distance is isomorphic but the mutual attraction distance is not, there are instances with mutual attraction distance zero but positive Spearman distance.
Second, we show that there are instances with mutual attraction distance zero but unbounded Spearman distance:
\begin{observation}\label{obs:mad-spear}
    For any~$n \ge 2$, there are SR instances~$\mathcal{I}_1$, $\mathcal{I}_2$ on $n$ agents with $\speardist (\mathcal{I}_1, \mathcal{I}_2) = 2$ but $\retrodist (\mathcal{I}_1, \mathcal{I}_2) \ge n-2$.
\end{observation}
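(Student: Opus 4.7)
The plan is to exhibit two SR instances $\mathcal{I}_1,\mathcal{I}_2$ on $n$ agents that differ only by an adjacent swap in one agent's preference list, chosen so that the two interchanged agents sit at opposite ends of the victim's own preferences. This makes $\speardist$ exactly $2$ (via the identity bijection) while forcing a large imbalance of column sums in the mutual attraction matrices, from which $\retrodist \ge n-2$ drops out regardless of the optimizing bijection.

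\smallskip
\textbf{Construction.} For $n \ge 3$, take agents $a_1, \dots, a_n$. In $\mathcal{I}_1$, let $a_1$'s preferences be $a_2 \succ a_3 \succ a_4 \succ \dots \succ a_n$, let $a_2$ rank $a_1$ in position $1$ (i.e., $\pos_{\succ_{a_2}}(a_1) = 1$), let $a_3$ rank $a_1$ in position $n-1$, and fill in the remaining preference entries arbitrarily. Let $\mathcal{I}_2$ be obtained from $\mathcal{I}_1$ by swapping $a_2$ and $a_3$ in $a_1$'s preference list.

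\smallskip
\textbf{Spearman distance.} Under the identity bijection only $a_1$'s list changes, and it does so by a single adjacent swap, so $\speardist(\mathcal{I}_1,\mathcal{I}_2) \le 2$. The matching lower bound below will yield $\retrodist(\mathcal{I}_1,\mathcal{I}_2) > 0$; since any isomorphism between SR instances induces a row permutation relating their MA matrices, this implies $\mathcal{I}_1$ and $\mathcal{I}_2$ are non-isomorphic, and so $\speardist \ge 2$ because $d_\spear$ is an isomorphic, even-valued pseudometric. Hence $\speardist = 2$.

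\smallskip
\textbf{Mutual attraction lower bound via column sums.} For any bijection $\sigma \in \Pi([n],[n])$, a columnwise triangle inequality gives
\[ \sum_i \ell_1\bigl((M_1)_i,(M_2)_{\sigma(i)}\bigr) \;\ge\; \sum_k \bigl|S_k^{(1)} - S_k^{(2)}\bigr|, \]
where $M_j := \MA_{\mathcal{I}_j}$ and $S_k^{(j)} := \sum_i (M_j)_{i,k}$ is the sum of column $k$, which is invariant under row permutation. Since the two instances differ only in $a_1$'s preferences, the only rows of the MA matrix whose entries change are those of $a_1$, $a_2$, and $a_3$: for every other $a_i$ we have $\pos_{\succ_{a_1}}(a_i) = i-1$ unaltered by the swap (which only exchanges positions $1$ and $2$ of $a_1$'s list), and all remaining MA entries depend only on preferences that did not change. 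The swap causes $\MA(a_1,1)$ to shift from $1$ to $n-1$, $\MA(a_1,2)$ from $n-1$ to $1$, $\MA(a_2,1)$ from $1$ to $2$, and $\MA(a_3,n-1)$ from $2$ to $1$. All other entries agree across the two matrices. The column sums therefore change by $+(n-1)$ in column $1$, $-(n-2)$ in column $2$, and $-1$ in column $n-1$, for a total absolute change of $2n-2 \ge n-2$, as required.

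\smallskip
The apparent main obstacle—that an optimal row-rematching might collapse the cost—is sidestepped entirely by the column-sum bound, which holds for every bijection. The only subtle check is that no other agent's modified MA entry silently cancels the computed column-sum changes, and this is precisely guaranteed by the observation that the swap leaves $\pos_{\succ_{a_1}}(a_i)$ fixed for all $i \ge 4$.
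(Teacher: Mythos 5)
Your proof is correct, but it takes a genuinely different route from the paper's. The paper builds a very rigid pair of instances: every agent other than $a_1$ gets preferences derived from a near-master-list so that its mutual attraction vector is almost constantly $i-1$; the argument then splits into cases on the optimal bijection (any bijection that displaces some $a_i$ with $i\ge 2$ already pays on the order of $n-2$ because those near-constant rows are pairwise far apart, and the identity bijection pays $2(n-2)$ on the row of $a_1$). You instead leave almost all preferences arbitrary, pin down only three positional facts ($a_1$'s list, $\pos_{\succ_{a_2}}(a_1)=1$, $\pos_{\succ_{a_3}}(a_1)=n-1$), and lower-bound $\retrodist$ by the $\ell_1$ distance between the column-sum vectors of the two mutual attraction matrices --- a quantity invariant under row permutation and hence under the minimizing bijection. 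This sidesteps the case analysis entirely, yields the slightly stronger bound $2n-2$, and the column-sum inequality is a reusable, bijection-free certificate that could be stated as a standalone lemma about $\retrodist$. Two small remarks: (i) your construction needs $n\ge 3$, but the $n=2$ case of the statement is degenerate anyway (with two agents there is only one instance up to renaming, so $\speardist=2$ is unattainable; the paper's own construction has the same defect), and (ii) for $n=3$ the columns $2$ and $n-1$ coincide, but both changes there are negative, so $\sum_k\bigl|S_k^{(1)}-S_k^{(2)}\bigr|$ still equals $2n-2$ and nothing breaks.
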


\begin{proof}
    Consider the following SR instance~$\mathcal{I}_1$ with~$n$ agents.
    \begin{align*}
        a_1 & : a_2 \succ a_n \succ a_{n-1} \succ \dots \succ a_3\\
        a_i & : a_2 \succ a_3 \succ \dots a_{i-1} \succ a_1 \succ a_{i+1} \succ a_{i+2} \succ \dots \succ a_n \succ a_1
    \end{align*}
    Let $\mathcal{I}_2$ be the instance arising from this SR instance by swapping~$a_2 $ and $a_n$ in the preferences of~$a_1$.
    We will denote the agents of the instance belonging to~$\mathcal{I}_2$ by $a_1'$, $a_2'$, $a_3'$, \dots, $a_n'$.
    Then $\speardist (\mathcal{I}_1, \mathcal{I}_2) = 2$.
    
    It remains to show that $\retrodist (\mathcal{I}_1, \mathcal{I}_2) \ge n-2$.
    Let $\sigma$ be a bijection from~$\{a_1, a_2, \dots, a_n\}$ to~$\{a_1', a_2', \dots, a_n'\}$ corresponding to the minimum distance~$\retrodist (\mathcal{I}_1, \mathcal{I}_2)$.
    Note that for each~$i \in[2, n]$, we have $\MA_{\mathcal{I}_1}(a_i, j) = i-1 = \MA_{\mathcal{I}_2} (a_i, j)$ for all $j \neq i$.
    Thus, if $\sigma (a_i) \neq a_i'$ for some~$i \in [2, n]$, then $\retrodist (\mathcal{I}_1, \mathcal{I}_2) \ge n-2$.
    Otherwise, we have $\sigma (a_i) = a_i'$ for all $i \in [n]$.
    However, we have $\MA_{\mathcal{I}_1} (a_1) = (1, n-1, n-2, n-3, \dots, 2)$ and $\MA_{\mathcal{I}_2} (a_1) = (n-1, 1, n-2, n-3, \dots, 2)$.
    Thus, in this case we would have $\retrodist (\mathcal{I}_1, \mathcal{I}_2) \ge 2 \cdot (n-2)$.
\end{proof}
Note that we do not believe that the bound from \Cref{obs:mad-spear} is tight.

\begin{figure*}
	\centering
	\begin{subfigure}[b]{0.49\textwidth}
		\centering
		\includegraphics[width=7cm]{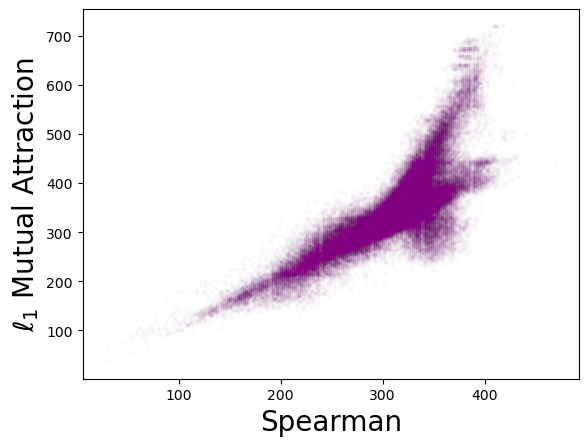}
		\caption{mutual attraction distance (PCC = 0.801)\label{fig:MAD}}
	\end{subfigure}
	\begin{subfigure}[b]{0.49\textwidth}
		\centering
		\includegraphics[width=7cm]{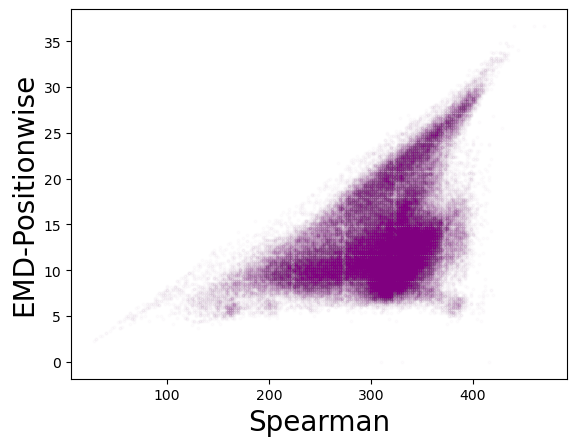}
		\caption{positionwise distance (PCC = 0.457)\label{fig:POS}}
	\end{subfigure}
	\caption{Correlation between the Spearman distance and our mutual attraction distance or the positionwise distance of \citet{DBLP:conf/atal/SzufaFSST20} on the dataset described in \Cref{sub:creat} for twelve agents. Each pair of instances is represented by a point with its $x$-axis representing their distance according to one of the measures and its $y$-axis representing their distances according to the other.}
	\label{fig:correlation}
\end{figure*}

\paragraph{Positionwise Distance.}
The papers of \citet{DBLP:conf/atal/SzufaFSST20} and \citet{DBLP:conf/ijcai/BoehmerBFNS21} on the map of elections used a different distance measure defined over the so-called position matrices. 
In a position matrix of an election, we have one row for each candidate and one column for each position, and an entry contains the fraction of voters that rank the respective candidate in the respective position. 
This distance naturally extends to SR instances by introducing a row for each agent capturing in which positions the agent is ranked by the other agents. 
Intuitively, this representation might appear appealing, as it captures the general popularity/quality of agents in the instance. 
However, the position matrix completely ignores that each agent is not only ranked by other agents, but also associated with a preference order itself.  
Consequently, the positionwise distance completely disregards mutual opinions, i.e., what agents think of each other, which are essential for stability related considerations. 
The unsuitably of the positionwise distance for SR instances is also illustrated in a Pearson correlation coefficient of only $0.457$ with the Spearman distance (see \Cref{fig:POS} for a visualization of the correlation). 
Lastly, note that the mutual attraction matrix of an SR instance captures all information contained in the position matrix, as $\MA_{\mathcal{I}}(a)$ for some agent $a$ contains the positions in which $a$ is ranked by the other agents in $\mathcal{I}$.

\section{Navigating the Space of SR Instances} \label{sec:understanding}
Interpreting and using a map of SR instances, it will be useful to give different regions on the map an intuitive meaning.  
This is why we now identify four somewhat ``canonical'' extreme mutual attraction matrices, which are far away from each other and thus fall into four very different parts of the map. 

\paragraph{Identity.} Our first extreme case is that all agents have the same preferences, i.e., there exists a central order called master list of the agents $A$ and the preferences of an agent~$a\in A$ are derived from the master list by deleting $a$.
Stable matching instances with master lists have already attracted significant attention in the past \cite{DBLP:conf/wine/BredereckHKN20,DBLP:journals/cn/CuiJ13,DBLP:journals/dam/IrvingMS08,DBLP:conf/atal/Kamiyama19}. 
For $n\in \mathbb{N}$, the identity matrix is defined by $\IDn[i,j] := \begin{cases} i &  j \ge i\\
i - 1 & j < i
\end{cases}$ for each $i\in [2n]$ and $j\in [2n-1]$.
We prove that, in fact, only SR instances where all preferences are derived from a master list realize the identity matrix:
\begin{restatable}{proposition}{IDreal} \label{lem:IDreal}
	For every $n \in \mathbb{N}$, an SR instance $\mathcal{I}$ is a realization of \IDn~if and only if the preferences in $\mathcal{I}$ are derived from a master list.
 In particular, the realization of \IDn~is unique.
\end{restatable}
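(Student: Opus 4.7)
The plan is to treat the two directions separately: the ``if'' direction by a direct computation, and the ``only if'' direction (together with uniqueness) by a structural observation about the columns of $\IDn$ followed by an induction on column index. For the ``if'' direction, I assume without loss of generality that the master list is $a_1 \succ a_2 \succ \cdots \succ a_{2n}$, so $a_i$'s preferences are $a_1, \ldots, a_{i-1}, a_{i+1}, \ldots, a_{2n}$; the agent ranked by $a_i$ at position $k$ is $a_k$ when $k < i$ and $a_{k+1}$ when $k \geq i$, and deleting the respective agent from the master list sends $a_i$ to position $i-1$ if $k < i$ and to position $i$ if $k \geq i$, which is exactly $\IDn[i, k]$.

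For the converse, I suppose $\MA_{\mathcal{I}} = \IDn$ and index the agents $a_1, \ldots, a_{2n}$ so that row $i$ is $\MA(a_i)$. The key structural observation is that in $\IDn$ the value $v \in [2n-1]$ appears only in rows $v$ and $v+1$: precisely $2n - v$ times in row $v$ (columns $v, \ldots, 2n-1$) and $v$ times in row $v+1$ (columns $1, \ldots, v$), giving $2n$ occurrences in total. Since the multiset of entries of $\MA(a_i)$ equals the multiset of positions at which $a_i$ is ranked by the other $2n - 1$ agents, $a_v$ must be ranked at position $v$ by exactly $2n - v$ other agents and $a_{v+1}$ at position $v$ by exactly $v$ other agents, together accounting for all $2n$ position-$v$ slots across the instance. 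Hence in \emph{every} agent's preferences, position $v$ is filled by $a_v$ or by $a_{v+1}$.

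It remains to decide which of $a_v, a_{v+1}$ occupies position $v$ of each $a_k$'s list. I prove by induction on $v$ that this entry is $a_{v+1}$ when $k \leq v$ and $a_v$ when $k > v$, which is precisely the master-list pattern $a_1 \succ \cdots \succ a_{2n}$ restricted to $A \setminus \{a_k\}$. In the inductive step, if $k = v$ then $a_v$ cannot appear in its own list, so position $v$ must be $a_{v+1}$; if $k < v$, the inductive hypothesis fills positions $1, \ldots, v-1$ of $a_k$'s list with $\{a_1, \ldots, a_v\} \setminus \{a_k\}$, which already uses $a_v$, again forcing $a_{v+1}$ at position $v$. These $v$ placements exhaust the $v$ available occurrences of $a_{v+1}$ at position $v$ guaranteed by the structural observation, so for every $k > v$ the remaining alternative $a_v$ must occupy position $v$ of $a_k$'s list. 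The main obstacle is establishing the structural counting lemma; once it is in hand the induction is routine, and uniqueness follows because no choice is ever made.
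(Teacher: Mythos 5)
Your proof is correct and takes essentially the same route as the paper's: a direct computation for the ``if'' direction, and for the converse an inductive counting argument built on the fact that the multiset of entries of $\MA_{\mathcal{I}}(a_i)$ records the positions at which $a_i$ is ranked by the other agents. The only (cosmetic) difference is that you induct over positions, first showing that every position-$v$ slot is held by $a_v$ or $a_{v+1}$ and then pinning down which by exhaustion, whereas the paper inducts over agents, showing that $a_i$ sits at position $i-1$ or $i$ in every list; the two inductions are dual and equally valid, and both yield uniqueness since no choice is ever made.
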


\paragraph{Mutual Agreement.} Our second extreme case is mutual agreement: For each pair $a$ and $a'$ of agents, $a$ and $a'$ evaluate each other identically, i.e., $a$ ranks $a'$ on the $i$-th position if and only if $a'$ ranks $a$ on the $i$-th position.\footnote{Notably, in the \textsc{Stable Marriage with Symmetric Preferences} \cite{malleythesis,DBLP:journals/im/AbrahamLMO08} problem we are given a set of men and women with preferences over each other, where a woman $w$ ranks a man $m$ in position $i$ if and only if $m$ ranks $w$ in position $i$, an idea very similar to mutual agreement (see \citet[Chapter 6]{malleythesis} for additional motivation). \citet{malleythesis,DBLP:journals/im/AbrahamLMO08} study the computational complexity of various traditional questions on such instances (in the presence of ties).}
For $n\in \mathbb{N}$, this is captured in the mutual agreement matrix $\symn$ where we have $\symn[i,j] = j$ for each $i\in [2n]$ and $j\in [2n-1]$.
At first glance, it is unclear whether the mutual agreement matrix is realizable. 
It turns out that the realizations of $\symn$ correspond to Round-Robin tournaments: In a Round-Robin tournament of $2n$ agents there are $2n-1$ days with each agent competing exactly once each day and exactly once against each other agent \cite{harary1966theory}.
The intuition here is that an agent in the SR instance corresponding to a Round-Robin tournament ranks in the $i$-th position the agent against whom it competes on the $i$-th day. 
Formally, we have: 
\begin{restatable}{proposition}{MAreal} \label{lem:MAreal}
	 For every~$n \in \mathbb{N}$, there is a bijection between realizations of \symn and the set of Round-Robin tournaments.
 In particular, there are several non-isomorphic realization of \symn\ for~$n = 4 $.
\end{restatable}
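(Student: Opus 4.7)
The plan is to exhibit an explicit bijection $\Phi$ between realizations of $\symn$ and Round-Robin tournaments on $2n$ agents, where a Round-Robin tournament is formalized as a partition of the edge set of $K_{2n}$ into $2n-1$ perfect matchings $(M_1, \ldots, M_{2n-1})$ (i.e., a 1-factorization of $K_{2n}$ with labeled days).

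First, I would construct $\Phi$ in the forward direction. Given an SR instance $\mathcal{I} = (A, (\succ_a)_{a \in A})$ realizing $\symn$, for each $j \in [2n-1]$ define
\[M_j := \bigl\{\{a, a'\} \subseteq A : \ag_{\succ_a}(j) = a'\bigr\}.\]
The key observation is that $\symn[i,j] = j$ means that whenever $a$ ranks $a'$ in position $j$, the agent $a'$ ranks $a$ in position $j$ as well, so $M_j$ is a well-defined set of unordered pairs. Since every agent ranks exactly one agent on position $j$, the set $M_j$ is a perfect matching on $A$. Since every ordered pair $(a,a')$ with $a \neq a'$ has a unique rank $j = \pos_{\succ_a}(a')$, the matchings $M_1, \ldots, M_{2n-1}$ partition the edge set of $K_{A}$, yielding a Round-Robin tournament.

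For the inverse direction, given a Round-Robin tournament $(M_1, \ldots, M_{2n-1})$ on agent set $A$, define for each $a \in A$ the preference order $\succ_a$ by declaring $\ag_{\succ_a}(j)$ to be the unique partner of $a$ in $M_j$. Since the $M_j$ are disjoint perfect matchings covering all edges, each $a \in A$ is assigned a distinct partner for every $j \in [2n-1]$, so $\succ_a$ is a total order on $A \setminus \{a\}$. By construction, $a$ ranks $a'$ in position $j$ iff $\{a,a'\} \in M_j$ iff $a'$ ranks $a$ in position $j$, so $\MA_{\mathcal{I}}(a, j) = j$ for every $a$ and $j$, i.e., $\MA_{\mathcal{I}} = \symn$. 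The two constructions are visibly inverse to each other, establishing the bijection. Moreover, $\Phi$ is equivariant under permutations of $A$, so isomorphism classes of realizations of $\symn$ correspond bijectively to isomorphism classes of Round-Robin tournaments on $2n$ agents.

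For the second part of the statement (non-uniqueness at $n = 4$), I would invoke the classical enumeration of 1-factorizations of $K_8$ from combinatorial design theory: there are six pairwise non-isomorphic 1-factorizations of $K_8$ (a result going back to Dickson and Safford). Under the equivariance of $\Phi$, these give rise to several pairwise non-isomorphic SR instances realizing $\symn$ for $n = 4$. I expect the main (minor) obstacle to be purely notational: carefully distinguishing between a tournament as an ordered sequence of matchings (which is what naturally corresponds to a realization, since the position of a pair matters) and as an unordered 1-factorization, so that the count of non-isomorphic realizations is stated correctly; but since the existence of two non-isomorphic realizations suffices for the claim, any two of the six non-isomorphic 1-factorizations of $K_8$ already do the job.
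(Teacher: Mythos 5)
Your proposal is correct and follows essentially the same route as the paper: the paper likewise formalizes Round-Robin tournaments as colored (day-labeled) 1-factorizations of $K_{2n}$, builds the same two mutually inverse maps between realizations of \symn\ and such factorizations using the observation that $\symn[i,j]=j$ forces mutual ranking at position $j$, and cites the Dickson--Safford enumeration of non-isomorphic 1-factorizations of $K_8$ for the $n=4$ claim. No gaps.
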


\paragraph{Mutual Disagreement.} Our third extreme case is mutual disagreement. For each pair $a$ and $a'$ of agents, their evaluations for each other are diametrical, i.e., $a$ ranks $a'$ in the $i$-th position if and only if $a'$ ranks $a$ in the $(2n-i)$-th position. 
For $n\in \mathbb{N}$, this is captured in the mutual disagreement matrix $\asymn$ where we have $\asymn[i,j] = 2n - j$ for each $i\in [2n]$ and $j\in [2n-1]$.
There exists a straightforward realization of $\asymn$ with $2n$ agents $a_1,\dots, a_{2n}$ where the preferences of agent $a_i$ are derived from the preferences of agent $a_{i-1}$ by performing a cyclic shift, i.e.,  $a_i : a_{i+1} \succ a_{i+2} \succ \dots \succ a_n \succ a_1 \succ a_2 \succ \dots \succ a_{i - 1}$. 
However, this realization is not unique:
\begin{restatable}{proposition}{MDreal} \label{lem:MDreal}
	 For every $n \in \mathbb{N}$, matrix~$\asymn$ is realizable. For~$n = 3$, matrix~$\asymn $ has multiple non-isomorphic realizations.
\end{restatable}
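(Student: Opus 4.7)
The plan is to prove the two assertions separately. The first follows by direct verification of the cyclic-shift construction suggested in the text. If agents $a_1, \ldots, a_{2n}$ have preferences in which $a_i$ ranks the agent $a_{((i+k-1) \bmod 2n)+1}$ in position $k$, then the position of $a_i$ in $a_j$'s list is the unique $k' \in [2n-1]$ with $i \equiv j + k' \pmod{2n}$, i.e., $k' = 2n-k$ whenever $a_j$ sits in position $k$ of $a_i$'s list. Hence $\MA_{\mathcal{I}}(a_i, k) = 2n-k$ for every $i$ and every $k$, so the resulting mutual attraction matrix is exactly $\asymn$.

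For the second part I would exhibit two realizations of $\asymn$ at $n = 3$ that differ on an isomorphism invariant I will call the \emph{favorite digraph}: draw a directed edge from each agent to its top-ranked agent. Any isomorphism of SR instances maps top choices to top choices, so this digraph is preserved up to relabelling. In any realization of $\asymn$, if $a$'s favorite is $b$ then $b$ ranks $a$ in position $2n-1$ (since positions $k$ and $2n-k$ are mutually linked), so $b$'s favorite is not $a$; and no two agents can share a favorite, because they would share a last-ranked agent, impossible as position $2n-1$ is unique per preference list. Thus the favorite digraph has in- and out-degree $1$ everywhere and no $2$-cycle, so it is a disjoint union of cycles of length at least $3$. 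For $2n = 6$ the only such structures are a single $6$-cycle and two disjoint $3$-cycles.

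The cyclic construction realizes the $6$-cycle, so it suffices to build a realization whose favorite digraph consists of two $3$-cycles, say $a_1 \to a_2 \to a_3 \to a_1$ and $a_4 \to a_5 \to a_6 \to a_4$. This choice fixes position $1$ of every preference order and, by reversing the favorite digraph, also position $5$. Next I would place an arbitrary perfect matching in position $3$ (which must be self-reciprocal, since $\asymn[i,3] = 3$), for example $\{a_1, a_4\}, \{a_2, a_5\}, \{a_3, a_6\}$. The remaining task is to distribute the two still-unplaced agents in each list between positions $2$ and $4$, subject to the mutual-disagreement constraint that $a$ ranks $b$ in position $2$ iff $b$ ranks $a$ in position $4$. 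The main step to check is that this system admits a solution: starting from an arbitrary choice at one agent, the constraint propagates deterministically to another agent, and so on; a direct computation shows that the propagation visits all six agents and closes consistently. The resulting instance is a valid realization of $\asymn$ whose favorite digraph is two $3$-cycles, hence non-isomorphic to the cyclic realization.
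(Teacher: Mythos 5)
Your proof is correct, and for the non-isomorphism part it takes a genuinely different route from the paper. The first half (verifying the cyclic-shift construction) is the same as the paper's. For the second half, the paper simply writes down a second explicit realization of $\asym^{2\cdot 3}$ and asserts, without a certificate, that it is not isomorphic to the cyclic one; you instead isolate an explicit isomorphism invariant --- the cycle type of the top-choice map --- prove structural constraints on it that hold for \emph{every} realization of $\asymn$ (it is a fixed-point-free permutation with no $2$-cycles, since $a$'s favorite ranks $a$ last and no two agents can be ranked last by the same agent), and then build a realization whose invariant differs from that of the cyclic one. Your propagation step does close consistently: starting from $a_1: a_2 \succ a_5 \succ a_4 \succ a_6 \succ a_3$ one is forced to $a_2: a_3 \succ a_6 \succ a_5 \succ a_4 \succ a_1$, $a_3: a_1 \succ a_4 \succ a_6 \succ a_5 \succ a_2$, $a_4: a_5 \succ a_2 \succ a_1 \succ a_3 \succ a_6$, $a_5: a_6 \succ a_3 \succ a_2 \succ a_1 \succ a_4$, $a_6: a_4 \succ a_1 \succ a_3 \succ a_2 \succ a_5$, and one checks directly that every entry of the mutual attraction matrix equals $6-j$ in column $j$. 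Your approach costs a little more setup but buys a machine-checkable reason for non-isomorphism and a reusable structural fact about all realizations of $\asymn$; note, as a caveat on its scope, that the paper's own second realization also has a single $6$-cycle as its top-choice map, so your invariant certifies non-isomorphism only for the pair you construct, not for the paper's pair --- which is all the statement requires.
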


\paragraph{Chaos.} Our fourth extreme mutual attraction matrix is the chaos matrix $\unn$, which is defined for each $i\in [2n]$ and $j\in [2n-1]$ as  $
    \unn[i,j] = \begin{cases}
        j, & \text{for } i=1\\
       i + nj - n - 1 \mod 2n - 1, & \text{otherwise}
        \end{cases}$.
Unlike the other three matrices, we have no natural interpretation of the chaos matrix. 
We added this matrix to the other three because it is far away from each of them and thus falls into an otherwise vacant part of the map. 
Its name ``chaos'' stems from the fact that this matrix is close on the map to instances with uniformly at random sampled preferences.
We prove that for infinitely many~$n\in \mathbb{N}$, $\unn$ is realizable:
\begin{restatable}{proposition}{CHreal} \label{lem:CHreal}
	 For every~$n \in \mathbb{N}$ such that $2n-1 $ is not divisible by~3, matrix~$\unn$ is realizable, and the realization is unique.
\end{restatable}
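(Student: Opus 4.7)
The plan is to show that every preference-list entry in a realization of $\unn$ is forced by the matrix itself, yielding existence and uniqueness simultaneously. The key observation is that if $b$ occupies position $j$ in agent $a$'s preferences, then $a$ must sit at position $\MA[a,j]$ in $b$'s preferences; so $b$ is necessarily an agent satisfying $b\neq a$ and $\MA[b,\MA[a,j]]=j$. Define
\[
R(a,j) := \{\, b \neq a : \MA[b,\MA[a,j]] = j \,\}.
\]
If $|R(a,j)|=1$ for every $(a,j)$, then $L_a[j]$ is forced to equal the unique element of $R(a,j)$, so uniqueness is automatic; existence then reduces to checking that the forced lists $L_a$ form a valid SR instance whose mutual attraction matrix equals $\unn$.

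Set $m := 2n-1$. The chaos formula gives $\MA[i,j] \equiv i + n(j-1) - 1 \pmod{m}$ for $i \ge 2$, while $\MA[1,j] = j$. For any fixed target pair $(v,j)$, the linear congruence yields a unique $r \in \{2,\dots,2n\}$ with $\MA[r,v] = j$; the special row $r=1$ is an additional solution precisely when $v = j$. Applied with $v = \MA[a,j]$, the size of $R(a,j)$ thus depends on (i) whether $\MA[a,j] = j$ (controlling the $r=1$ contribution) and (ii) whether the formula-derived row happens to equal $a$ (controlling the exclusion $b \neq a$). The central technical claim is that, for every $a \ge 2$,
\[
\MA[a,\MA[a,j]] = j \iff \MA[a,j] = j.
\]
The right-hand side rearranges to $a-2 \equiv (1-n)(j-1) \pmod{m}$, while expanding the left-hand side and using $n^2(j-1) - j = (n-1)(n+1)(j-1) - 1$ rearranges it to $(n+1)\bigl((a-2) + (n-1)(j-1)\bigr) \equiv 0 \pmod{m}$. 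Since $2(n+1) - (2n-1) = 3$, the quantity $\gcd(n+1, 2n-1)$ divides $3$, and equals $3$ exactly when $3 \mid 2n-1$. The hypothesis $3 \nmid 2n-1$ therefore makes $n+1$ invertible modulo $m$, collapsing the two conditions to the same congruence. A short case analysis (on whether $a=1$; and when $a\ge 2$, on whether $\MA[a,j] = j$) then yields $|R(a,j)| = 1$ in each case: in the coincidence case $\MA[a,j] = j$ the two solutions in $[2n]$ are $a$ itself and $1$, and excluding $a$ leaves $\{1\}$, while in the non-coincidence case the equivalence above guarantees that the unique formula-solution is distinct from $a$.

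With $L_a[j]$ defined as the unique element of $R(a,j)$, the slot-pairing $(a,j) \leftrightarrow (L_a[j],\MA[a,j])$ is an involution by the symmetric form of its defining equation; this simultaneously shows that each $L_a$ is a permutation of $[2n]\setminus\{a\}$ (for each $a^* \neq a$, the equation $L_a[j]=a^*$ has a unique solution $j$ by the same modular invertibility argument) and certifies that the mutual attraction matrix of the constructed instance equals $\unn$. Uniqueness follows immediately from $|R(a,j)| = 1$. I expect the main obstacle to be the careful handling of the asymmetric special row $a=1$: the count of $\{r : \MA[r,v] = j\}$ bifurcates based on whether $v=j$, and the equivalence above is precisely the mechanism by which the exclusion $b \neq a$ and the possible extra solution $r = 1$ compensate each other so that $|R(a,j)|$ is always exactly $1$.
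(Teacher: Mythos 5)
Your proof is correct and follows essentially the same route as the paper's: both arguments rest on the coprimality of $n-1$, $n$, and $n+1$ with $2n-1$ (the last requiring $3\nmid 2n-1$) and on the key fact that the row solving $\unn[i^*][j^*]=j$ for $j^*=\unn[i][j]$ differs from $i$ exactly when $j^*\neq j$, which forces every preference-list entry. Your packaging via $R(a,j)$ and the involution is a somewhat cleaner way to handle the special first row and to make the uniqueness claim (which the paper dismisses as obvious) explicit, but the underlying argument is the same.
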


\paragraph{Distances Between  Matrices.} The distances between our extreme matrices are as~follows: 
\begin{restatable}{proposition}{disss}
	\label{pr:calc}
	For each $n\in \mathbb{N}$, we have
	
	{\footnotesize \begin{align*}
	    & \retrodist (\IDn , \symn)= \retrodist (\symn, \unn) = \frac{8}{3} n^3 - 4n^2 + \frac{4}{3} n, \quad   \retrodist (\symn, \asymn ) = 4\cdot (n-1)\cdot n^2,\\ 
	    & \retrodist (\IDn , \asymn) = \retrodist (\asymn, \unn) =\frac{8}{3} n^3 - 2n^2 - \frac{2}{3} n, \quad \retrodist (\IDn, \unn) =\frac{8}{3} n^3 \pm O(n^2).
	\end{align*} }
\end{restatable}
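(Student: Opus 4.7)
The plan is to split the six distances into five ``easy'' ones and one ``hard'' one, with the hard one being $\retrodist(\IDn,\unn)$, and to handle everything via a single structural observation about \emph{columns}.

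First I would observe that both $\symn$ and $\asymn$ have constant columns: column $j$ of $\symn$ equals $j$ throughout, and column $j$ of $\asymn$ equals $2n-j$ throughout. When one of the two matrices in a distance computation has constant columns, the inner $\ell_1$ sum for column $j$ depends only on the multiset of column-$j$ entries of the other matrix, so the bijection $\sigma$ is irrelevant and the distance reduces to a column-wise double sum. Applied directly, this gives $\retrodist(\symn,\asymn)=2n\cdot\sum_{j=1}^{2n-1}|j-(2n-j)|=4n^2(n-1)$.

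The next key step is a multiset identity: for every column $j$, both $\IDn$ and $\unn$ realize the multiset $\{1,2,\ldots,2n-1\}\cup\{j\}$ (each value once, with $j$ appearing twice). For $\IDn$ this is immediate from the definition (rows $1,\ldots,j$ contribute $1,\ldots,j$ and rows $j+1,\ldots,2n$ contribute $j,j+1,\ldots,2n-1$). For $\unn$, row $1$ contributes $j$ and, as $i$ ranges over $\{2,\ldots,2n\}$, $i-1-n$ takes $2n-1$ consecutive values, so $(i-1-n+nj)\bmod (2n-1)$ hits every residue exactly once. This identity immediately forces $\retrodist(\IDn,\symn)=\retrodist(\symn,\unn)$ and $\retrodist(\IDn,\asymn)=\retrodist(\asymn,\unn)$, each of which I then evaluate from the column decomposition. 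Writing $S(c):=\sum_{x=1}^{2n-1}|x-c|$, the first pair equals $\sum_{j=1}^{2n-1}S(j)$ and the second equals $\sum_{j=1}^{2n-1}\bigl(S(2n-j)+|2j-2n|\bigr)$, and these reduce to the stated closed forms by standard arithmetic (hockey-stick or direct expansion of $\sum_{c=1}^{2n-1}S(c)=\frac{4n(n-1)(2n-1)}{3}$).

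For the genuinely new case $\retrodist(\IDn,\unn)$, the plan is to compare the true distance to a ``constant-row surrogate''. Each row of $\unn$ is a permutation of $\{1,\ldots,2n-1\}$, so if row $i$ of $\IDn$ were replaced by the constant row $(i,i,\ldots,i)$, then for any bijection $\sigma$ the row-$i$ contribution would be exactly $\sum_j|\unn[\sigma(i),j]-i|=S(i)$, independent of $\sigma$. The actual row $i$ of $\IDn$ differs from this surrogate only in its first $i-1$ positions (entry $i-1$ instead of $i$), and at each such position the entrywise contribution changes by exactly $\pm 1$. Hence for \emph{every} bijection $\sigma$,
\[\Bigl|\sum_{i=1}^{2n}\ell_1(\IDn[i],\unn[\sigma(i)])-\sum_{i=1}^{2n}S(i)\Bigr|\le\sum_{i=1}^{2n}(i-1)=n(2n-1).\]
Combined with $\sum_{i=1}^{2n}S(i)=\frac{(2n-1)(2n)(4n-1)}{6}=\frac{8}{3}n^3+O(n^2)$, this yields $\retrodist(\IDn,\unn)=\frac{8}{3}n^3\pm O(n^2)$ uniformly over all bijections, and in particular for the minimizing one.

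The main obstacle is the modular-arithmetic step showing that the column multiset of $\unn$ matches that of $\IDn$; once that is in place, the five equality cases collapse into a pair of routine sums in $S(c)$, and the sixth case follows from the surrogate-row comparison above with no minimization needed since both the upper and lower bound hold for every $\sigma$.
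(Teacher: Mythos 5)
Your proposal is correct, and for five of the six quantities it takes a genuinely different and arguably cleaner route than the paper. The paper works row by row: for $\symn$ and $\asymn$ it notes all rows are identical so the bijection is irrelevant, computes the per-row distance as a degree-two polynomial in the row index $i$ and $n$, observes the row-sum is a cubic in $n$, and pins down each cubic by interpolating through four explicitly computed small cases ($\nu=1,2,3,4$ or similar); the equalities $\retrodist(\IDn,\symn)=\retrodist(\symn,\unn)$ and $\retrodist(\IDn,\asymn)=\retrodist(\asymn,\unn)$ then emerge only a posteriori from the matching closed forms. Your column-multiset identity --- that column $j$ of both $\IDn$ and $\unn$ is the multiset $\{1,\dots,2n-1\}\cup\{j\}$ --- explains those two equalities structurally before any arithmetic, and collapses four distances into the single sum $\sum_j S(j)$ (plus the correction $\sum_j|2j-2n|$ for the $\asym$ pair), avoiding interpolation entirely; I checked the modular step for $\unn$ and the closed forms, and they are right. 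For $\retrodist(\IDn,\unn)$ both arguments use the same constant-row surrogate $\ID^*$ with $\ID^*[i][j]=i$; the paper routes through $\retrodist(\ID^*,\symn)=\retrodist(\ID^*,\unn)$ while you evaluate $\sum_i S(i)$ directly and bound the perturbation by $\sum_i(i-1)=n(2n-1)$ uniformly over all bijections --- equivalent in substance, and your version makes explicit that no minimization is needed. The only thing to spell out in a final write-up is the convention that the residue $0$ in the definition of $\unn$ is represented as $2n-1$, which is what makes "hits every residue exactly once" translate into "contains each element of $[2n-1]$ exactly once."
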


\noindent
As proven in \Cref{lem:max-dist},  $D(2n):=4\cdot (n-1)\cdot n^2$ is the maximum possible distance between two mutual attraction matrices of SR instances with $2n$ agents. 
Thus, the mutual agreement matrix and the mutual disagreement matrix are at the maximum possible distance and therefore form a diameter of our space. 
For each two matrices $X$ and $Y$ among $\ID$, $\sym$, $\chaos$, and $\asym$, we define their asymptotic normalized distance as:
$\norretrodist(X,Y) := \lim_{n \rightarrow
	\infty}\nicefrac{\retrodist(X^{2n},Y^{2n})}{D(2n)}$. 
It turns out that for all pairs of matrices $X,Y\in \{\ID,\sym,\asym,\chaos\}$ with $\{X,Y\}\neq \{\sym,\asym\}$ we have $\norretrodist(X,Y)=\frac{2}{3}$ , while $\norretrodist(\sym,\asym)=1$. 
This implies that our extreme matrices are indeed far from each other.

\section{A Map of Synthetic SR Instances} \label{sec:map}
In this section, we present a map of synthetic SR instances. 
In \Cref{sub:creat}, we describe how we create the map and how we generate the instances. 
To do this, we recall several statistical cultures from the literature but also introduce several new ones.
In \Cref{sub:eval}, we explain the map by giving the horizontal and vertical axis a natural interpretation and by analyzing where different statistical cultures land. 

Note that in the following, all discussed values of the mutual attraction distance are normalized values, i.e., they are divided by $D(2n)=4\cdot (n-1)+n^2$. 

\subsection{Creating the Map} \label{sub:creat}
We first describe our dataset of $460$ SR instances and then explain how we visualize it as a map.

\paragraph{Points on the Map -- Statistical Cultures.}
We use the following statistical cultures to generate SR instances. To the best of our knowledge, only the Impartial Culture, Attributes, Mallows, and Euclidean models have been previously considered. For all cultures we start by initializing a set $A$ of $n$ agents.
\begin{description}
\itemsep0em 
    \item[Impartial Culture (IC)] Agent $a\in A$ draws its preferences uniformly at random from $\mathcal{L}(A\setminus \{a\})$.
    \item[2-IC] Given parameter $p\in [0,0.5]$, we partition $A$ into two sets $A_1\cupdot A_2$ with $|A_1|=\floor{p\cdot |A|}$.  
    Each agent~$a\in A$ samples a preference order $\succ$ from $\mathcal{L}(A_1\setminus \{a\})$ and one order $\succ'$ from $\mathcal{L}(A_2\setminus \{a\})$. 
    If $a\in A_1$, then $a$'s preferences start with all agents from $A_1$ ordered according to $\succ$ and then all agents from $A_2$ ordered according to $\succ'$. 
    If $a\in A_2$, then it is the other way around, i.e., the preferences start with~$\succ'$ and end with~$\succ$.
    The intuition here is that there are two groups of different sizes (e.g., representing demographic groups), and each agent prefers all agents from its group to all agents from the other group, but preferences within the group are random. 
    \item[Mallows \cite{mal:j:mallows,DBLP:conf/ijcai/BoehmerBFNS21}]  In the original Mallows model, for a parameter $\phi\in [0,1]$ and a preference order $\succ^* \in \mathcal{L}(A)$, the Mallows distribution $\mathcal{D}_{\text{Mallows}}^{\succ^*, \phi}$ assigns preference order ${\succ}\in \mathcal{L}(A)$ a probability proportional to~$\phi^{\swap(\succ^*,\succ)}$.
    We use a normalized variant of Mallows model $\mathcal{D}_{\text{Mallows}}^{\succ^*, \normphi}$ proposed by \citet{DBLP:conf/ijcai/BoehmerBFNS21}  parameterized by a normalized dispersion parameter $\normphi$. 
    Sampling from $\mathcal{D}_{\text{Mallows}}^{\succ^*, \normphi}$, $\normphi$ is internally converted to a dispersion parameter $\phi$ such that the expected swap distance between $\succ^*$ and a sampled preference order from $\mathcal{D}_{\text{Mallows}}^{\succ^*, \phi}$ is $\normphi$ times $\frac{n(n-1)}{4}$. 
    Subsequently a preference order from $\mathcal{D}_{\text{Mallows}}^{\succ^*, \phi}$ is drawn.
    Then $\normphi=1$ corresponds to IC, $\normphi=0$ results in only $\succ^*$ being sampled and $\normphi=0.5$ results in preferences orders that lie in some sense exactly between the two.  
    Now, given a normalized dispersion parameter $\normphi \in [0,1]$, to generate an SR instance, we draw~$\succ^*$ uniformly at random from $\mathcal{L}(A)$.
    Afterwards, for each agent $a\in A$, we obtain its preferences by drawing a preference order from $\mathcal{D}_{\text{Mallows}}^{\succ^*, \normphi}$ and deleting $a$. 
    The intuition here is that there is a ground truth and agents have a given likelihood to deviate from the ground truth. 
    \item[Euclidean \cite{DBLP:journals/ipl/ArkinBEOMP09}] Given parameter $d\in \mathbb{N}$, for each agent $a\in A$, we uniformly at random sample a point $\mathbf{p}^a$ from $[0,1]^d$. Agent $a$ ranks other agents increasingly by the Euclidean distance between their points, i.e., by $\ell_2(\mathbf{p}^a, \mathbf{p}^b)$ for $b\in A\setminus \{a\}$. The intuition here is that each dimension represents some continuous property of the agents, and agents prefer similar agents. 
    \item[Reverse-Euclidean] Given parameters $p\in [0,1]$ and $d\in \mathbb{N}$, we partition $A$ into two sets $A_1\cupdot A_2$ with $|A_1|=\floor{p\cdot |A|}$.    
    As in the Euclidean model, each agent corresponds to some uniformly at random sampled point $\mathbf{p}^a$ from $[0,1]^p$ and ranks other agents according to their Euclidean distance. 
    However, here an agent $a\in A_1$ ranks agents decreasingly by their Euclidean distance to $\mathbf{p}^a$ and an agent $a\in A_2$ ranks agents increasingly by their Euclidean distance to $\mathbf{p}^a$. The intuition is similar to Euclidean, but a $p$-fraction of agents prefer agents that are different from them.
    \item[Mallows-Euclidean] Given a normalized dispersion parameter $\normphi\in [0,1]$ and some $d\in \mathbb{N}$, we start by generating agents' intermediate preferences $(\succ_a)_{a\in A}$ according to the Euclidean model with $d$ dimensions. 
    Subsequently, for each $a\in A$, we obtain its final preferences by sampling a preference order from $\mathcal{D}_{\text{Mallows}}^{\succ_a, \normphi}$. The resulting instances are perturbed Euclidean instances.
    \item[Expectations-Euclidean] Given parameters $d\in \mathbb{N}$ and $\std\in \mathbb{R}^+$, for each agent $a\in A$, we sample one point~$\mathbf{p}^a$ uniformly at random from $[0,1]^d$. 
    Subsequently, we sample a second point $\mathbf{q}^a$ from~$[0,1]^d$ using a $d$-dimensional Gaussian function with mean $\mathbf{p}^a$ and standard deviation $\std$.
    Agent~$a$ ranks the agents increasingly according to $\ell_2(\mathbf{p}^a, \mathbf{q}^b)$ for $b\in A\setminus \{a\}$. Again, agents are characterized by continuous attributes; however, their ``ideal'' points are not necessarily where they are. 
    \item[Fame-Euclidean] Given parameter $d\in \mathbb{N}$ and $f\in [0,1]$, 
    we sample for each agent $a\in A$ uniformly at random a point $\mathbf{p}^a\in [0,1]^d$ and a $f^a\in [0,f]$. 
    Agent $a$ ranks the other agents increasingly by~${\ell_2(\mathbf{p}^a, \mathbf{p}^b)-f^b}$ for $b\in A\setminus \{a\}$.   
    The intuition is similar as for Euclidean but certain agents have a generally higher quality/fame and are thus attractive partners independent of their location. 
    \item[Attributes \cite{DBLP:conf/soda/BhatnagarGR08}] Given parameter $d\in \mathbb{N}$, for each agent $a\in A$ we uniformly at random sample $\mathbf{p}^a\in [0,1]^d$ and $\mathbf{w}^a\in [0,1]^d$. 
    Agent $a$ ranks the other agents decreasingly by the inner product of $\mathbf{w}^a$ and~$\mathbf{p}^b$, i.e., by $\sum_{i\in [d]} \mathbf{w}^a_i \cdot {\mathbf{p}^b}_i$.
    The intuition is that there are different objective evaluation criteria, but agents assign a different importance to them.
    \item[Mallows-MD] Given a normalized dispersion parameter $\normphi\in [0,1]$, we start with the instance realizing the mutual disagreement matrix described in \Cref{sec:understanding}, i.e., for each $i\in [2n]$ the intermediate preferences $\succ_{a_i}$ of agent~$a_i$ are $ a_{i+1} \succ_{a_i} a_{i+2} \succ_{a_i} \dots \succ_{a_i} a_n \succ_{a_i} a_1 \succ_{a_i} a_2 \succ_{a_i} \dots \succ_{a_i} a_{i - 1}$.
     Subsequently, for each~$a_i\in A$, we obtain its final preferences by sampling a preference order from $\mathcal{D}_{\text{Mallows}}^{\succ_{a_i}, \normphi}$.
    The reason we consider this model is that it covers a part of the map that would otherwise remain uncovered. 
\end{description}

Our dataset consists of $460$ instances sampled from the above described statistical cultures. 
That is, we sampled $20$ instances for each of the following cultures: Impartial Culture, 2-IC with~$p \in \{0.25, 0.5\}$, Mallows with $\normphi \in \{0.2, 0.4, 0.6, 0.8\}$, Euclidean with $d\in \{1,2\}$,  Reverse-Euclidean with $d=2$ and $p \in \{0.05, 0.15, 0.25\}$,  Mallows-Euclidean with $d=2$ and $\normphi \in \{0.2, 0.4\}$, Expectations-Euclidean with $d=2$ and $\std \in \{0.2, 0.4\}$, Fame-Euclidean with $d=2$ and $f \in \{0.2, 0.4\}$, Attributes with $d\in \{2,5\}$, and Mallows-MD with $\normphi \in \{0.2, 0.4, 0.6\}$.
In addition, on our maps, we include the four extreme matrices described in \Cref{sec:understanding}. 

\paragraph{Drawing the Map.}
To draw a map of our dataset, we first compute for each pair of instances from our dataset their mutual attraction distance. 
Subsequently, we embed the instances into the two-dimensional Euclidean space. 
Our goal here is that each instance is represented by a point, and the Euclidean distance of two points on the map should reflect the mutual attraction distance between the two respective SR instances.
 To obtain the embedding, we use a variant of the force-directed Kamada-Kawai algorithm \cite{mt:sapala, DBLP:journals/ipl/KamadaK89}.\footnote{\citet{DBLP:conf/atal/SzufaFSST20} and \citet{DBLP:conf/ijcai/BoehmerBFNS21} used the closely related Fruchterman–Reingold algorithm.} 
The general idea here is that we start with an arbitrary embedding of the instances, then we add an attractive force between each pair of instances whose strength reflects their mutual attraction distance and a repulsive force between each pair ensuring that there is a certain minimum distance between each two points. 
Subsequently, the instances move based on the applied forces until a minimal energy state is reached.
We depict the map visualizing our dataset of $460$ instances for $200$ agents in \Cref{fig:mainMap}.\footnote{We focus on $200$ agents. Maps for different numbers of agents are available in the appendix and look quite similar.} 

To correctly interpret the map, we stress that our embedding algorithm does not optimize some global objective function, e.g., some summed absolute difference between the Euclidean distance of two points on the map and the mutual attraction distance between their respective instances. 
Instead, the visualization algorithm works in a decentralized fashion also aiming at producing a visually pleasing image. 
Consequently, the position of instances on the map can be different in different runs and certainly depend on which other instances are part of the map. 
Thus, in the following, if we say that two instances are close to each other, then we refer to their mutual attraction distance which is typically but not necessarily reflected by them being close on the map.
To verify the quality of the embedding, in the appendix, we compute the embedding's distortion and find that while the embedding is certainly not perfect, most of the distances are represented adequately.
We want to remark that some error is to be expected here as the space of SR instances under the mutual attraction distance is highly complex; however, the general picture the map provides is indeed correct and helpful to get an intuitive interpretation of experimental~results. 

\begin{figure*}[t]
    \centering
    \includegraphics[width=10cm]{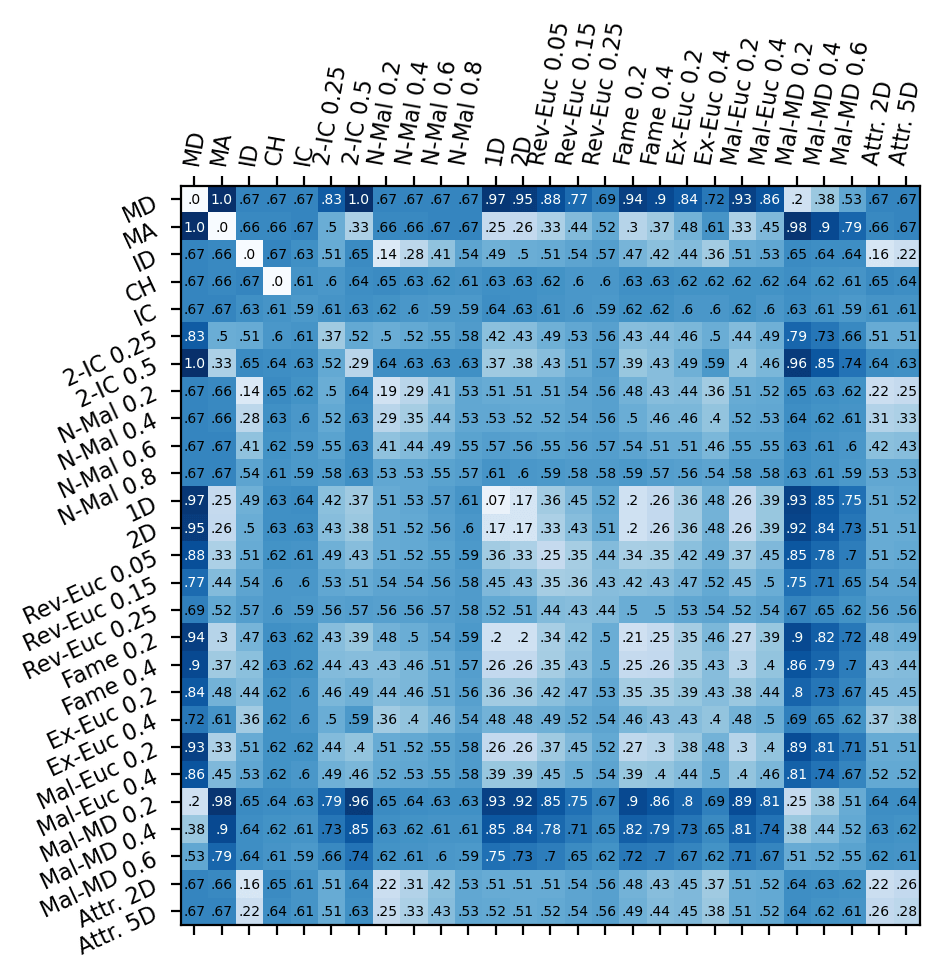}
    \caption{For each pair of statistical cultures, average mutual attraction distance between instances sampled from the two. The first four lines/columns contain, for each statistical culture, the average distance of instances sampled from that culture to our four extreme matrices. The diagonal contains the average distance of two instances sampled from the same statistical culture.}
    \label{fig:dis_matrix}
\end{figure*}

\begin{figure*}[t]
    \centering
         \begin{subfigure}[b]{\textwidth}
         \centering
        \includegraphics[trim={1.4cm 2.0cm 0.3cm 0.1cm},clip,width=8cm]{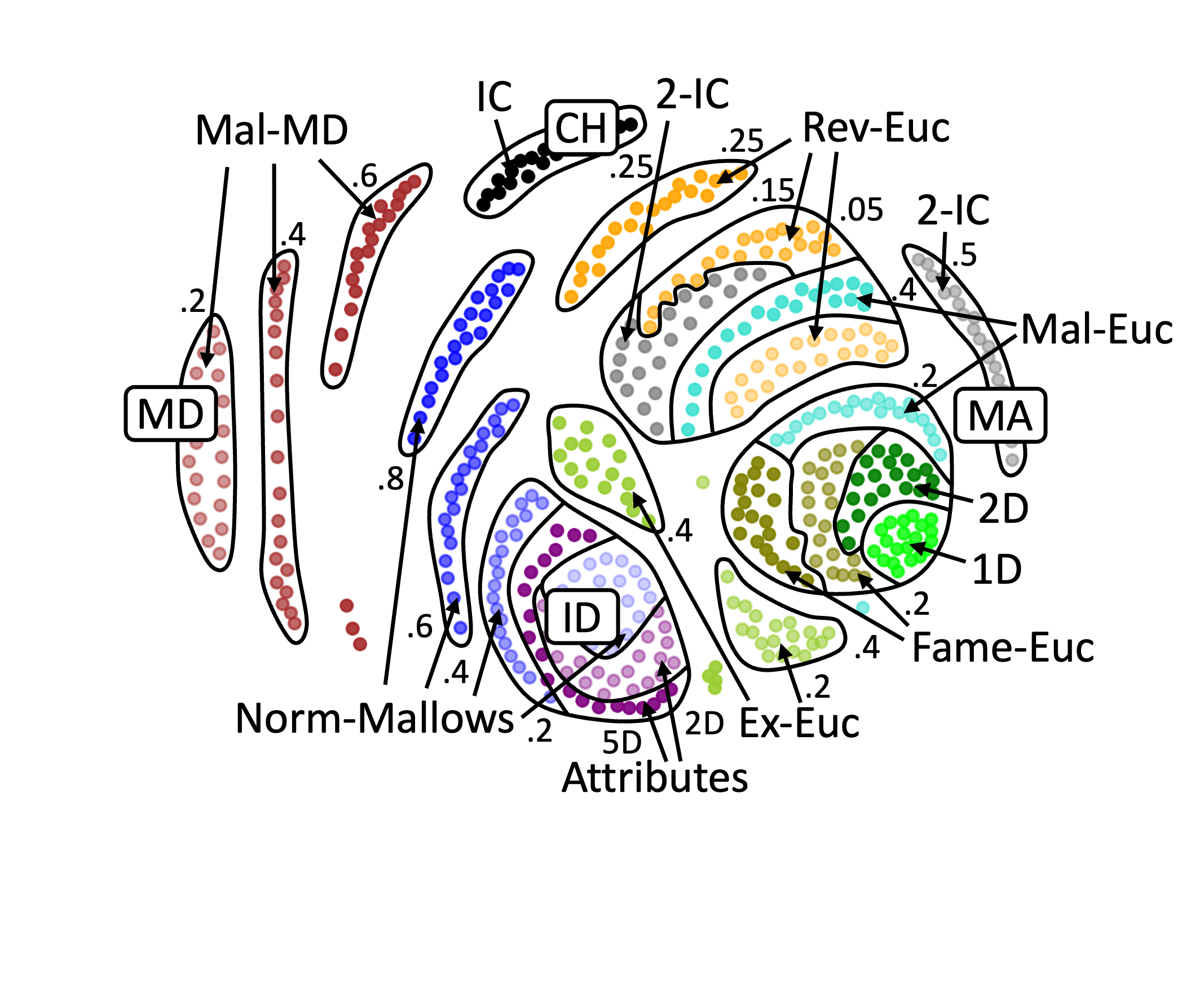}
         \caption{The color of a point indicates the statistical culture it was sampled from}
        \label{fig:mainMap}
     \end{subfigure} \\
     \begin{subfigure}[b]{0.47\textwidth}
         \centering
         \includegraphics[trim={1.4cm 0.1cm 0.3cm 0.1cm}, clip,width=7cm]{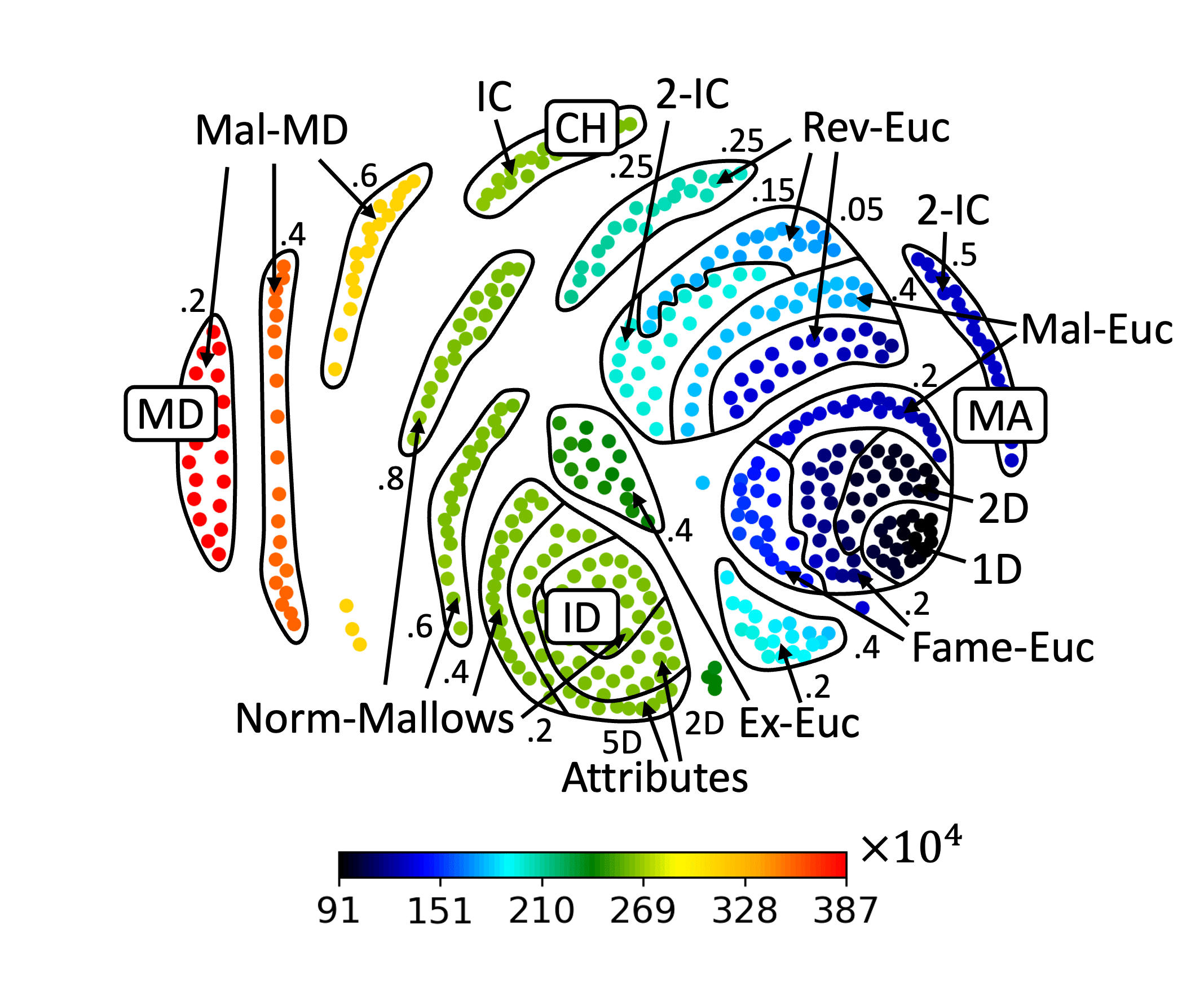}
         \caption{Mutuality}
         \label{fig:mutuality}
     \end{subfigure} \qquad
      \begin{subfigure}[b]{0.47\textwidth}
         \centering
         \includegraphics[trim={1.4cm 0.1cm 0.45cm 0.1cm},clip, width=7cm]{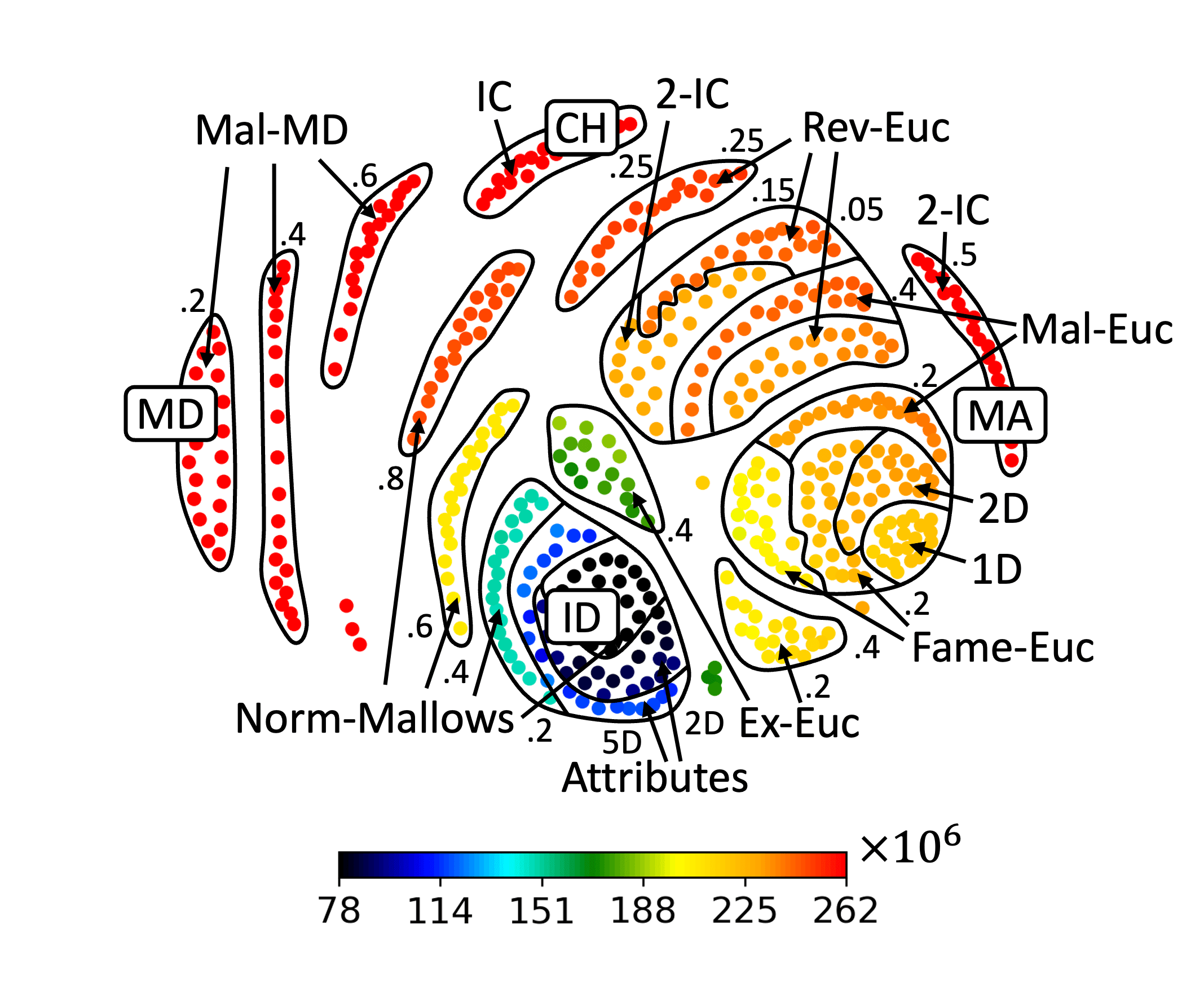}
         \caption{Rank distortion}
         \label{fig:rank_distortion}
     \end{subfigure}
      
    \caption{For \Cref{sec:map}, maps of $460$ SR instances for $200$ agents visualizing different quantities for each instance. Each instance is represented by a point. Roughly speaking, the closer two points are on the map, the more similar are the respective SR instances under the mutual attraction distance. }
    \label{fig:my_label_5}
    
\end{figure*}

\subsection{Understanding the Map} \label{sub:eval}
We now take a closer look at the map of SR instances shown in \Cref{fig:mainMap}.
Examining the map, what stands out is that for all cultures, instances sampled from this culture are placed close to each other on the map. 
This is also highlighted by the fact that for most cultures, we were able to draw a box around all instances from this culture and by the general island-like structure of the map. 
In fact, instances sampled from the same culture are usually similar to each other (or at least more similar to each other than to instances sampled from other cultures). 
While this is to be expected to a certain extent, this observation validates our approach in that the mutual attraction distance is seemingly able to identify the shared structure of instances sampled from the same statistical culture and in that our embedding algorithm is able to detect these clusters. 
Moreover, interestingly, the different statistical cultures have a different ``variation'', i.e., the average mutual attraction distance of two instances sampled from the culture substantially differ for the different cultures.
The Impartial Culture model has with $0.59$ the highest variation, while the Euclidean model for $d=1$ has with $0.07$ the lowest variation  (see \Cref{fig:dis_matrix}).
The value for Impartial Culture is quite remarkable, as it means that Impartial Culture instances are on average almost as far away from each other as, for instance, $\ID$ from the other extreme points.
Because of the limitations of two-dimensional Euclidean space, this is not adequately represented on the map, as Impartial Culture instances are still placed close to each other (the reason for this is that they are all even further away from instances sampled from the other cultures than from each other). 
Nevertheless, in the following experiments, we observe that Impartial Culture instances behave quite similarly. 

Taking a closer look at the map, we observe that our four extreme points indeed fall into four different parts of the map.
On the right, we have the mutual agreement matrix MA. 
Accordingly, models for which mutual agreement is likely to appear all land in the right part of the map, namely, Euclidean instances (where intuitively speaking agent $a$ likes agent $b$ if they are close to each other making it also likely that $b$ likes $a$), the Fame-Euclidean model for $f=0.2$, the Mallows-Euclidean model for $\normphi=0.2$, and the Reverse-Euclidean model for $p=0.05$ (these three are all basically differently perturbed variants of Euclidean models and consequently also on average slightly further away from $\sym$ than Euclidean instances), and the 2-IC model for $p=0.5$ (where we have some guaranteed level of mutual agreement because there are two groups of agents and agents from one group prefer each other to the agents from the other group). 
For all these cultures, the average normalized mutual attraction distance to MA is $0.33$ or smaller, with Euclidean for $d=1$ being the closest culture with an average of $0.25$ (we were slightly surprised by how small the difference between Euclidean for $d=1$ and 2-IC is here). 

On the left, we have the mutual disagreement matrix MD with only instances from the Mallows-MD model being close to it.
Note that, in general, it is to be expected that if we apply the Mallows model on top of some other model $\mathcal{X}$, then for small values of $\normphi$ the sampled instances are close to the ones from $\mathcal{X}$ but move further and further away as $\normphi$ grows (and naturally towards Impartial Culture instances as for $\normphi=1$ the models coincide).
Given that the mutual (dis)agreement matrices are at the two ends of the horizontal axis, this raises the question whether the horizontal axis can be indeed interpreted as an indicator for the degree of mutuality in SR instances. 
This hypothesis gets strongly confirmed in \Cref{fig:mutuality} where we color the points on the map according to their mutuality value, which we define as the total difference between the mutual evaluations of agent pairs, i.e., $\sum_{a\in A} \sum_{i\in [|A|-1]} |\MA(a,i)-i|$. 
The nicely continuous shading in \Cref{fig:mutuality} indicates a strong correlation between the mutuality value of an instance and its $x$-coordinate on the map. Moreover,  instances that are close on the map have indeed similar mutuality values.  
Moreover, the continuous coloring indicates that our dataset provides a good and almost uniform coverage of the space of SR instances (at least in terms of their mutuality value).

Turning to the middle part of the map, the identity matrix ID can be found at the bottom. 
Close to identity are instances from cultures where agent's quality is ``objective". Namely, Mallows model with $\normphi=0.2$ (where the preferences of agents are still expected to be close to the central order) and the Attributes model with $d=2$ (where each agent has two quality scores and the preferences of agents only differ in how they weight the quality scores).
The chaos matrix CH is placed in the top part of the map together with the ``chaotic'' Impartial Culture instances. 
Mallows instances naturally form a continuous spectrum between identity and chaos.
These observations give rise to the hypothesis that in instances placed at the bottom of the map most agents have similar preferences, while in instances placed at the top all agents have roughly the same quality and few agents are particularly (un)popular.
To quantify whether agents agree or disagree on the quality of the agents, we measure the rank distortion of an instance, i.e., for each agent we sum up the absolute difference between all pairs of entries in its mutual attraction vector $\sum_{a\in A} \sum_{i,j\in [|A|-1]} |\MA(a,i)-\MA(a,j)|$. 
Note that, for example, for an agent that is always ranked in the same position by all other agents this absolute difference is zero. 
We show in \Cref{fig:rank_distortion} a map colored by the rank distortion of instances.
The picture here is slightly different than for the horizontal axis in that instances with the same $y$ coordinate might still have a very different rank distortion. 
In fact, what we see here is that the further a point is from $\ID$ on the map, the larger is the rank distortion and thus the higher is the disagreement concerning agents quality (which is quite intuitive recalling that for both $\sym$ and $\asym$ the rank distortion is maximal). 

\section{Using the Map} \label{sec:using}
To illustrate the usefulness of the map to evaluate experiments and to check whether instances that are close to each other on the map have similar properties,  we perform multiple exemplary experiments.

\subsection{Blocking Pairs and Stable Matchings}\label{sec:BPS}
We start in this subsection by analyzing various properties related to the number of blocking pairs that block a matching. 
Specifically, we first compute for each SR instance the minimum number of blocking pairs for some matching, then the average number of blocking pairs for a random matching, and lastly the number of blocking pairs for a minimum-weight matching.
We visualize the results of our experiments in \Cref{fig:maps-bps}. 

\begin{figure*}[t]
	\centering
 \begin{subfigure}[t]{0.45\textwidth}
         \centering
         \includegraphics[trim={0.1cm 0.1cm 0.2cm 0.1cm}, clip,width=7cm]{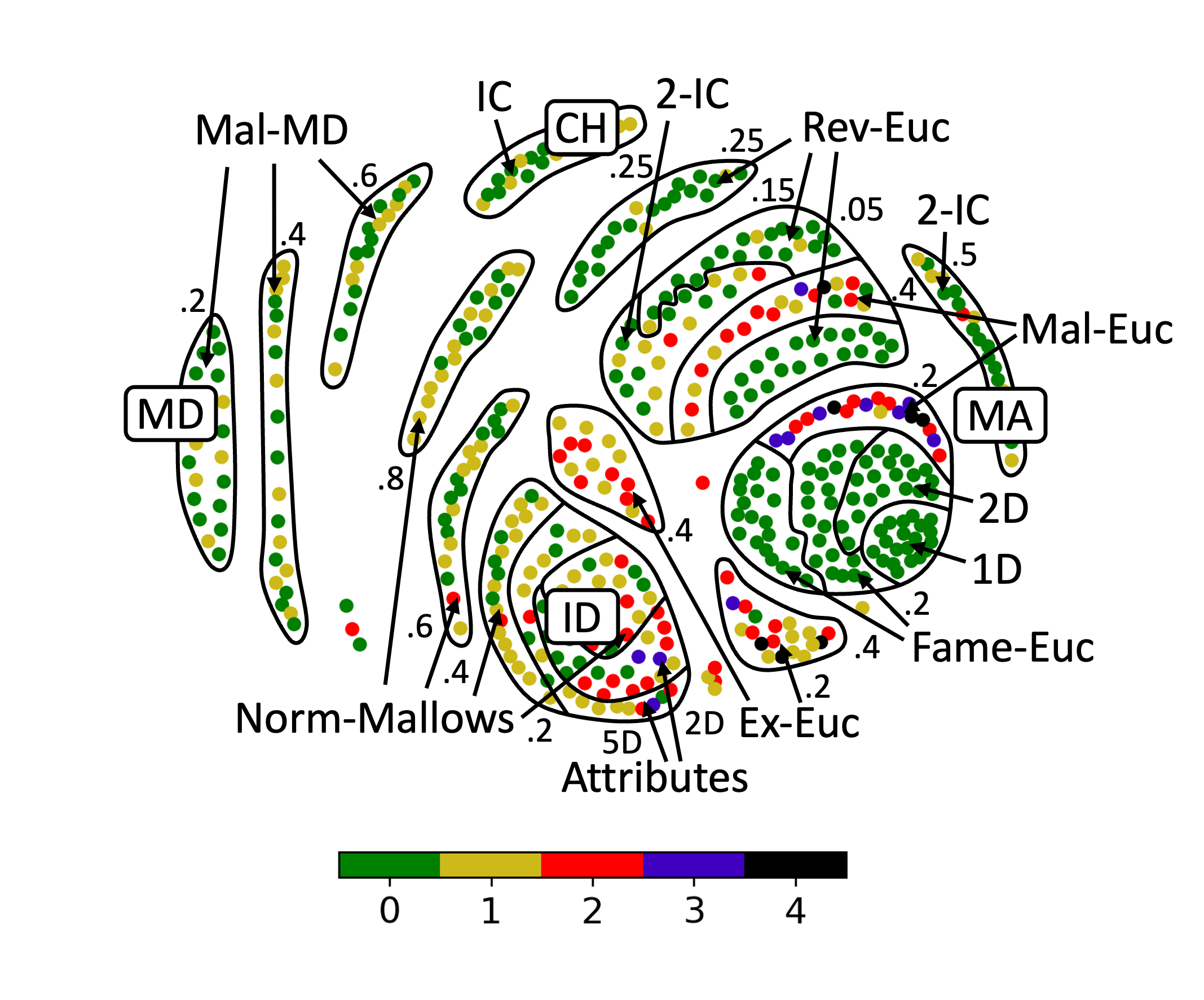}
        \caption{Minimum number of blocking pairs}\label{fig:min_num_bps}
     \end{subfigure} \qquad
  \begin{subfigure}[t]{0.45\textwidth}
         \centering
         \includegraphics[trim={0.3cm 0.1cm 0.16cm 0.1cm}, clip,width=7cm]{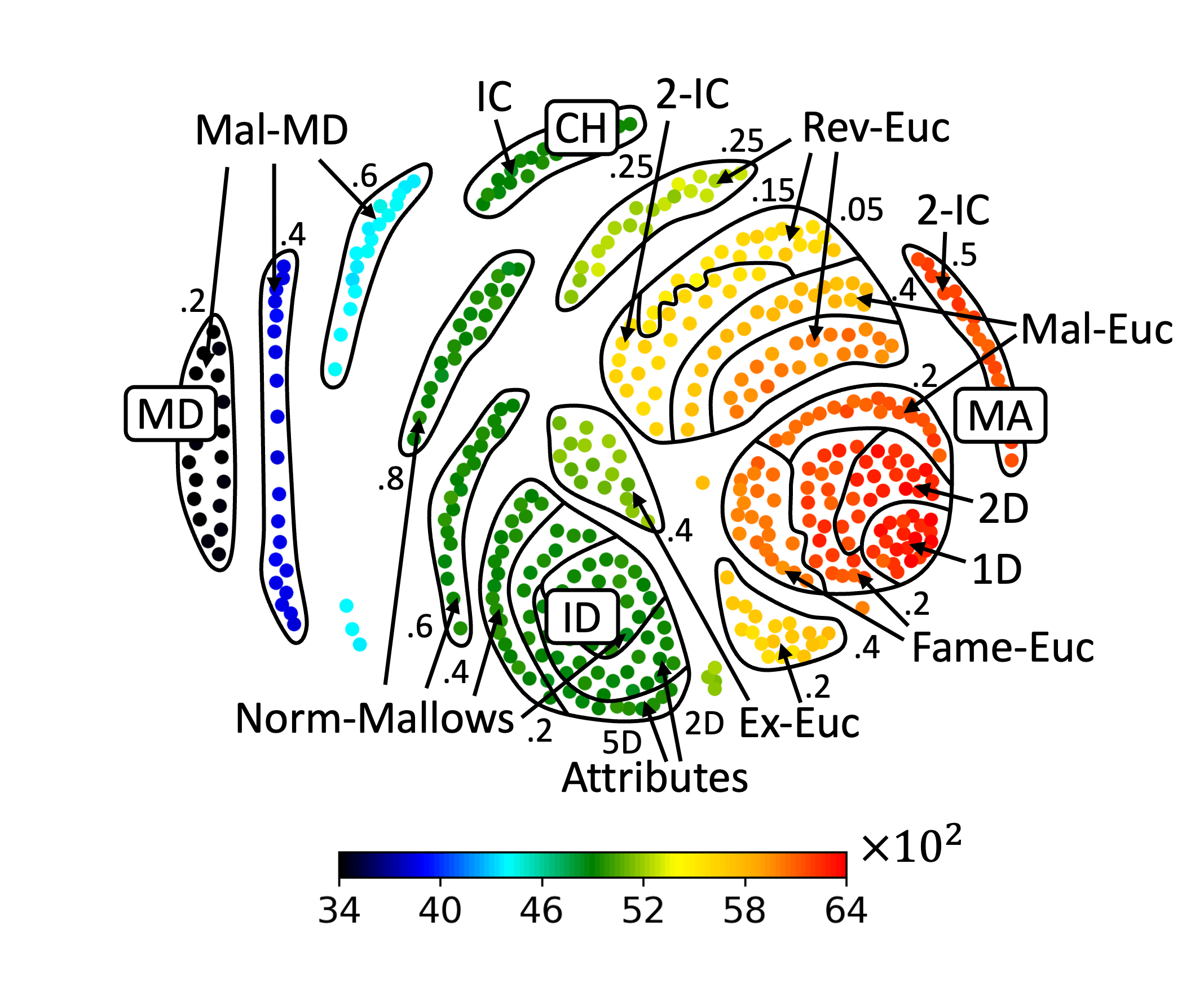}
         \caption{Average number of blocking pairs for a perfect matching}
         \label{fig:avg_num_of_bps}
     \end{subfigure}
      \begin{subfigure}[t]{1\textwidth}
         \centering
         \includegraphics[trim={1.82cm 0.1cm 0.1cm 0.1cm},clip,width=7cm]{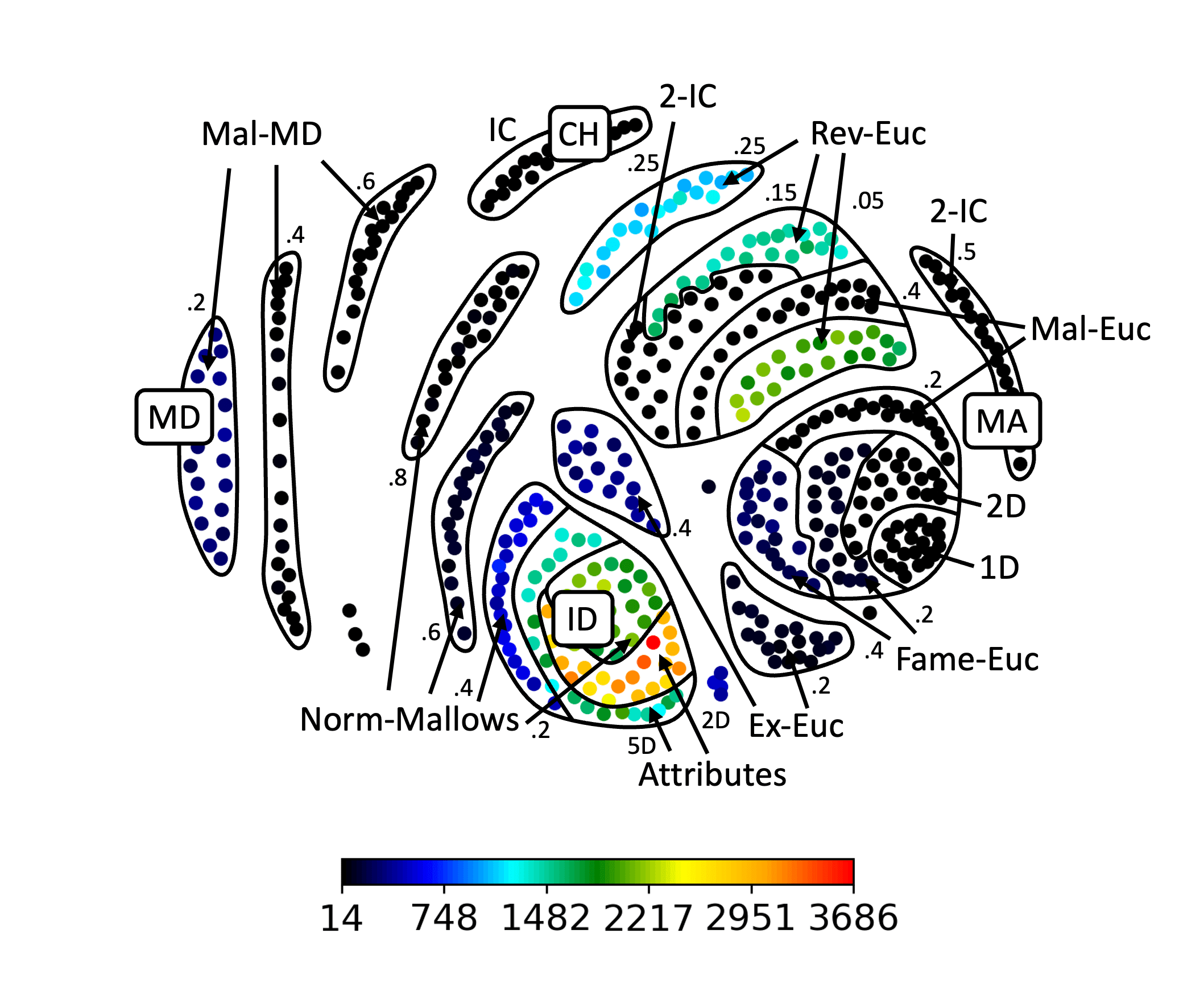}
        \caption{Number of blocking pair for a minimum-weight matching. The scale is logarithmic.}\label{fig:min-bp}
     \end{subfigure}
      
    \caption{For \Cref{sec:BPS}, maps of $460$ SR instances for $200$ agents visualizing different quantities for each instance.}
    \label{fig:maps-bps}
    
\end{figure*}

\subsubsection{Blocking Pair Minimizing Matching}\label{sec:min-BPS}
Naturally, the most important question related to an SR instance is whether the instance admits a stable matching or not. 
Slightly more nuanced, it is also possible to ask for a matching minimizing the number of blocking pairs.
As computing the minimum number of blocking pairs in an SR instance is NP-hard \cite{DBLP:conf/waoa/AbrahamBM05}, we solve this problem using an ILP.
We visualize the results of this experiment on the map in \Cref{fig:min_num_bps}.

First, considering which instances admit a stable matching (green points on the map), we do not see a clear correlation with the instance's position on the map. 
This is also quite intuitive, given that whether an instance admits a stable matching might depend on some local configuration. Such configurations can naturally not be fully captured in the mutual attraction matrix. 
However, what is clearly visible is that for different cultures the probability of admitting a stable matching is quite different: 
On the one hand, instances sampled from the Euclidean, Fame-Euclidean, and Reverse-Euclidean model almost always admit a stable matching (for the Euclidean model this is even guaranteed). 
On the other hand, instances sampled from the Mallows-Euclidean and Expectations-Euclidean model only very rarely admit a stable matching. 
The drastic contrast between the Euclidean model and the Mallows-Euclidean model with $\normphi=0.2$ and between the Reverse-Euclidean and Expectations-Euclidean model are quite remarkable, as they are conceptually quite similar. 

However, moving to the minimum number of blocking pairs, the picture becomes more uniform:.
A large majority of the map (and cultures) solely consist of SR instances where the minimum number of blocking pairs is at most one (recall that all our experiments here are for $n=200$ agents). 
However, in instances sampled from the Mallows-Euclidean and Expectations-Euclidean model (which only rarely admit a stable matching)  the minimum number of blocking pairs is often two or more. 
Some further such instances can be found close to identity. 

Overall, the minimum number of blocking pairs clearly depends on the model from which the relevant SR instance was sampled, leading to a clustering of (very close to) stable instances on the map. 
However, there are also regions on the map exhibiting a mixed picture, for instance the regions around $\MA$ and $\ID$; interestingly, it seems that while highly structured instances always admit a stable matching (like Euclidean instances where this is even guaranteed to be the case), slightly perturbing these instances leads to an increase in the number of blocking pairs.
Lastly, it is remarkable that in all our $460$ SR instances a matching admitting at most four blocking pairs exists, indicating that instances which are ``far away'' from stability are quite exceptional. 

\subsubsection{Expected Number of Blocking Pairs} \label{sub:expNumberBPs}
Motivated by the fundamental importance of blocking pairs for stable matchings, we measure the expected number of blocking pairs for an arbitrary perfect matching.
For this, for each instance, we sampled $100$ perfect matchings uniformly at random and for each counted the number of blocking pairs. 
The results are depicted in \Cref{fig:avg_num_of_bps}.
As for the mutuality value, we get a nicely continuous shading along the horizontal axis. 
This clear correlation with the mutuality value is quite intuitive as in case there is a high mutual agreement agents are also more likely to form blocking pairs (if an agent $a$ prefers an agent $b$ to its current partner, then because the mutuality is high $b$ also tends to like $a$ and tends to prefer $a$ to its current partner); if there is mutual disagreement, the picture is reversed (an agent prefers the agents to its current partner that tend to dislike it). 
This is also clearly visible in \Cref{fig:avg_num_of_bps}, as instances close to $\asym$ have a low expected number of blocking pairs, whereas for instances close to $\sym$ the expected number is much higher.  Moreover, \Cref{fig:avg_num_of_bps} again validates that instances that are close to each other on the map have similar properties and that our test dataset provides a good and uniform coverage of the space of SR instances. 

\subsubsection{Number of Blocking Pairs for Minimum-Weight Matching} \label{sub:min-weightBPs}
We define the minimum-weight matching $M$ in an instance as the perfect matching minimizing the summed rank that agents assign to their partner, i.e., $M$ minimizes $\sum_{a\in A} \pos_{\succ_a}\big(M(a)\big)$. 
If stability is not vital or if a stable matching might be too complicated to compute, a minimum-weight matching is a natural candidate matching to choose and might even serve as a heuristic for choosing a stable matching. 
Thus, it is an interesting question ``how'' stable such minimum-weight matchings are.
We depict in \Cref{fig:min-bp} the number of pairs that block a minimum-weight matching for all instances from our dataset.
Analyzing the results, we find that for almost all of our instances from the dataset, a minimum-weight matching is only blocked by few pairs. 
There are two exceptions: 
First, instances sampled from the Reverse-Euclidean model and instances close to identity. 
For both of these types of instances, the number of pairs blocking the minimum-weight matching can be quite substantial. 
For Reverse-Euclidean, we see that these instances behave very differently than instances close to them sampled from different models. 
Remarkably, for this model the higher $p$ gets, the lower gets the number of pairs that block a minimum-weight matching. 
For the identity region, we see that in all instances from this region is the minimum-weight matching blocked by many pairs. 
What stands out from the map again is that instances sampled from the same model exhibit a very uniform behavior. 
Overall, the outlier-behavior of Reverse-Euclidean (as in \Cref{fig:summed_rank_value}) underlines that the map is not perfect and our mutual attraction distance and matrix (naturally) do not capture all facets of similarity. 
Nevertheless, the other results highlight the usefulness of the map as a visualization tool and an intuition provider. 

\subsection{Different Types of Stable Matchings} \label{sec:typesOfSM}

\begin{figure*}[t]
	\centering
     \begin{subfigure}[t]{0.45\textwidth}
         \centering
        \includegraphics[trim={0.1cm 0.1cm 0.2cm 0.1cm}, clip,width=7cm]{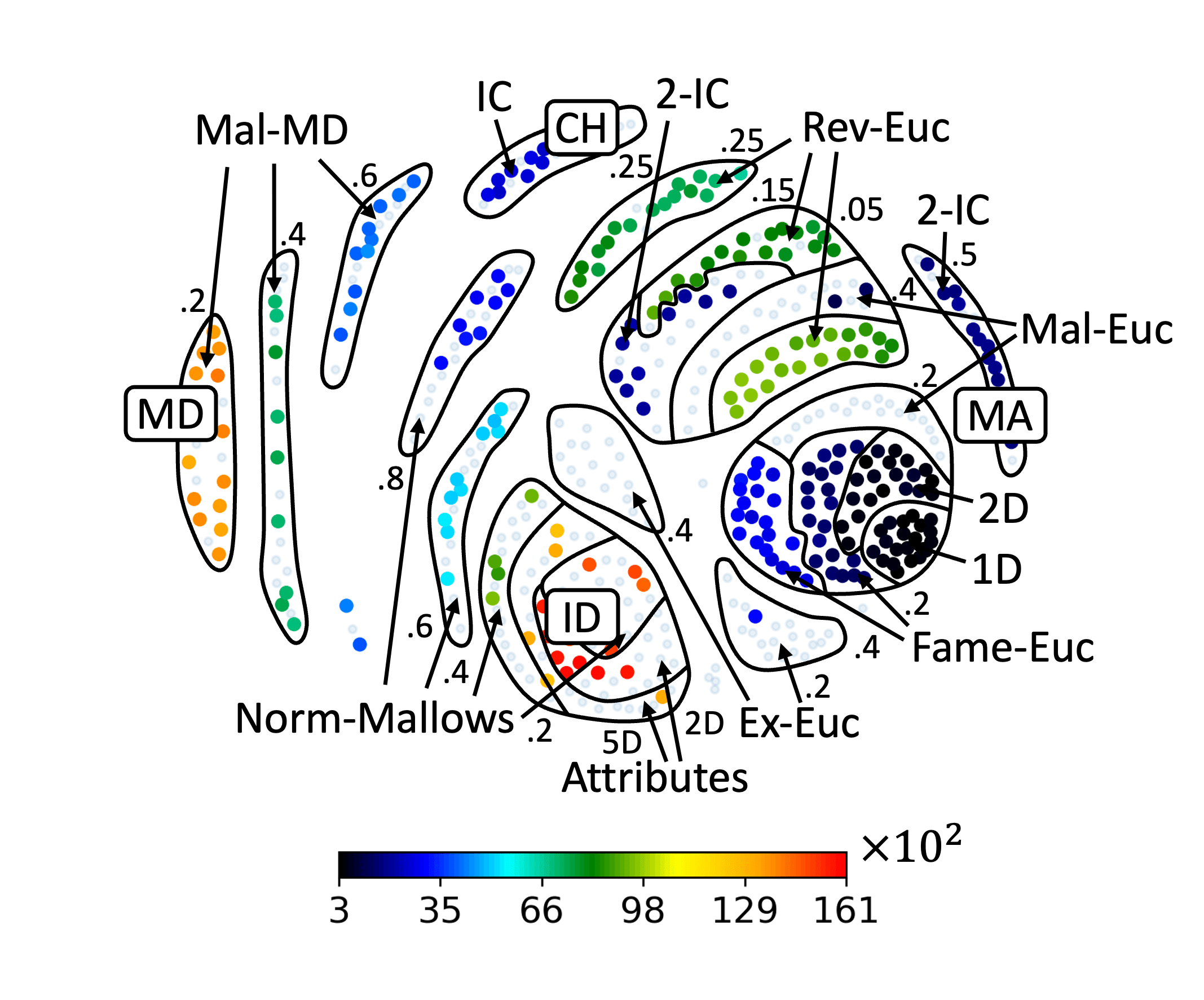}
         \caption{Minimum summed rank of agents for partner in any stable matching (transparent points have no stable matching)}
         \label{fig:summed_rank_value}
     \end{subfigure} \qquad
      \begin{subfigure}[t]{0.45\textwidth}
         \centering
        \includegraphics[trim={0.1cm 0.1cm 0.1cm 0.1cm}, clip,width=7cm]{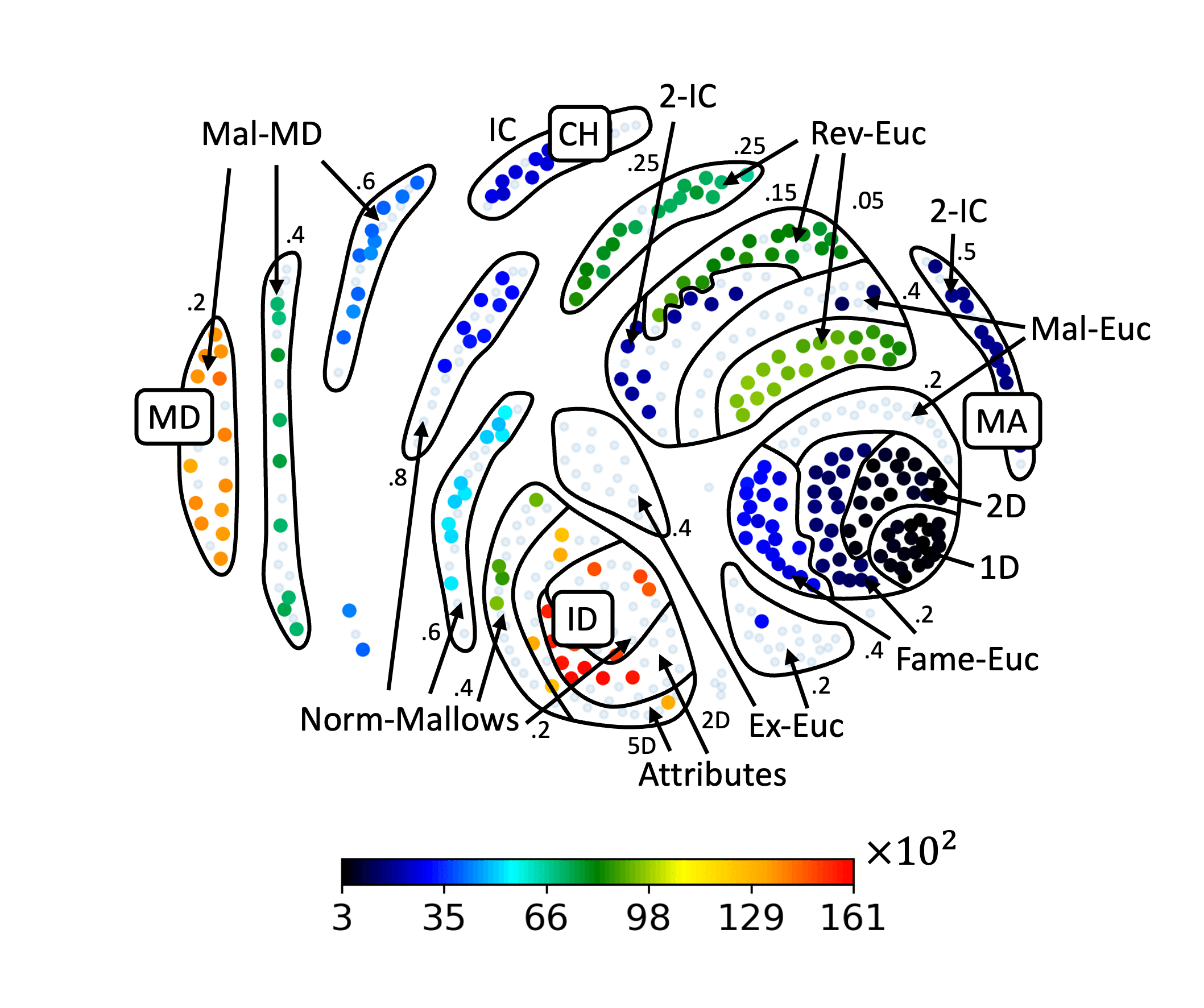}
    \caption{Maximum summed rank of agents for partner in any stable matching (transparent points do not admit a stable matching)}
  \label{fig:max-rank}
     \end{subfigure}
     \begin{subfigure}[t]{1\textwidth}
         \centering
        \includegraphics[trim={0.1cm 0.1cm 0.1cm 0.1cm},clip,width=7cm]{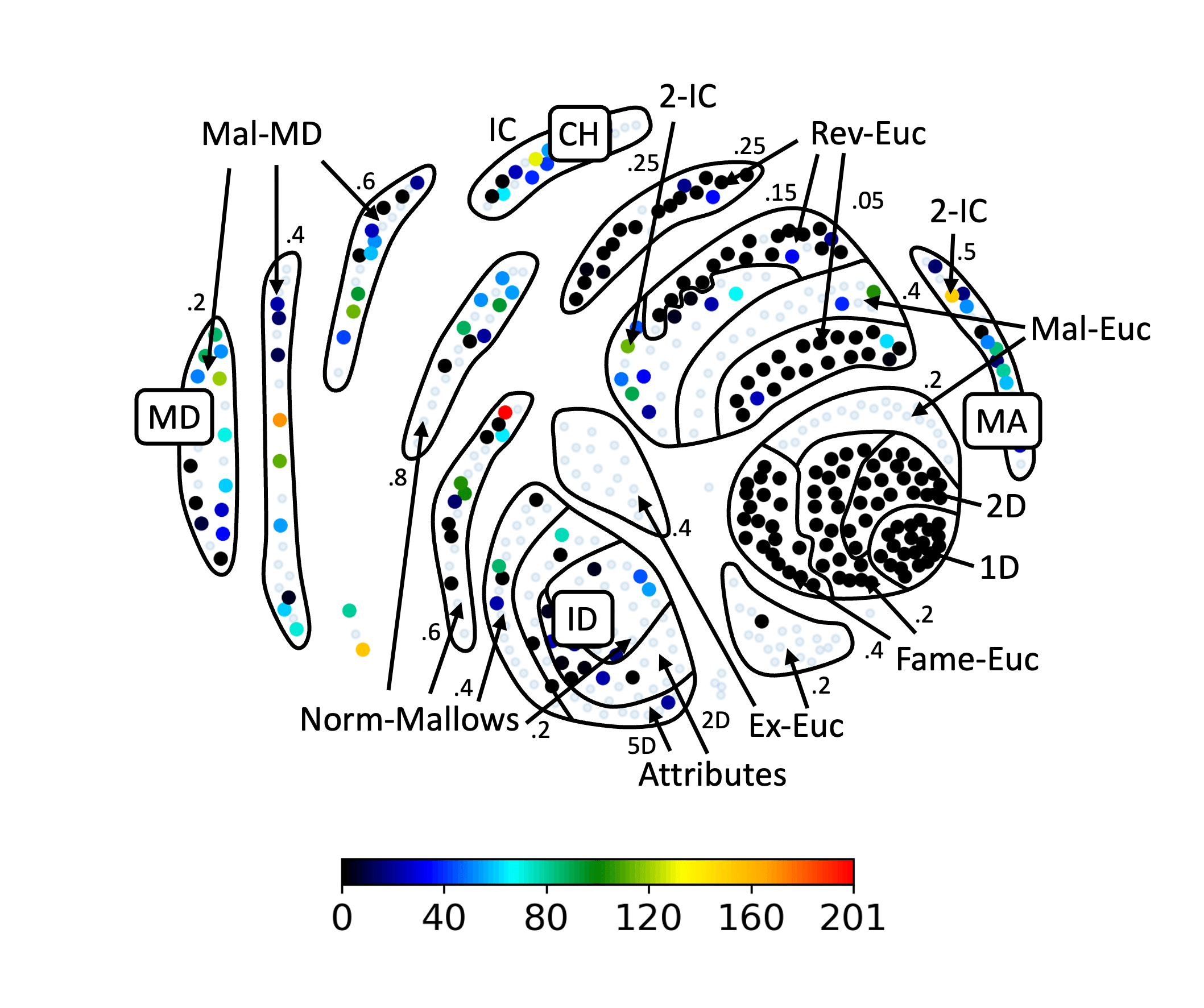}
  \caption{Difference between maximum and minimum summed rank of agents for partner in any stable matching (transparent points do not admit a stable matching)}
  \label{fig:diff-rank}
     \end{subfigure}
    \caption{For \Cref{sec:typesOfSM}, maps of $460$ SR instances for $200$ agents visualizing different quantities for each instance.}
    \label{fig:my_label_5}
    
\end{figure*}

In this section, we restrict our focus to instances that admit a stable matching. 
For them we compute different types of stable matchings maximizing certain objectives and compare their quality. 

\subsubsection{Summed Rank Stable Matchings}\label{sub:summed-rank}

\paragraph{Summed Rank Minimal Matching.}
We start by analyzing summed rank minimal stable matchings, i.e., stable matchings $M$ minimizing $\sum_{a\in A} \pos_{\succ_a}\big(M(a)\big)$ (these matchings are also sometimes called egalitarian matchings). 
Such a matching can also be interpreted as a matching maximizing the summed satisfaction of agents and is thus a natural candidate to pick if multiple stable matchings exist.
However, computing it is NP-hard \cite{DBLP:journals/jcss/Feder92} and thus we resorted to an ILP.
We visualize the quality of summed rank minimal stable matchings in \Cref{fig:summed_rank_value} and depict instances without a stable matching as transparent points.
Focusing on instances that admit a stable matching, first observe that instances sampled from one culture again behave remarkably similarly. 
In addition, there is some but certainly not a perfect correlation between the results and instances' position on the map: 
Ignoring Reverse-Euclidean instances which are a clear outlier here, if we move from chaos to mutual agreement the minimal summed rank decreases (as for perfect mutual agreement every agent can be matched to its top-choice); in contrast, 
if we move from chaos to mutual disagreement or from chaos to identity, then the minimal summed rank constantly increases. 
Remarkably, instances close to identity have a higher minimum summed rank than instances close to mutual disagreement, while for instances realizing the two matrices, the minimum summed rank for an instance with $2n$ agents is $2n^2$.

\paragraph{Summed Rank Maximal Matching.}
We now consider the in some sense opposite objective. 
That is, we analyze the stable matching that maximizes the rank that agents assign to their partner (this is the in some sense worst stable matching that minimizes agent's satisfaction). 
We visualize the results in \Cref{fig:max-rank}. 
While there is no simple pattern visible on the map, disregarding Reverse-Euclidean instances, there is a clear correlation of instance's behavior and their position on the map. 
The behavior can be nicely described by moving along the extreme matrices. 
Moving from mutual disagreement to chaos, the maximum summed rank constantly decreases and if we move further towards identity (ignoring Reverse-Euclidean instances) it decreases even further. 
If we move from chaos to identity, the maximum summed rank substantially increases, while moving from identity to mutual disagreement it first decreases and then increases again. 

\begin{figure}[t]
\centering
\includegraphics[trim={0.1cm 0.1cm 0.15cm 0.1cm}, clip,width=7cm]{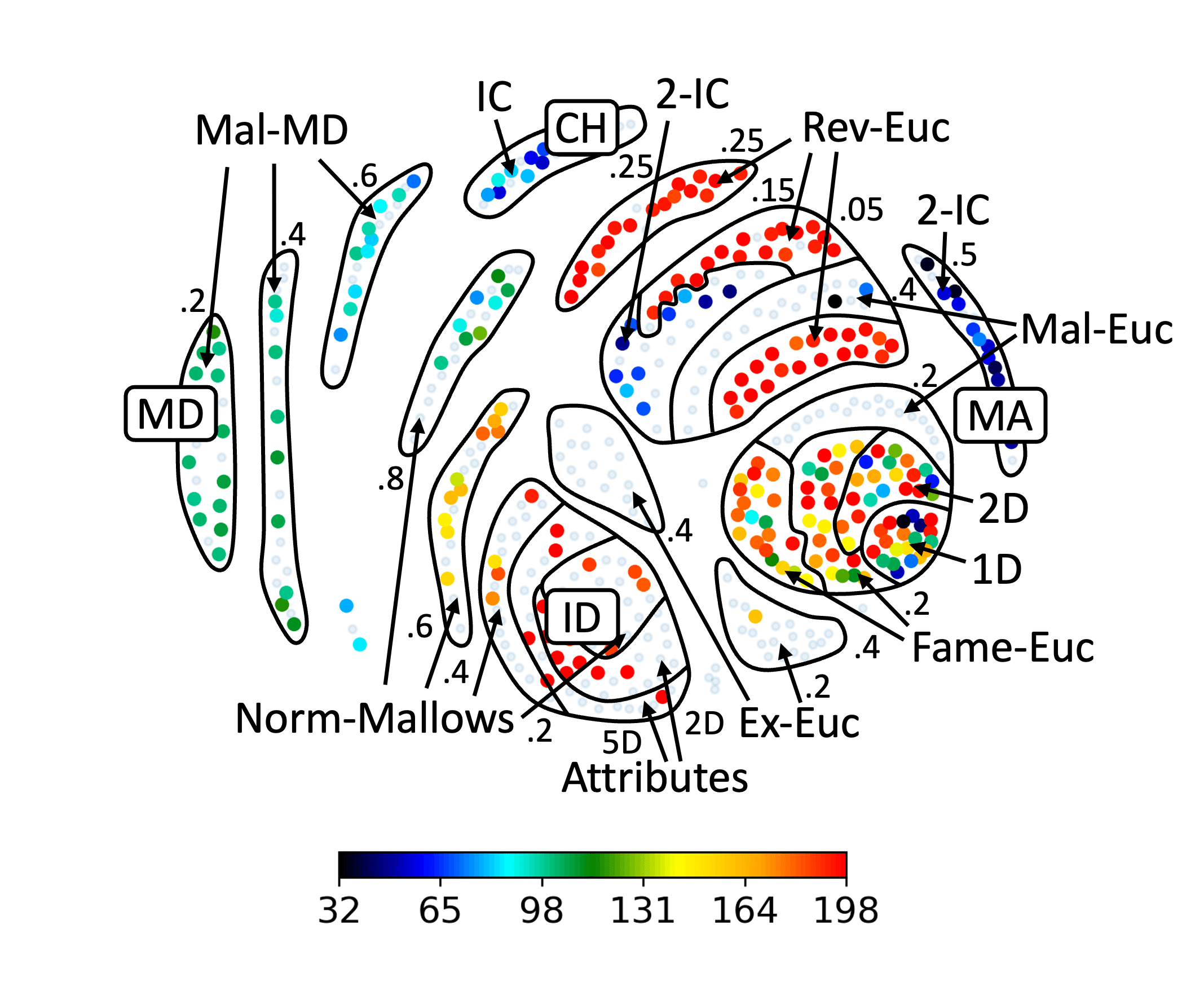}
\caption{Maximum rank an agent assigns to its partner in a stable matching minimizing this value (transparent points do not admit a stable matching)}\label{fig:maxrank}
\end{figure}

\paragraph{Difference Summed Rank Minimal and Maximal Stable Matching.} Lastly, we ask the question how large the influence of the selected stable matching is.
So how much does it matter which matching is selected?
We restrict our focus to the summed agent's satisfaction and quantify this influence as the difference between the maximum and minimum summed rank of agents for their partner in a stable matching.
This quantity might also serve as an indicator for the ``richness'' of the set of stable matchings. 
We present the results in \Cref{fig:diff-rank}. 
Remarkably, this is the first of our maps where we see a very different behavior of instances sampled from the same culture. 
However, this might be due to the fact that in terms of the overall satisfaction of agents it does not seem to make a substantial difference in most cases in which stable matching is selected. 
Notably, for numerous instances it makes nearly no difference, especially for instances sampled from the Euclidean model or similar models. 
The latter observation is quite intuitive, as in Euclidean instances there only exists a single stable matching. 

\subsubsection{Maximal Rank Minimizing Stable Matching} \label{sub:maxrank}

In our last experiment about different types of stable matchings, we consider the stable matching that maximizes the satisfaction of the agent worst off.
That is, we consider the maximum rank an agent assigns to its partner, i.e., $\max_{a\in A} \pos_{\succ_a}(M(a))$ in a stable matching $M$ minimizing this value. 
This matching is also know as a minimum regret stable matching and can be computed in linear time \cite{DBLP:books/daglib/0066875}. 
This matching is another naturally attractive special stable matching. 
We show the results in \Cref{fig:maxrank}.
Examining the map, the first remarkable observation is that instances which are close to Euclidean instances may behave very differently, even if they are sampled from the same statistical culture.
A possible explanation for this is that in those instances stable matchings are often unique leaving little flexibility to satisfy the agent worst-off. 
Moreover, again, clear patterns on the map can be identified. 
For impartial culture instances and 2-IC instances the satisfaction of the worst-off agent is quite high. 
For instances close to mutual agreement the picture is quite mixed, while for instances close to identity, it is not possible to satisfy all agents adequately (this is actually quite intuitive because someone needs to be matched to the agents that are collectively considered to have a low quality in such instances).
Moving from identity to mutual disagreement or chaos the situation of the worst-off agent constantly improves. 

\subsection{Running Time Analysis}

\begin{figure}
         \centering
         \includegraphics[trim={1.4cm 0.1cm 0.45cm 0.1cm}, clip,width=7cm]{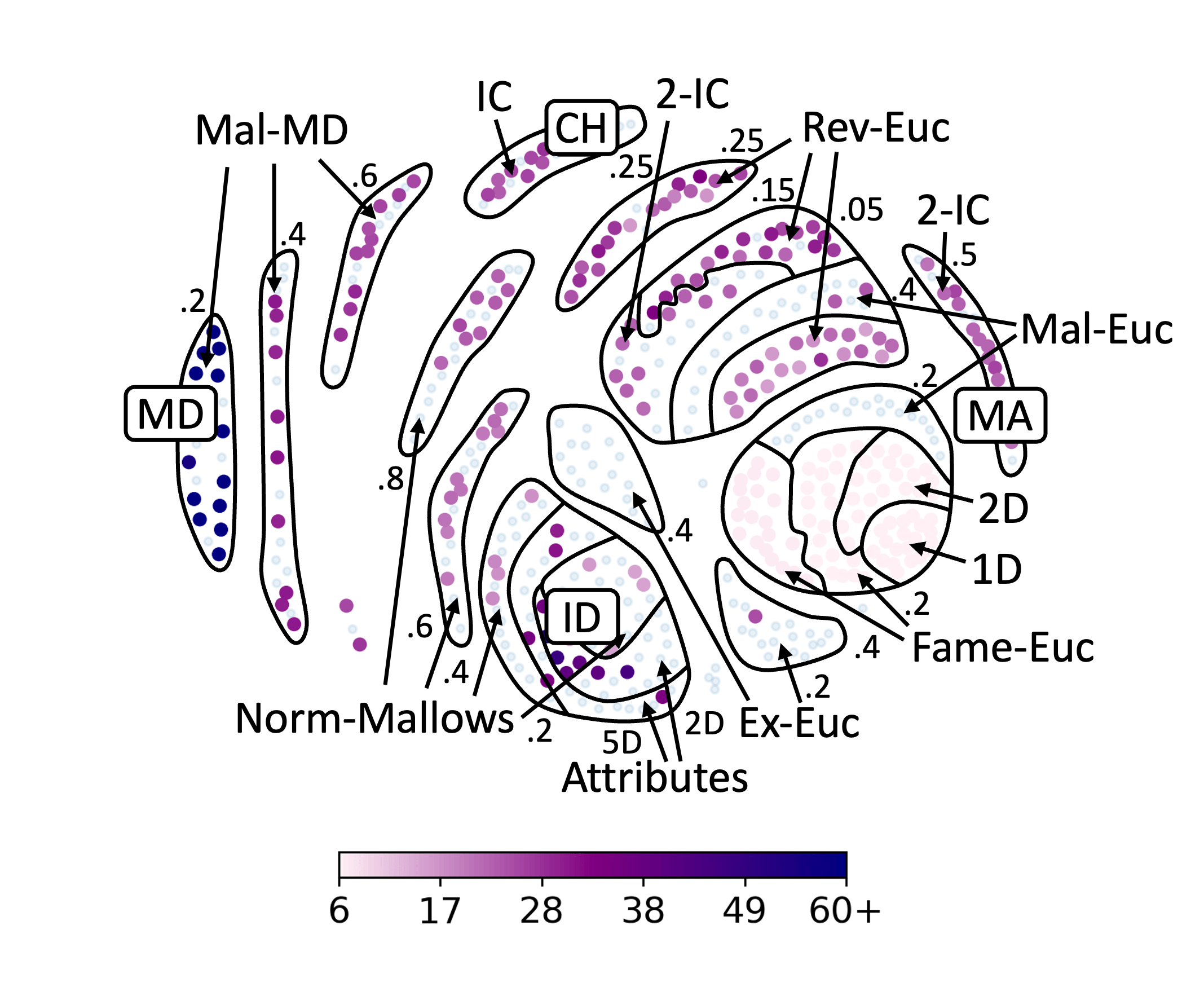}
         \caption{Seconds needed to compute summed rank minimal stable matching (transparent points have no stable matching)}
         \label{fig:summed_rank_time}
\end{figure}

Lastly, to illustrate another possible application of the map, in \Cref{fig:summed_rank_time} we visualize the time our ILP, which we solved using \citet{gurobi}, needed to find a summed rank minimal stable matching (from \Cref{sub:summed-rank}). 
Analyzing the results, again instances from the same culture behave quite similar to each other and the results are clearly connected to instances' position on the map. 
More specifically, instances from the Euclidean and Fame-Euclidean model seem to be particularly easy to solve, whereas instances close to $\ID$ and close to $\asym$ seem to be particularly challenging, maybe because here the achievable minimum summed rank is quite high.
Remarkably, for election-related problems typically Impartial Culture elections are most challenging and the more structure there is in an election, the easier it is to solve \cite{DBLP:conf/atal/SzufaFSST20}. In sharp contrast to this, we observe that instances close to $\ID$ and $\asym$, which are both heavily structured, are particularly challenging.
We remark that naturally our observations on which instances are easy and which are hard are limited to the specific problem and solution method.

\section{Outlook: A Map of Stable Marriage Instances} \label{sec:map-of-SM}
The framework developed in this paper to draw a map of synthetic \textsc{Stable Roommates} (\textsc{SR}) instances can also be applied to different types of matching under preferences problems. 
In this section, we demonstrate how this can be done for the \textsc{Stable Marriage} (\textsc{SM}) problem focusing on describing the adjustments necessary compared to the discussed \textsc{SR} setting. 
\paragraph{Stable Marriage Instances.}
A \textsc{Stable Marriage (SM)} instance $\mathcal{I}$ consists of a set~$A$ of agents partitioned into two sets $U$ and $W$, which are traditionally referred to as men and women, respectively.
We assume for simplicity that $|U|=|W|$.
Each man $u\in U$ has a preference order~$\succ_u \in \mathcal{L}(W)$ over all women and each woman $w\in W$ has a preference order $\succ_w \in \mathcal{L}(U)$ over all men. 
A matching is a subset $M$ of man-woman pairs where each agent appears in at most one pair and a matching is stable if no man-woman pair exists preferring each other to their assigned partner.
Note that a stable matching is guaranteed to exist in every \textsc{SM} instance. 

\paragraph{Mutual Attraction Distance between SM Instances.}
Let $\mathcal{I}=(U=\{u_1,\dots, u_n\}, W=\{w_1,\dots w_n\}, (\succ_u)_{u\in U}, (\succ_w)_{w\in W})$ be an SM instance. 
Recall that for some $i\in [2n-1]$ and $a\in U\cup W$, $\MA_{\mathcal{I}}(a,i)$ is the position of $a$ in the preference order of the agent~$a'$, which is ranked in position~$i$ by $a$, i.e., $\MA_{\mathcal{I}}(a,i)=\pos_{\succ_{a'}}(a)$ where~$a' := \ag_{\succ_a}(i)$.
The mutual attraction vector of an agent~$a\in U\cup W$ is $\MA_{\mathcal{I}}(a)=\big(\MA_{\mathcal{I}}(a,1),\dots, \MA_{\mathcal{I}}(a,n)\big)$.

For each instance $\mathcal{I}$, we define two mutual attraction matrices:  $\MA_{\mathcal{I},U}$ and $\MA_{\mathcal{I},W}$. 
$\MA_{\mathcal{I},U}$ is the mutual attraction matrix of men. 
The $i$-th row of $\MA_{\mathcal{I},U}$ is the vector $\MA_{\mathcal{I}}(u_i)$. 
$\MA_{\mathcal{I},W}$ is the mutual attraction matrix of women. 
The $i$-th row of $\MA_{\mathcal{I},W}$ is the vector  $\MA_{\mathcal{I}}(w_i)$.
Thus, notably, an SM instance does not correspond to a single mutual attraction matrix but a pair of mutual attraction matrices. 
Consequently, in the case of SM instances not a single mutual attraction matrix, but a pair of mutual attraction matrices $(A,B)$ is called \emph{realizable} if there is an SM instance $\mathcal{I}$ defined over a set of men $U$ and women $W$ with $A=\MA_{\mathcal{I},U}$ and $B=\MA_{\mathcal{I},W}$.

The mutual attraction distance between two SM instances $\mathcal{I}$ with agents $U\cupdot W$ and $\mathcal{I}'$ with agents $U'\cupdot W'$ with $|U|=|W|=|U'|=|W'|=n$ is defined as:
\begin{align*}
     & \min \Big(\retrodist(\MA_{\mathcal{I},U},\MA_{\mathcal{I}',U'})+\retrodist(\MA_{\mathcal{I},W},\MA_{\mathcal{I}',W'}),\\
    & \retrodist(\MA_{\mathcal{I},U},\MA_{\mathcal{I}',W'})+\retrodist(\MA_{\mathcal{I},W},\MA_{\mathcal{I}',U'})\Big)
\end{align*}
where for two $n\times n$ mutual attraction matrices $A$ and $B$ their mutual attraction distance $\retrodist(A,B)$ is still defined as: 
$\min_{\sigma\in \Pi([n],[n])} \sum_{i\in [n]} \ell_1\big(A_i, B_{\sigma(i)}\big)$.
In particular, we do not fix that the ``women'' in one instance should be matched to the ``women'' in the other instance (as in one-to-one applications the two sides are often in some sense exchangeable), but instead map the two sides to each other such that the resulting distance is minimized. 

\paragraph{Navigating the Space of SM Instances.}
Also for \textsc{SM} instances, it will prove useful to identify ``canonical" pairs of extreme mutual attraction matrices. 
The first three extreme matrices identified for \textsc{SR} instances are still clearly relevant here: 
Identity here corresponds to the situation where all women have the same preferences over the men and all men have the same preferences over the women. 
This results in the following pair of matrices:
$$\Big(\begin{bmatrix}
 1 & \dots & 1 \\
 2 & \dots & 2 \\
  & \vdots &  \\
n & \dots & n \\
\end{bmatrix}, 
\begin{bmatrix}
 1 & \dots & 1 \\
 2 & \dots & 2 \\
  & \vdots &  \\
n & \dots & n \\
\end{bmatrix}\Big)
$$

For mutual agreement, we still require that if agent $a$ ranks agent $b$ in position $i$, then $b$ also ranks $a$ in position $i$. 
This results in the following pair of matrices: 
$$\Big(\begin{bmatrix}
 1 & 2 & \dots & n-1 & n\\
 1 & 2 & \dots & n-1 & n\\
   &  & \vdots &  & \\
1 & 2 & \dots & n-1 & n\\
\end{bmatrix}, 
\begin{bmatrix}
 1 & 2 & \dots & n-1 & n\\
 1 & 2 & \dots & n-1 & n\\
   &  & \vdots &  & \\
1 & 2 & \dots & n-1 & n\\
\end{bmatrix}\Big)
$$

Notably, this pair of matrices is realizable. We can simply partition a complete bipartite graph with $n$ vertices on each side into $n$ perfect matchings $M_1,\dots, M_n$, where $M_i$ determines the position~$i$ in the agent's preferences. 
By Hall's theorem, such a partition into perfect matchings always exits. 

For mutual disagreement, we analogously require that if $a$ ranks agent $b$ in position $i$, then $b$ also ranks $a$ in position $n-i+1$. This results in the following pair of matrices: 
$$\Big(\begin{bmatrix}
 n & n-1 & \dots & 2 & 1\\
  n & n-1 & \dots & 2 & 1\\
   &  & \vdots &  & \\
 n & n-1 & \dots & 2 & 1\\
\end{bmatrix}, 
\begin{bmatrix}
  n & n-1 & \dots & 2 & 1\\
  n & n-1 & \dots & 2 & 1\\
   &  & \vdots &  & \\
 n & n-1 & \dots & 2 & 1\\
\end{bmatrix}\Big)
$$
One realization of this matrix pair is an \textsc{SM} instance where for $i\in [n]$ woman $w_i$ has preferences $m_{i+1}\succ m_{i+2} \succ \dots \succ m_{n} \succ m_1 \succ m_2 \succ \dots \succ m_{i}$ and man $m_i$ has preferences $w_{i}\succ w_{i+1} \succ \dots \succ w_{n} \succ w_1 \succ w_2 \succ \dots \succ w_{i-1}$. 

Our forth extreme matrix, which is the chaos matrix, has no naturally defined analogue for the \textsc{SM} setting which is why we omit it.\footnote{Note that intuitively taking one matrix from the mutual agreement pair and one matrix from the mutual disagreement pair could be a viable fourth extreme point. However, it is easy to see that the resulting (and similar) matrix pairs are not realizable.}
Determining the maximum distance between two realizable matrix pairs remains open (in our experiments, the mutual agreement and mutual disagreement pairs are furthest away).

\begin{figure}[t]
    \centering
    \includegraphics{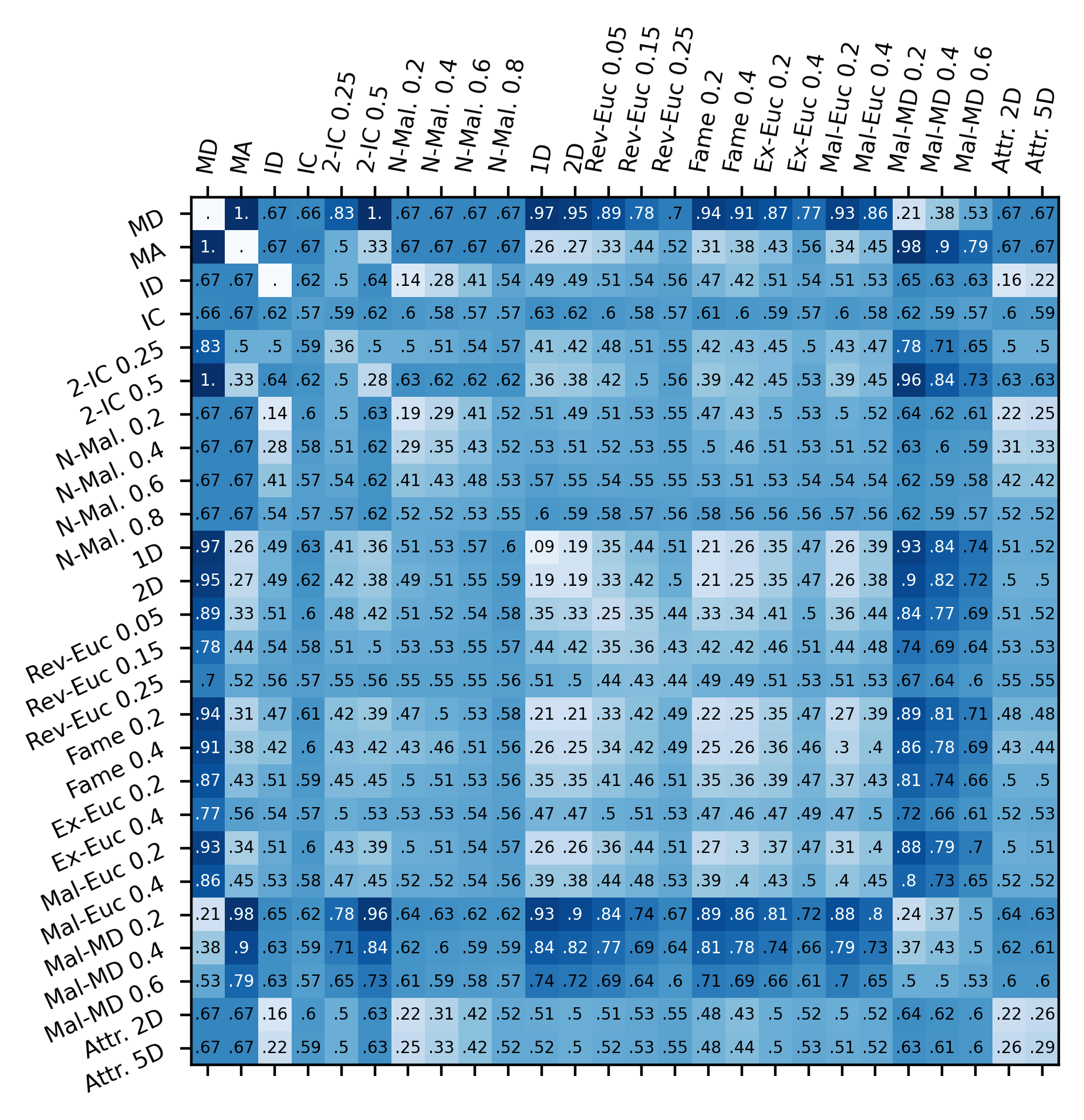}
    \caption{For each pair of statistical cultures for sampling \textsc{SM} instances, average mutual attraction distance between instances sampled from the two. The first three lines/columns contain, for each statistical culture, the average distance of instances sampled from the culture to our three extreme matrix pairs. The diagonal contains the average distance of two instances sampled from the same statistical culture.}
    \label{fig:distanceMatrix_marriage}
\end{figure}
\paragraph{Creating and Drawing the Map.}
To create our map of \textsc{SM} instances, we again sample $460$ instances from statistical cultures similar to the ones for \textsc{SR}. 
Notably, to maintain focus, we assume that the preferences of both women and men are generated using the same statistical culture (of course, in principle it would also be possible yet less clean to combine different cultures). 

We only describe how to adapt the cultures for \textsc{SR} instances to the bipartite SM setting and refer to \Cref{sub:creat} for the full descriptions. 
For the Impartial Culture and Mallows models, using the described procedure for SR instances, we sample for each woman $w\in W$ a preference order from $\mathcal{L}(U)$ and independently for each man $u\in U$ a preference order from $\mathcal{L}(W)$ using the described procedure. 
For the 2-IC model, given some $p\in [0,0.5]$, we partition $U$ into two sets $U_1\cupdot U_2$ with $|U_1|=\floor{p\cdot |U|}$ and we partition $W$ into two sets $W_1\cup W_2$ with $|W_1|=\floor{p\cdot |W|}$.  
    Each man~$u\in U$, respectively woman $w\in W$, samples one preference order $\succ$ from $\mathcal{L}(W_1)$, respectively $\mathcal{L}(U_1)$,  and one order $\succ'$ from $\mathcal{L}(W_2)$, respectively $\mathcal{L}(U_2)$. 
    If $a\in U_1\cup W_1$, then $a$'s preferences start with $\succ$ followed by $\succ'$. 
    If $a\in U_2\cup W_2$, then $a'$'s preferences start with $\succ'$ followed by $\succ$.
For the Euclidean, Mallows-Euclidean, Expectations-Euclidean, Fame-Euclidean, and Attributes models, we sample for each agent points and vectors as described for the respective model. 
Subsequently, a woman ranks all men according to their ``distance" (as described in the respective model) and a man ranks all women according to their distance.
For the Reverse-Euclidean model, given some $p\in [0,1]$, we partition $U$ into two sets $U_1\cupdot U_2$ with $|U_1|=\floor{p\cdot |U|}$ and $W$ into two sets $W_1\cup W_2$ with $|W_1|=\floor{p\cdot |W|}$.  
Subsequently, we sample the preferences as in the Euclidean model and subsequently reverse the preferences of all agents in $U_2\cup W_2$. 
Lastly, for Mallows-MD, we start with the \textsc{SM} instance realizing MD described previously, where for $i\in [n]$ woman $w_i$ has preferences $\succ_{w_i} : m_{i+1}\succ m_{i+2} \succ \dots \succ m_{n} \succ m_1 \succ m_2 \succ \dots \succ m_{i}$ and man $m_i$ has preferences $\succ_{m_i} :w_{i}\succ w_{i+1} \succ \dots \succ w_{n} \succ w_1 \succ w_2 \succ \dots \succ w_{i-1}$. 
As for SR, for each agent~$a\in U\cup W$, we obtain its final preferences by sampling a preference order from $\mathcal{D}_{\text{Mallows}}^{\succ_{a}, \normphi}$.

\begin{figure*}[t]
     \begin{subfigure}[b]{0.325\textwidth}
      \centering
    \includegraphics[trim={0.1cm 0.1cm 0.1cm 0.1cm}, clip,width=5.6cm]{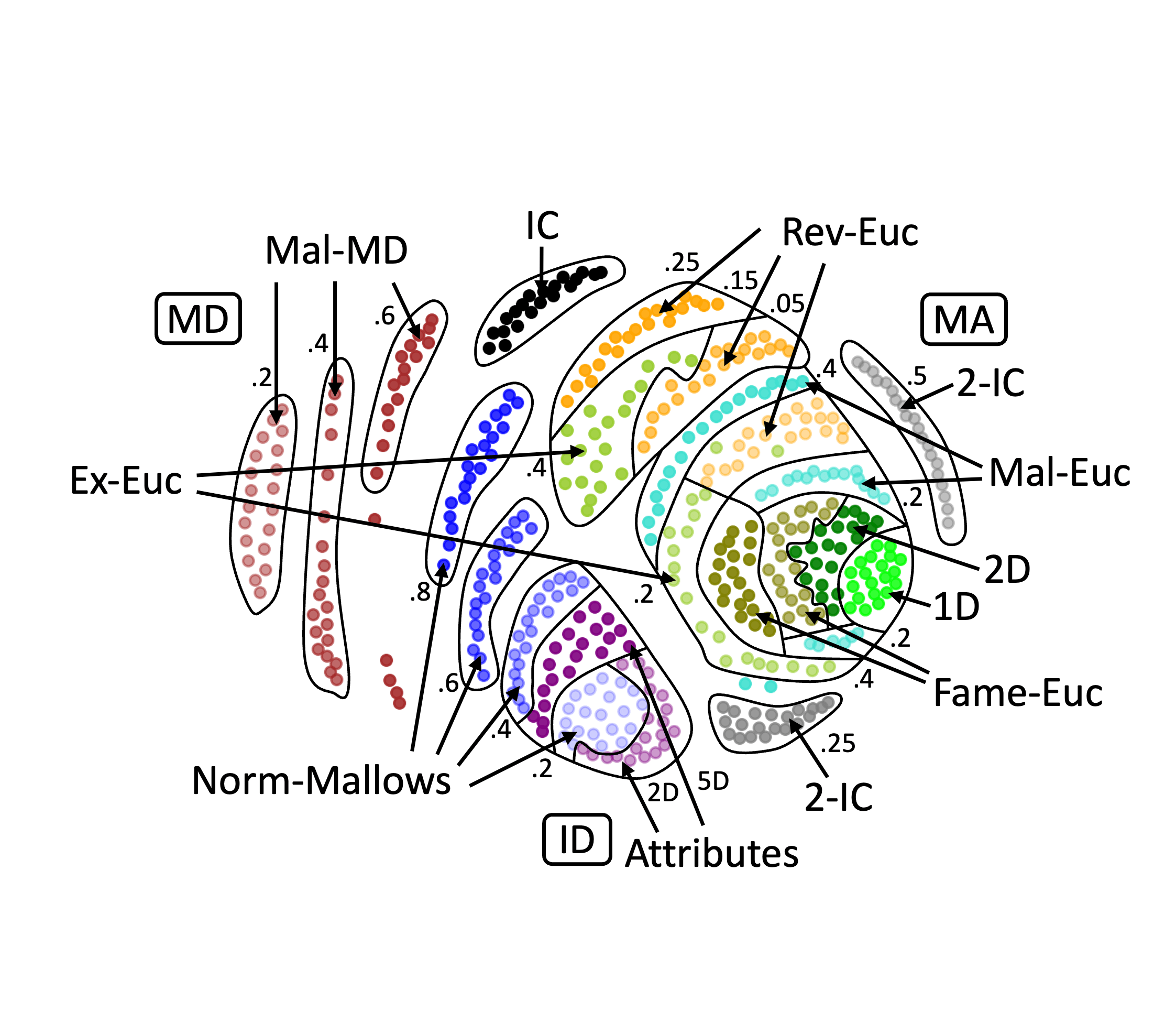}
    \caption{The color of a point indicates the statistical culture the respective instance was sampled from.} \label{map:SM}
     \end{subfigure}
      \begin{subfigure}[b]{0.325\textwidth}
      \centering
        \includegraphics[trim={0.1cm 0.1cm 0.1cm 0.1cm}, clip,width=5.6cm]{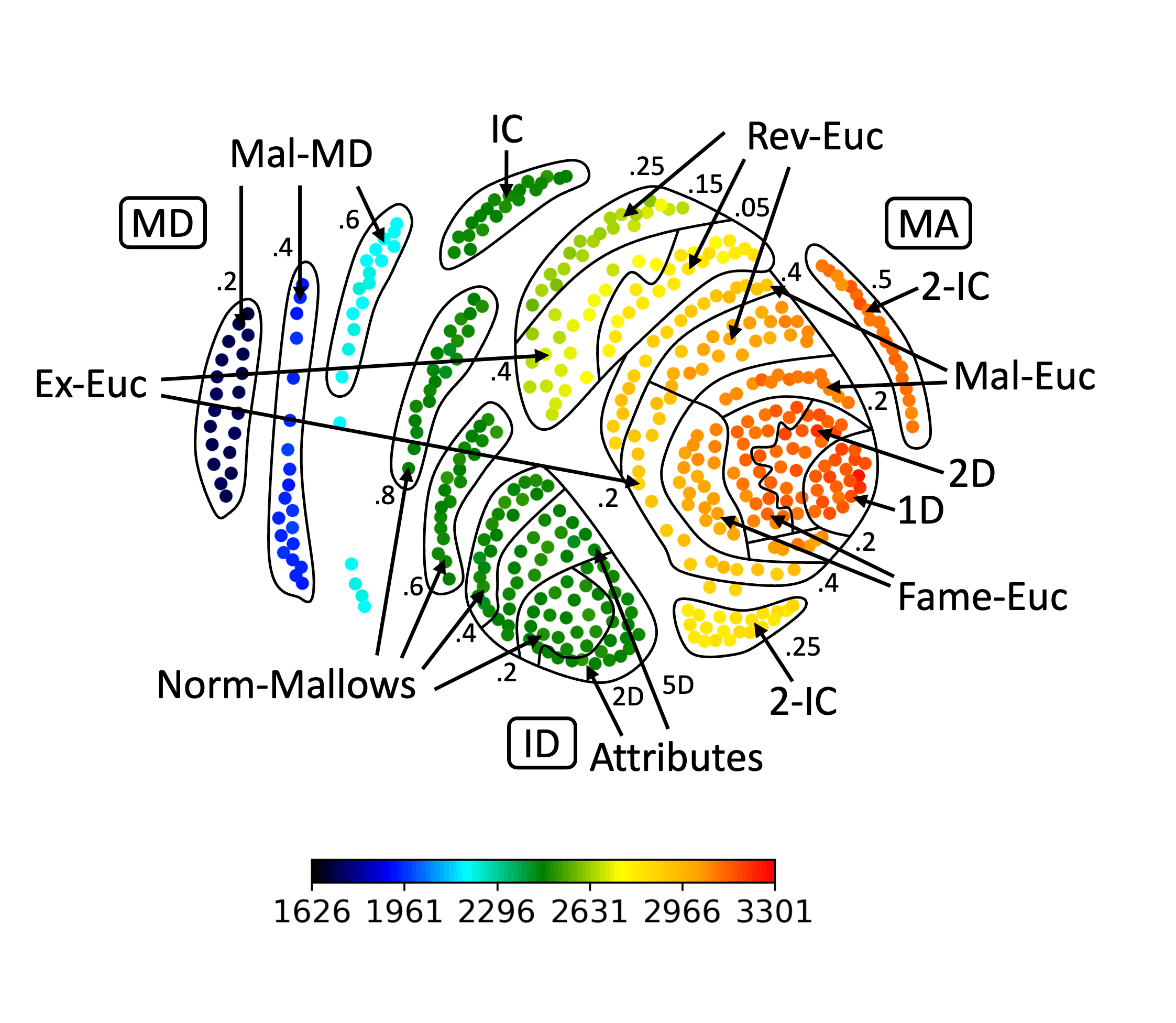}
         \caption{Average number of blocking pairs for a perfect matching.} \label{map:SM-averagebps}
     \end{subfigure}
     \begin{subfigure}[b]{0.325\textwidth}
         \centering
        \includegraphics[trim={0.1cm 0.1cm 0.1cm 0.1cm},clip,width=5.6cm]{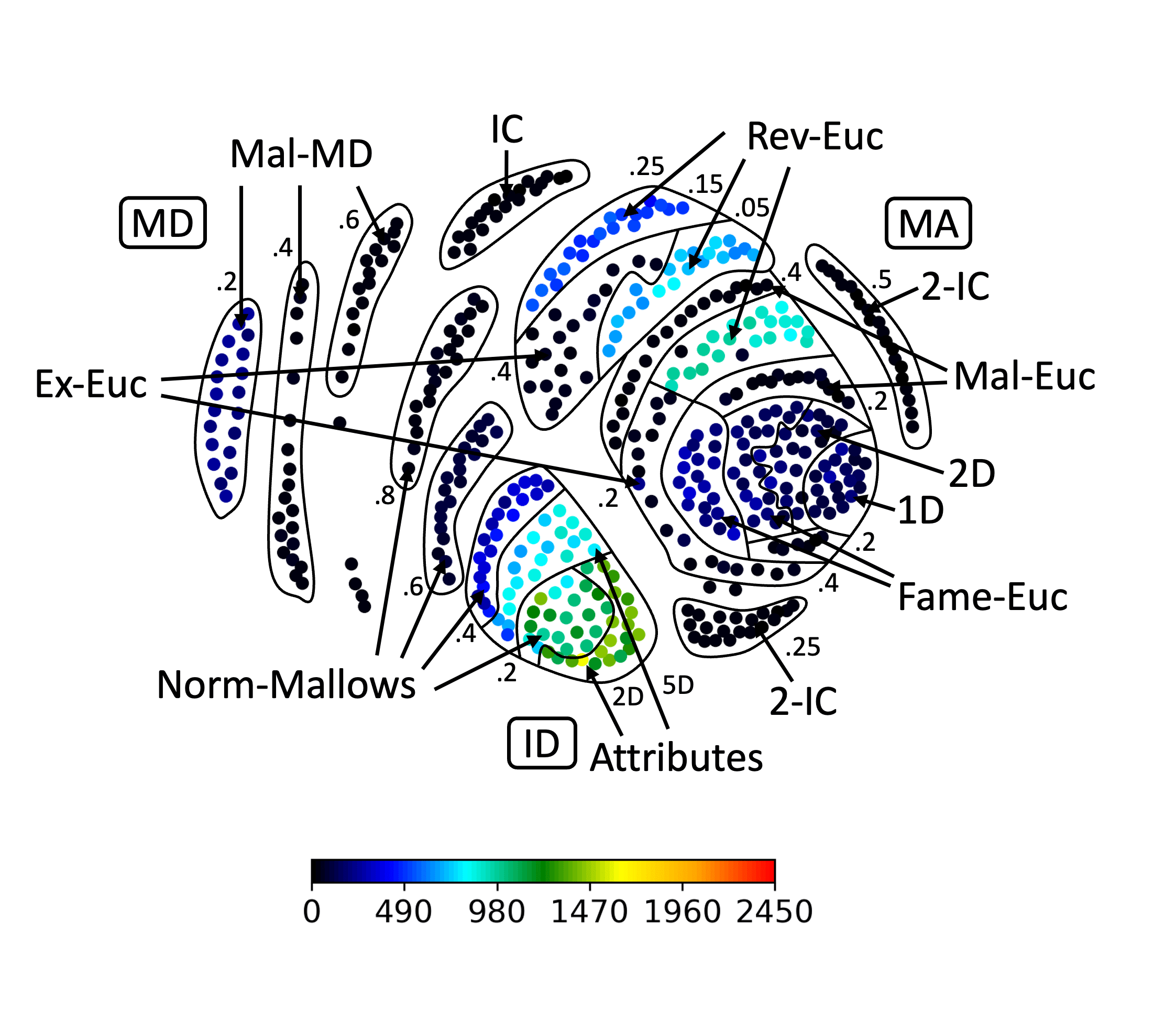}
  \caption{Number of blocking pairs for a minimum-weight matching. The scale is logarithmic.} \label{map:SM-minweightbps}
     \end{subfigure}
    \caption{Map of $460$ SM instances for $100$ men and $100$ women visualizing different quantities for each instance. Each instance is represented by a point. Roughly speaking, the closer two points are on the map, the more similar are the respective SR instances under the mutual attraction distance. }
\end{figure*}

As for \textsc{SR}, our dataset consists of $460$ \textsc{SM} instances sampled from the above described statistical cultures, where we use the same parameter configurations as for \textsc{SR} (as described in \Cref{sub:creat}).
In addition, on our maps, we include the three extreme matrix pairs described previously. 

As for \textsc{SR}, to draw the map, we first compute the mutual attraction distance of each pair of instances and subsequently embed them into the two-dimensional Euclidean space using a variant of the force-directed Kamada-Kawai algorithm \cite{DBLP:journals/ipl/KamadaK89}.
We depict the map visualizing our dataset of $460$ instances for $100$ men and $100$ women in \Cref{map:SM}.
Overall, the map of \textsc{SM} instances from \Cref{map:SM} is very similar to the map of \textsc{SR} instances from  \Cref{fig:mainMap}, where the only cultures that are placed slightly differently in the two maps are 2-IC and Expectations-Euclidean. 

Moreover, we depict in \Cref{fig:distanceMatrix_marriage} the average distance between the different statistical cultures. 
Unsurprisingly, the general picture in \Cref{fig:distanceMatrix_marriage} is very similar as in \Cref{fig:dis_matrix} for \textsc{SR}, with Expectations-Euclidean  being the culture that produces the largest differences between \textsc{SR} and \textsc{SM} (which is then also reflected on the map, as this culture is placed differently in the two maps). 

\begin{figure*}[t]
\begin{subfigure}[b]{0.325\textwidth}
         \centering
        \includegraphics[trim={0.1cm 0.1cm 0.1cm 0.1cm}, clip,width=5.6cm]{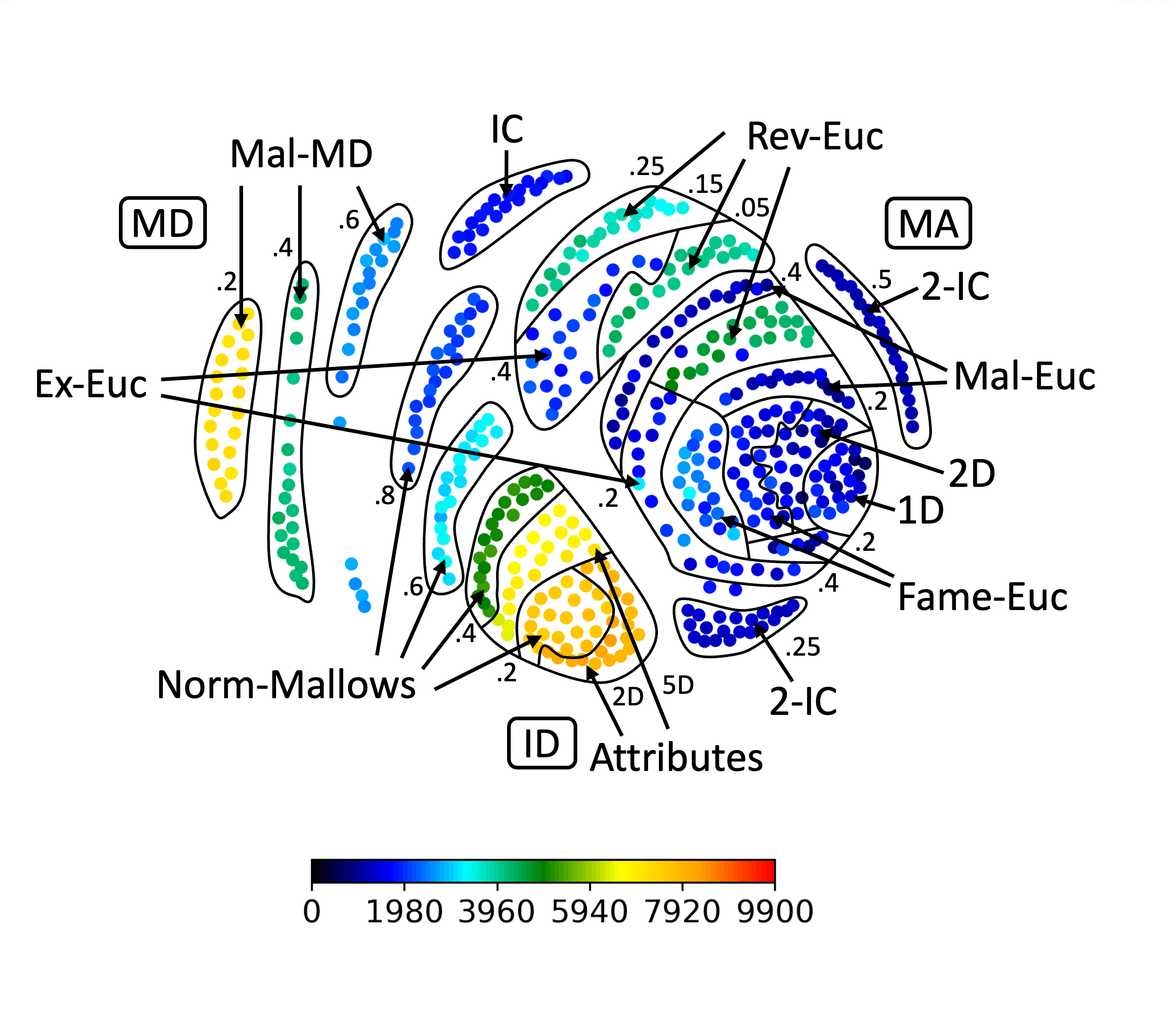}
         \caption{Minimum summed rank of agents for partner in any stable matching.} \label{map:SM-minimumSummed}
     \end{subfigure}
     \begin{subfigure}[b]{0.325\textwidth}
         \centering
        \includegraphics[trim={0.1cm 0.1cm 0.1cm 0.1cm}, clip,width=5.6cm]{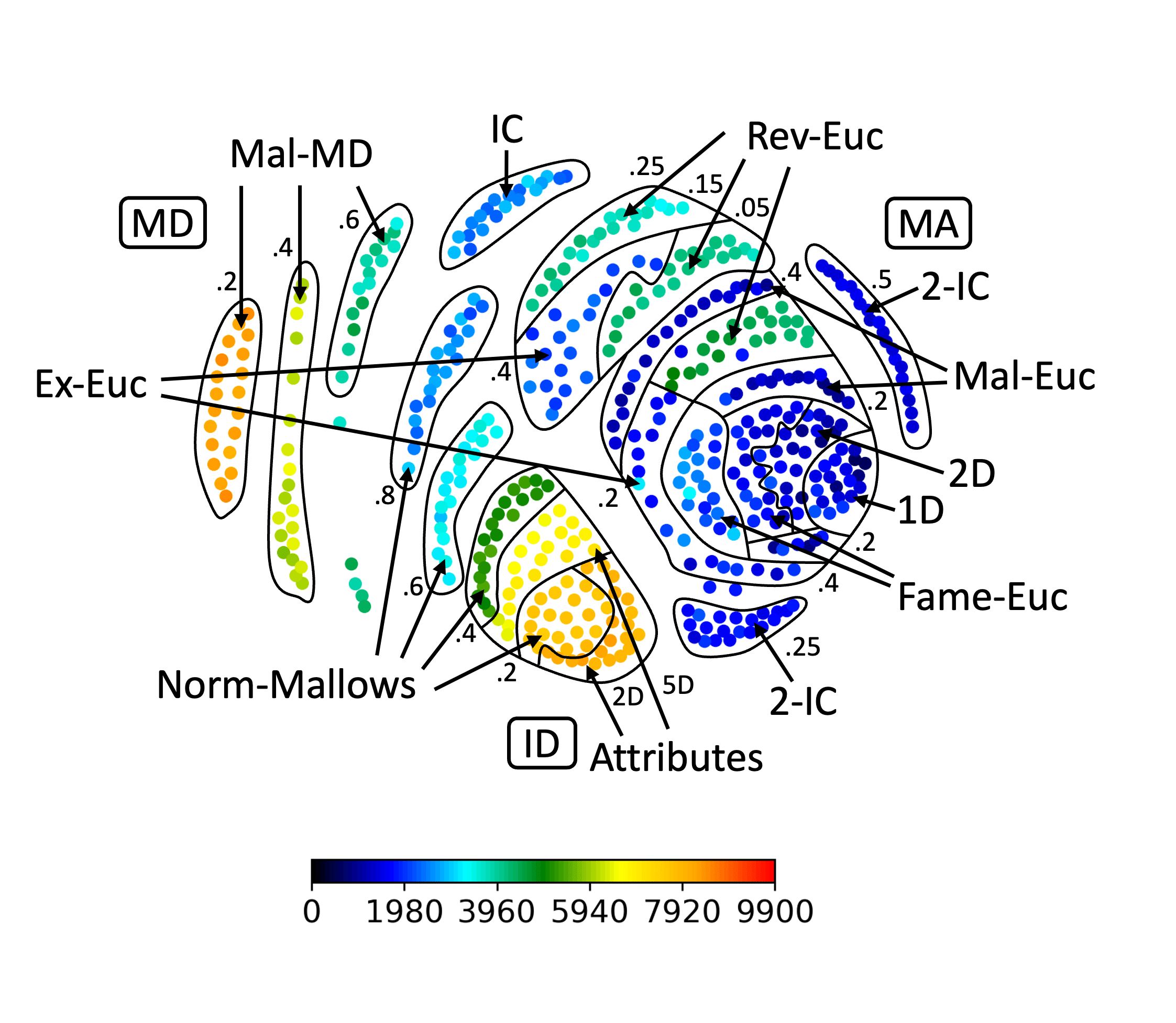}
    \caption{Maximum summed rank of agents for partner in any stable matching.} \label{map:SM-maximumSummed}
     \end{subfigure}
      \begin{subfigure}[b]{0.325\textwidth}
         \centering
        \includegraphics[trim={0.1cm 0.1cm 0.1cm 0.1cm}, clip,width=5.6cm]{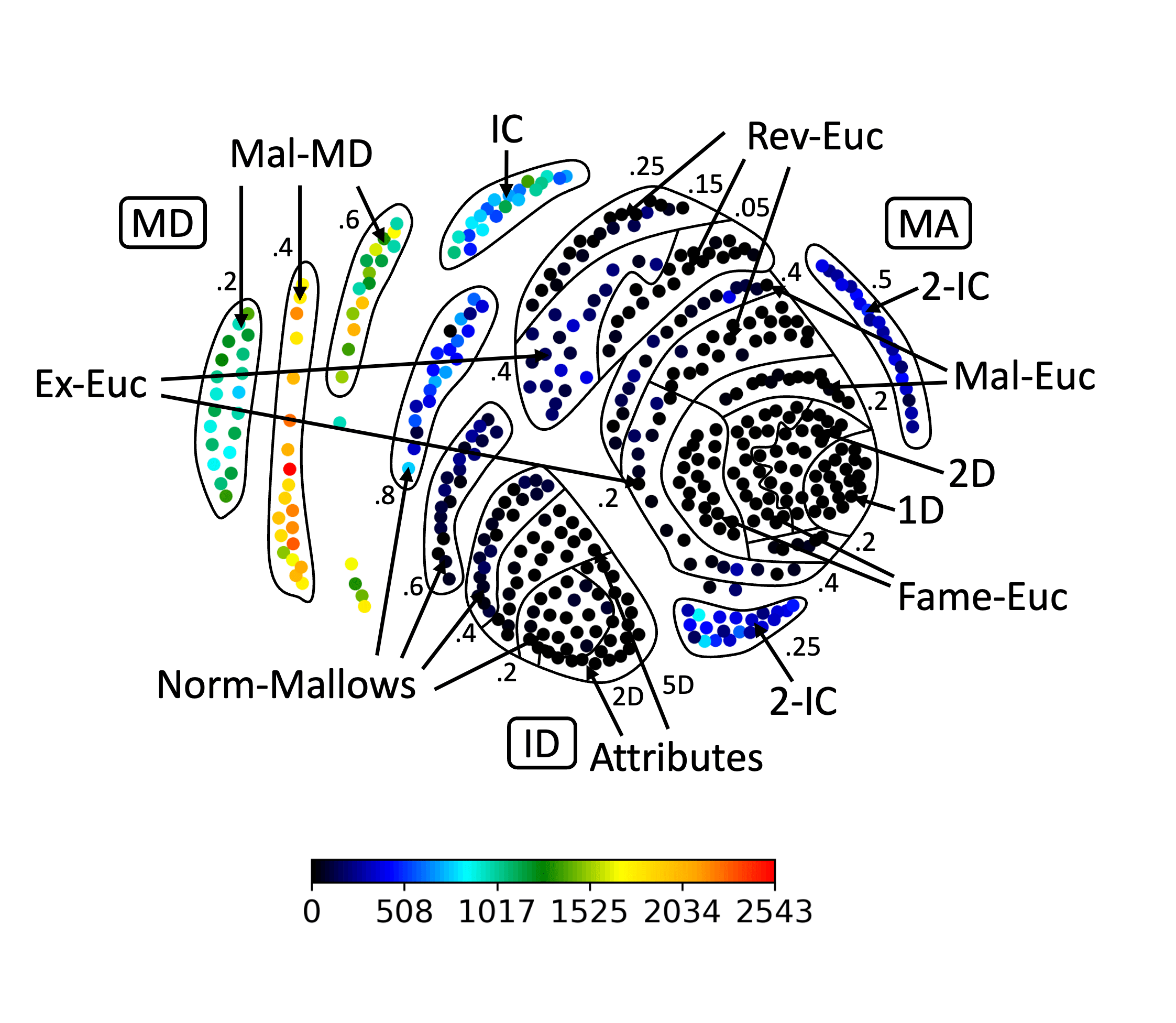}
    \caption{Difference between maximum and minimum summed rank of agents for partner in any stable matching.} \label{map:SM-maximumminimumRank}
     \end{subfigure}
    \caption{Map of $460$ SM instances for $100$ men and $100$ women visualizing different quantities for each instance.}
\end{figure*}

\begin{figure}
    \centering
    \includegraphics[trim={0.1cm 0.1cm 0.1cm 0.1cm}, clip,width=5.6cm]{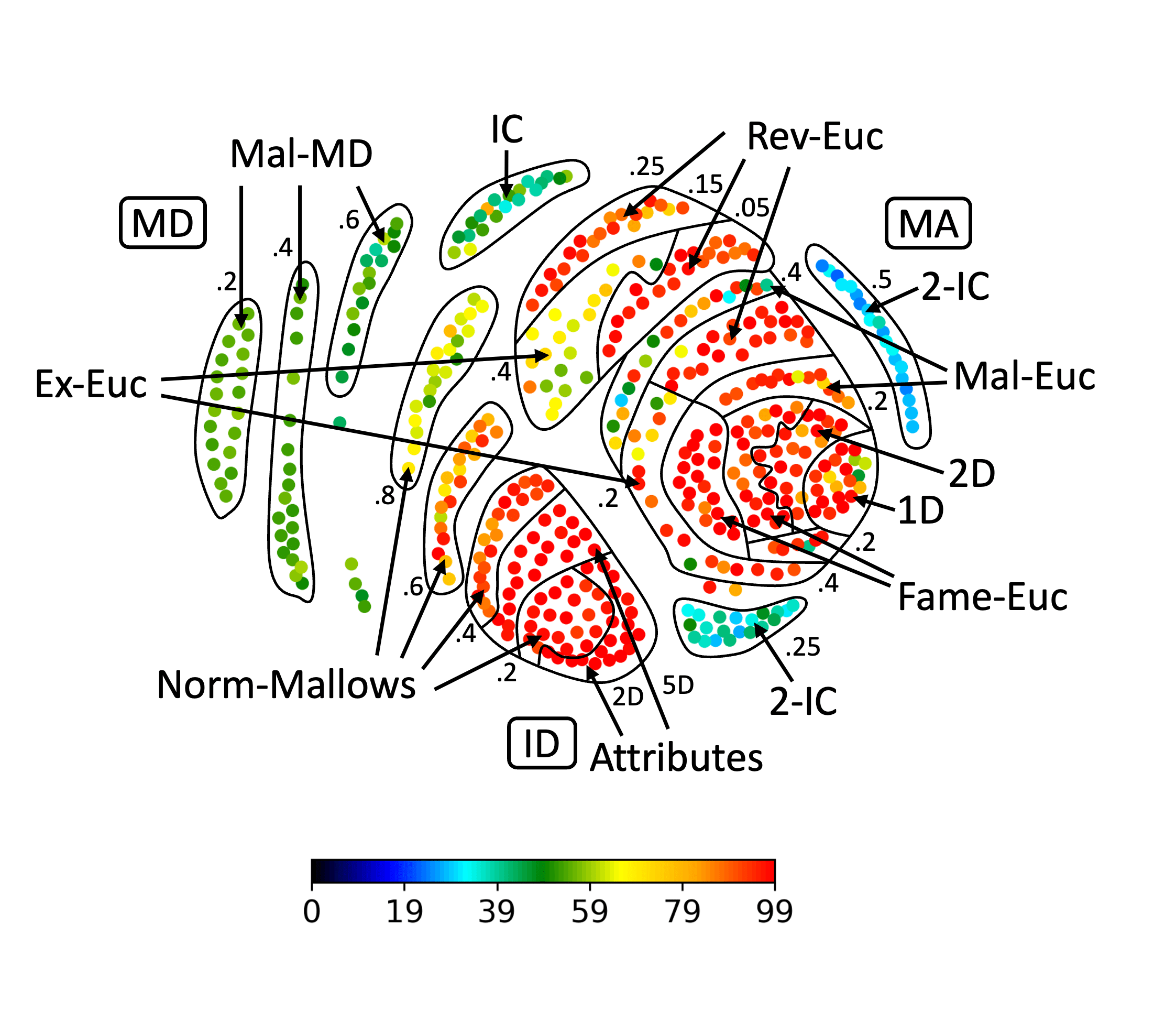}
    \caption{Map of $460$ SM instances for $100$ men and $100$ women visualizing the maximum rank an agent assigns to its partner in a stable matching minimizing this value.}
   \label{map:SM-maximumRank}
\end{figure}
\paragraph{Using the Map.}
To showcase possible use cases of our map of \textsc{SM} instances we repeated the experiments that we conducted for \textsc{SR} in \Cref{sec:using}. 

Specifically, analogous to \Cref{sub:expNumberBPs}, in \Cref{map:SM-averagebps}, we visualize the average number of blocking pairs for a perfect matching in our \textsc{SM} instances (by sampling $100$ perfect matchings and taking the average of the number of pairs blocking them). 
The results are very similar as for \textsc{SR} and in particular the average number of blocking pairs strongly correlates with the position of instances on the map. 

Moreover, analogous to \Cref{sub:min-weightBPs}, in \Cref{map:SM-minweightbps}, we show the number of blocking pairs for a matching minimizing the summed rank agents have for their partner. 
Remarkably, for a large majority of instances, this matching is quite close to being stable. 
The general picture here is again very similar as for \textsc{SR}; in particular, for both \textsc{SM} and \textsc{SR} the different statistical cultures produce instances with very similar properties.

Next, we consider the stable matching that minimizes/maximizes the summed rank that agents have for their partner (as in \Cref{sub:summed-rank}). 
We show the summed rank that agents have for their partner in \Cref{map:SM-minimumSummed} for the summed rank minimal matching and in \Cref{map:SM-maximumSummed} for the summed rank maximal matching. 
The general picture here looks again very similar as for \textsc{SR}. 
In particular, the instances sampled from one culture produce very similar results and for certain regions on the map instances falling in this region show a uniform behavior. 
Moreover, in \Cref{map:SM-maximumminimumRank}, we show the difference between the summed rank agents have for their partner in the stable matching maximizing and minimizing this value. 
Comparing this map to the respective map for \textsc{SR}, what stands out is that for \textsc{SM} for some instances there is a larger difference between the summed rank minimal and summed rank maximal matching than for \textsc{SR}, indicating that the space of stable matchings for some of the sampled \textsc{SM} instances is ``richer''.
Nevertheless, still for most of our \textsc{SM} instances there is only little difference between the summed rank minimal and maximal matching; this holds in particular for most of the instances from the bottom-right region of the map. 

Lastly, analogous to \Cref{sub:maxrank},  in \Cref{map:SM-maximumRank}, we depict the maximum rank an agent has for its partner in a stable matching minimizing this value. 
Notably, here, the results for SM differ from the results for SR. 
In particular, for SM, there are more instances where some agent is always matched to its almost least preferred agent than for SR (this contrast is most clear for 1D- and 2D-Euclidean instances).
Overall, ignoring 2-IC, in \Cref{map:SM-maximumRank}, a split of the map for SM instances is visible where in instances from the bottom right part some agent is matched to one of its least preferred agents in every stable matching, whereas in instances from the top left part of the map, in some stable matching even the worst off agent is matched to a partner that is not in the bottom $20\%$ of its preferences. 

\section{Conclusion}
The goal of this paper is to contribute to the toolbox for conducting experiments for stable matching problems. 
To achieve this, as a first step, we have introduced the polynomial-time computable mutual attraction distance and analyzed its properties as well as the space it induces.
As a second step, we have described a variety of statistical cultures for generating synthetic stable matching instances. 
The statistical cultures can be used to generate more diverse test data, where the diversity of test datasets may be assessed using our mutual attraction distance. 
One specific application of these two contributions is our map of synthetic stable matching instances, where we embed $460$ synthetically generated instances as points in the plane such that the euclidean distances between points roughly resembles the mutual attraction distance between the respective instances. 
We have verified that the produced map is meaningful in that it groups instances with similar properties together and have provided intuitive interpretations of the different regions on the map. 
Lastly, we have conducted various exemplary experiments, highlighting the value of the map to visualize and analyze experimental results. 
For instance, we were often able to identify certain regions on the map where instances share special properties. 
Overall, our experimental results underline the importance of using diverse test data and in particular to not only restrict oneself to uniformly at random sampled preferences, which produces instances covering only a small part on the map.

From a theoretical perspective, it would be interesting to extend our theoretical analysis of the space of SR instances to SM instances, and to analyze the theoretical properties of the space of SM and SR instances induced by other distance measures such as the swap or Spearman distance. 
To further verify the validity of the mutual attraction distance and the map, conducting further experiments and analyzing whether instances that are close on the map share similar properties is valuable
Lastly, it would be interesting to see where real-world instances lie on the map and to apply our framework to other types of stable matching problems. 

\subsubsection*{Acknowledgments}
NB was supported by the DFG project ComSoc-MPMS (NI 369/22). KH was supported by the DFG project
FPTinP (NI 369/16). This
project has received funding from the European Research Council (ERC) under the European Union’s
Horizon 2020 research and innovation programme (grant agreement No 101002854).
We are grateful to the anonymous \emph{MATCH-UP} 2022 reviewers for their thoughtful,
constructive, and helpful comments.

\begin{center}
	\includegraphics[width=3cm]{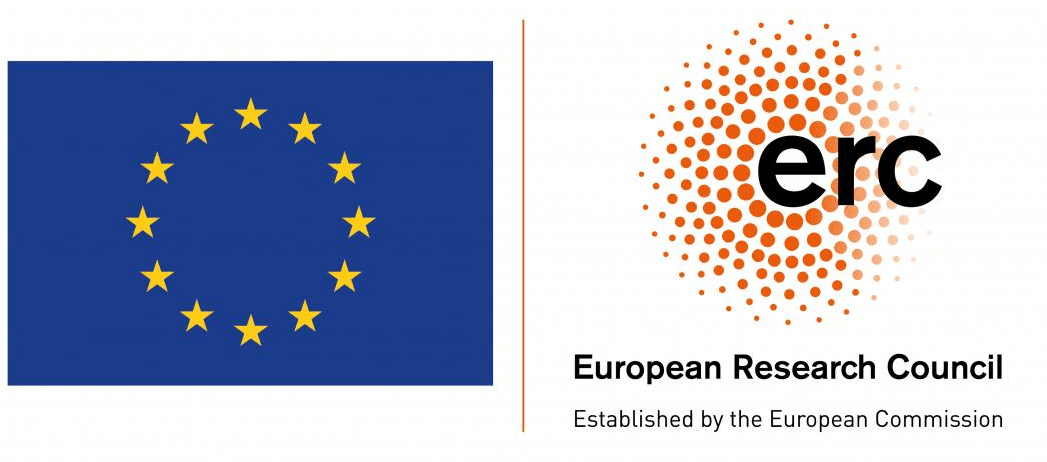}
\end{center}

\clearpage 
\clearpage
\bibliographystyle{plainnat}

\begin{thebibliography}{50}
\providecommand{\natexlab}[1]{#1}
\providecommand{\url}[1]{\texttt{#1}}
\expandafter\ifx\csname urlstyle\endcsname\relax
  \providecommand{\doi}[1]{doi: #1}\else
  \providecommand{\doi}{doi: \begingroup \urlstyle{rm}\Url}\fi

\bibitem[Abraham et~al.(2005)Abraham, Bir{\'{o}}, and
  Manlove]{DBLP:conf/waoa/AbrahamBM05}
David~J. Abraham, P{\'{e}}ter Bir{\'{o}}, and David~F. Manlove.
\newblock "almost stable" matchings in the roommates problem.
\newblock In \emph{Proceedings of the Third International Workshop on
  Approximation and Online Algorithms ({WAOA} '05)}, pages 1--14. Springer,
  2005.

\bibitem[Abraham et~al.(2008)Abraham, Levavi, Manlove, and
  O'Malley]{DBLP:journals/im/AbrahamLMO08}
David~J. Abraham, Ariel Levavi, David~F. Manlove, and Gregg O'Malley.
\newblock The stable roommates problem with globally ranked pairs.
\newblock \emph{Internet Math.}, 5\penalty0 (4):\penalty0 493--515, 2008.

\bibitem[Agarwal and Cole(2022)]{DBLP:journals/corr/abs-2204-04162}
Ishan Agarwal and Richard Cole.
\newblock Stable matching: Choosing which proposals to make.
\newblock \emph{CoRR}, abs/2204.04162, 2022.
\newblock URL \url{https://arxiv.org/abs/2204.04162}.

\bibitem[Arkin et~al.(2009)Arkin, Bae, Efrat, Okamoto, Mitchell, and
  Polishchuk]{DBLP:journals/ipl/ArkinBEOMP09}
Esther~M. Arkin, Sang~Won Bae, Alon Efrat, Kazuya Okamoto, Joseph S.~B.
  Mitchell, and Valentin Polishchuk.
\newblock Geometric stable roommates.
\newblock \emph{Inf. Process. Lett.}, 109\penalty0 (4):\penalty0 219--224,
  2009.

\bibitem[Bhatnagar et~al.(2008)Bhatnagar, Greenberg, and
  Randall]{DBLP:conf/soda/BhatnagarGR08}
Nayantara Bhatnagar, Sam Greenberg, and Dana Randall.
\newblock Sampling stable marriages: why spouse-swapping won't work.
\newblock In \emph{Proceedings of the Nineteenth Annual {ACM-SIAM} Symposium on
  Discrete Algorithms ({SODA} '08)}, pages 1223--1232. {SIAM}, 2008.

\bibitem[Boehmer and Schaar(2022)]{DBLP:journals/corr/abs-2204-03589}
Niclas Boehmer and Nathan Schaar.
\newblock Collecting, classifying, analyzing, and using real-world elections.
\newblock \emph{CoRR}, abs/2204.03589, 2022.
\newblock URL \url{https://arxiv.org/abs/2204.03589}.

\bibitem[Boehmer et~al.(2020)Boehmer, Bredereck, Heeger, and
  Niedermeier]{DBLP:journals/corr/abs-2007-04948}
Niclas Boehmer, Robert Bredereck, Klaus Heeger, and Rolf Niedermeier.
\newblock Bribery and control in stable marriage.
\newblock \emph{CoRR}, abs/2007.04948v1, 2020.
\newblock URL \url{https://arxiv.org/abs/2007.04948v1}.

\bibitem[Boehmer et~al.(2021{\natexlab{a}})Boehmer, Bredereck, Faliszewski, and
  Niedermeier]{DBLP:conf/ijcai/BoehmerBFN21}
Niclas Boehmer, Robert Bredereck, Piotr Faliszewski, and Rolf Niedermeier.
\newblock Winner robustness via swap- and shift-bribery: Parameterized counting
  complexity and experiments.
\newblock In \emph{Proceedings of the Thirtieth International Joint Conference
  on Artificial Intelligence ({IJCAI} '21)}, pages 52--58. ijcai.org,
  2021{\natexlab{a}}.

\bibitem[Boehmer et~al.(2021{\natexlab{b}})Boehmer, Bredereck, Faliszewski,
  Niedermeier, and Szufa]{DBLP:conf/ijcai/BoehmerBFNS21}
Niclas Boehmer, Robert Bredereck, Piotr Faliszewski, Rolf Niedermeier, and
  Stanislaw Szufa.
\newblock Putting a compass on the map of elections.
\newblock In \emph{Proceedings of the Thirtieth International Joint Conference
  on Artificial Intelligence ({IJCAI} '21)}, pages 59--65. ijcai.org,
  2021{\natexlab{b}}.

\bibitem[Boehmer et~al.(2022)Boehmer, Heeger, and
  Niedermeier]{DBLP:journals/corr/abs-2112-05777}
Niclas Boehmer, Klaus Heeger, and Rolf Niedermeier.
\newblock Theory of and experiments on minimally invasive stability
  preservation in changing two-sided matching markets.
\newblock In \emph{Proceedings of the Thirty-Sixth {AAAI} Conference on
  Artificial Intelligence ({AAAI} '22)}, pages 4851--4858. {AAAI} Press, 2022.

\bibitem[Bredereck et~al.(2020)Bredereck, Heeger, Knop, and
  Niedermeier]{DBLP:conf/wine/BredereckHKN20}
Robert Bredereck, Klaus Heeger, Du\v{s}an Knop, and Rolf Niedermeier.
\newblock Multidimensional stable roommates with master list.
\newblock In \emph{Proceedings of the 16th International Conferenc of the Web
  and Internet Economics ({WINE} '20)}, pages 59--73. Springer, 2020.

\bibitem[Brilliantova and Hosseini(2022)]{DBLP:journals/corr/abs-2201-12484}
Angelina Brilliantova and Hadi Hosseini.
\newblock Fair stable matching meets correlated preferences.
\newblock In \emph{Proceedings of the 21st International Conference on
  Autonomous Agents and Multiagent Systems ({AAMAS} '22)}, pages 190--198.
  {IFAAMAS}, 2022.

\bibitem[Cooper and Manlove(2019)]{DBLP:journals/corr/abs-1905-06626}
Frances Cooper and David~F. Manlove.
\newblock Two-sided profile-based optimality in the stable marriage problem.
\newblock \emph{CoRR}, abs/1905.06626, 2019.
\newblock URL \url{http://arxiv.org/abs/1905.06626}.

\bibitem[Cooper and Manlove(2020)]{DBLP:conf/wea/CooperM20}
Frances Cooper and David~F. Manlove.
\newblock Algorithms for new types of fair stable matchings.
\newblock In \emph{Proceedings of the 18th International Symposium on
  Experimental Algorithms ({SEA} '20)}, pages 20:1--20:13. Schloss Dagstuhl -
  Leibniz-Zentrum f{\"{u}}r Informatik, 2020.

\bibitem[Cui and Jia(2013)]{DBLP:journals/cn/CuiJ13}
Lin Cui and Weijia Jia.
\newblock Cyclic stable matching for three-sided networking services.
\newblock \emph{Comput. Networks}, 57\penalty0 (1):\penalty0 351--363, 2013.

\bibitem[Delorme et~al.(2019)Delorme, Garc{\'{\i}}a, Gondzio, Kalcsics,
  Manlove, and Pettersson]{DBLP:journals/eor/DelormeGGKMP19}
Maxence Delorme, Sergio Garc{\'{\i}}a, Jacek Gondzio, J{\"{o}}rg Kalcsics,
  David~F. Manlove, and William Pettersson.
\newblock Mathematical models for stable matching problems with ties and
  incomplete lists.
\newblock \emph{Eur. J. Oper. Res.}, 277\penalty0 (2):\penalty0 426--441, 2019.

\bibitem[Diaconis and Graham(1977)]{diaconis1977spearman}
Persi Diaconis and Ronald~L Graham.
\newblock Spearman's footrule as a measure of disarray.
\newblock \emph{J. R. Stat. Soc. Series B Stat. Methodol.}, 39\penalty0
  (2):\penalty0 262--268, 1977.

\bibitem[Dickson and Safford(1906)]{MR1516644}
L.~E. Dickson and F.~H. Safford.
\newblock Solutions of {P}roblems: {G}roup {T}heory: 8.
\newblock \emph{Amer. Math. Monthly}, 13\penalty0 (6-7):\penalty0 150--151,
  1906.

\bibitem[Dwork et~al.(2001)Dwork, Kumar, Naor, and
  Sivakumar]{DBLP:conf/www/DworkKNS01}
Cynthia Dwork, Ravi Kumar, Moni Naor, and D.~Sivakumar.
\newblock Rank aggregation methods for the web.
\newblock In \emph{Proceedings of the Tenth International World Wide Web
  Conference ({WWW} '10)}, pages 613--622. {ACM}, 2001.

\bibitem[Erdem et~al.(2020)Erdem, Fidan, Manlove, and
  Prosser]{DBLP:journals/tplp/0001FMP20}
Esra Erdem, Muge Fidan, David~F. Manlove, and Patrick Prosser.
\newblock A general framework for stable roommates problems using answer set
  programming.
\newblock \emph{Theory Pract. Log. Program.}, 20\penalty0 (6):\penalty0
  911--925, 2020.

\bibitem[Faliszewski et~al.(2019)Faliszewski, Skowron, Slinko, Szufa, and
  Talmon]{DBLP:conf/aaai/FaliszewskiSSST19}
Piotr Faliszewski, Piotr Skowron, Arkadii Slinko, Stanislaw Szufa, and Nimrod
  Talmon.
\newblock How similar are two elections?
\newblock In \emph{Proceedings of the Thirty-Third {AAAI} Conference on
  Artificial Intelligence ({AAAI} '19)}, pages 1909--1916. {AAAI} Press, 2019.

\bibitem[Feder(1992)]{DBLP:journals/jcss/Feder92}
Tom{\'{a}}s Feder.
\newblock A new fixed point approach for stable networks and stable marriages.
\newblock \emph{J. Comput. Syst. Sci.}, 45\penalty0 (2):\penalty0 233--284,
  1992.

\bibitem[Gale and Shapley(2013)]{DBLP:journals/tamm/GaleS13}
D.~Gale and L.~S. Shapley.
\newblock College admissions and the stability of marriage.
\newblock \emph{Am. Math. Mon.}, 120\penalty0 (5):\penalty0 386--391, 2013.

\bibitem[Genc et~al.(2017)Genc, Siala, O'Sullivan, and
  Simonin]{DBLP:conf/ijcai/Genc0OS17}
Begum Genc, Mohamed Siala, Barry O'Sullivan, and Gilles Simonin.
\newblock Finding robust solutions to stable marriage.
\newblock In \emph{Proceedings of the Twenty-Sixth International Joint
  Conference on Artificial Intelligence ({IJCAI} '17)}, pages 631--637.
  ijcai.org, 2017.

\bibitem[Genc et~al.(2019)Genc, Siala, Simonin, and
  O'Sullivan]{DBLP:conf/cpaior/Genc0SO19}
Begum Genc, Mohamed Siala, Gilles Simonin, and Barry O'Sullivan.
\newblock An approach to robustness in the stable roommates problem and its
  comparison with the stable marriage problem.
\newblock In \emph{Proceedings of the 16th International Conference on the
  Integration of Constraint Programming, Artificial Intelligence, and
  Operations Research ({CPAIOR} 2019)}, pages 320--336. Springer, 2019.

\bibitem[{Gurobi Optimization, LLC}(2021)]{gurobi}
{Gurobi Optimization, LLC}.
\newblock {Gurobi Optimizer Reference Manual}.
\newblock \url{https://www.gurobi.com}, 2021.
\newblock Accessed: 2022-04-25.

\bibitem[Gusfield and Irving(1989)]{DBLP:books/daglib/0066875}
Dan Gusfield and Robert~W. Irving.
\newblock \emph{The Stable marriage problem - structure and algorithms}.
\newblock Foundations of computing series. {MIT} Press, 1989.

\bibitem[Harary and Moser(1966)]{harary1966theory}
Frank Harary and Leo Moser.
\newblock The theory of round robin tournaments.
\newblock \emph{Am. Math. Mon.}, 73\penalty0 (3):\penalty0 231--246, 1966.

\bibitem[Holyer(1981)]{DBLP:journals/siamcomp/Holyer81a}
Ian Holyer.
\newblock The {NP}-completeness of edge-coloring.
\newblock \emph{{SIAM} J. Comput.}, 10\penalty0 (4):\penalty0 718--720, 1981.

\bibitem[Irving and Manlove(2009)]{DBLP:journals/jea/IrvingM09}
Robert~W. Irving and David~F. Manlove.
\newblock Finding large stable matchings.
\newblock \emph{{ACM} J. Exp. Algorithmics}, 14:\penalty0 1.2–1.30, 2009.

\bibitem[Irving et~al.(2008)Irving, Manlove, and
  Scott]{DBLP:journals/dam/IrvingMS08}
Robert~W. Irving, David~F. Manlove, and Sandy Scott.
\newblock The stable marriage problem with master preference lists.
\newblock \emph{Discret. Appl. Math.}, 156\penalty0 (15):\penalty0 2959--2977,
  2008.

\bibitem[Kamada and Kawai(1989)]{DBLP:journals/ipl/KamadaK89}
Tomihisa Kamada and Satoru Kawai.
\newblock An algorithm for drawing general undirected graphs.
\newblock \emph{Inf. Process. Lett.}, 31\penalty0 (1):\penalty0 7--15, 1989.

\bibitem[Kamiyama(2019)]{DBLP:conf/atal/Kamiyama19}
Naoyuki Kamiyama.
\newblock Many-to-many stable matchings with ties, master preference lists, and
  matroid constraints.
\newblock In \emph{Proceedings of the 18th International Conference on
  Autonomous Agents and MultiAgent Systems ({AAMAS} '19)}, pages 583--591.
  IFAAMAS, 2019.

\bibitem[Knuth(1976)]{Knuth76}
Donald~E. Knuth.
\newblock \emph{Mariages stables et leurs relations avec d'autres probl\`emes
  combinatoires}.
\newblock Les Presses de l'Universit\'{e} de Montr\'{e}al, Montreal, Que.,
  1976.
\newblock Introduction \`a l'analyse math\'{e}matique des algorithmes,
  Collection de la Chaire Aisenstadt.

\bibitem[Kwanashie and Manlove(2013)]{DBLP:conf/or/KwanashieM13}
Augustine Kwanashie and David~F. Manlove.
\newblock An integer programming approach to the hospitals/residents problem
  with ties.
\newblock In \emph{Selected Papers of the International Conference on
  Operations Research ({OR} '13)}, pages 263--269. Springer, 2013.

\bibitem[Mallows(1957)]{mal:j:mallows}
C.~Mallows.
\newblock Non-null ranking models.
\newblock \emph{Biometrica}, 44:\penalty0 114--130, 1957.

\bibitem[Manlove(2013)]{DBLP:books/ws/Manlove13}
David~F. Manlove.
\newblock \emph{Algorithmics of Matching Under Preferences}, volume~2 of
  \emph{Series on Theoretical Computer Science}.
\newblock WorldScientific, 2013.

\bibitem[Manlove et~al.(2022)Manlove, Milne, and
  Olaosebikan]{DBLP:journals/dam/ManloveMO22}
David~F. Manlove, Duncan Milne, and Sofiat Olaosebikan.
\newblock Student-project allocation with preferences over projects:
  Algorithmic and experimental results.
\newblock \emph{Discret. Appl. Math.}, 308:\penalty0 220--234, 2022.

\bibitem[Manne et~al.(2016)Manne, Naim, Lerring, and
  Halappanavar]{DBLP:conf/siamcsc/ManneNLH16}
Fredrik Manne, Md. Naim, H{\aa}kon Lerring, and Mahantesh Halappanavar.
\newblock On stable marriages and greedy matchings.
\newblock In \emph{Proceedings of the Seventh {SIAM} Workshop on Combinatorial
  Scientific Computing ({CSC} '16)}, pages 92--101. {SIAM}, 2016.

\bibitem[Mertens(2015)]{mertens2015stable}
Stephan Mertens.
\newblock Stable roommates problem with random preferences.
\newblock \emph{J. Stat. Mech. Theory Exp.}, 2015\penalty0 (1):\penalty0
  P01020, 2015.

\bibitem[O'Malley(2007)]{malleythesis}
Gregg O'Malley.
\newblock \emph{Algorithmic aspects of stable matching problems}.
\newblock PhD thesis, University of Glasgow, 2007.

\bibitem[Pettersson et~al.(2021)Pettersson, Delorme, Garc{\'{\i}}a, Gondzio,
  Kalcsics, and Manlove]{DBLP:journals/cor/PetterssonDGGKM21}
William Pettersson, Maxence Delorme, Sergio Garc{\'{\i}}a, Jacek Gondzio,
  J{\"{o}}rg Kalcsics, and David~F. Manlove.
\newblock Improving solution times for stable matching problems through
  preprocessing.
\newblock \emph{Comput. Oper. Res.}, 128:\penalty0 105128, 2021.

\bibitem[Podhradsky(2010)]{Filho2016Automatic}
Andrej Podhradsky.
\newblock Stable marriage problem algorithms, 2010.
\newblock URL \url{is.muni.cz/th/172646/fi_m/}.

\bibitem[Sapała(2022)]{mt:sapala}
Kasper Sapała.
\newblock Algorithms for embedding metrics in euclidean spaces.
\newblock Master's thesis, AGH University of Science and Technology, 2022.

\bibitem[Schober et~al.(2018)Schober, Boer, and
  Schwarte]{schober2018correlation}
Patrick Schober, Christa Boer, and Lothar~A Schwarte.
\newblock Correlation coefficients: appropriate use and interpretation.
\newblock \emph{Anesth. Analg.}, 126\penalty0 (5):\penalty0 1763--1768, 2018.

\bibitem[Siala and O'Sullivan(2017)]{DBLP:conf/cp/0002O17}
Mohamed Siala and Barry O'Sullivan.
\newblock Rotation-based formulation for stable matching.
\newblock In \emph{Proceedings of the 23rd International Conference on
  Principles and Practice of Constraint Programming ({CP} '17)}, pages
  262--277. Springer, 2017.

\bibitem[Szufa et~al.(2020)Szufa, Faliszewski, Skowron, Slinko, and
  Talmon]{DBLP:conf/atal/SzufaFSST20}
Stanislaw Szufa, Piotr Faliszewski, Piotr Skowron, Arkadii Slinko, and Nimrod
  Talmon.
\newblock Drawing a map of elections in the space of statistical cultures.
\newblock In \emph{Proceedings of the 19th International Conference on
  Autonomous Agents and Multiagent Systems ({AAMAS} '20)}, pages 1341--1349.
  {IFAAMAS}, 2020.

\bibitem[Szufa et~al.(2022)Szufa, Faliszewski, Łukasz Janeczko, Lackner,
  Slinko, Sornat, and Talmon]{DBLP:map-approval}
Stanisław Szufa, Piotr Faliszewski, Łukasz Janeczko, Martin Lackner, Arkadii
  Slinko, Krzysztof Sornat, and Nimrod Talmon.
\newblock How to sample approval elections?
\newblock In \emph{Proceedings of the Thirty-First International Joint
  Conference on Artificial Intelligence ({IJCAI} '22)}, pages 496--502.
  ijcai.org, 2022.

\bibitem[Teo and Sethuraman(2000)]{DBLP:journals/eor/TeoS00}
Chung{-}Piaw Teo and Jay Sethuraman.
\newblock On a cutting plane heuristic for the stable roommates problem and its
  applications.
\newblock \emph{Eur. J. Oper. Res.}, 123\penalty0 (1):\penalty0 195--205, 2000.

\bibitem[Tziavelis et~al.(2020)Tziavelis, Giannakopoulos, Johansen, Doka,
  Koziris, and Karras]{DBLP:conf/aaai/TziavelisGJDKK20}
Nikolaos Tziavelis, Ioannis Giannakopoulos, Rune~Quist Johansen, Katerina Doka,
  Nectarios Koziris, and Panagiotis Karras.
\newblock Fair procedures for fair stable marriage outcomes.
\newblock In \emph{Proceedings of the Thirty-Fourth {AAAI} Conference on
  Artificial Intelligence ({AAAI} '20)}, pages 7269--7276. {AAAI} Press, 2020.

\end{thebibliography}

\newpage 
\appendix
\section*{Appendix}

\section{Additional Material for \Cref{sec:understanding}}

\subsection{Identity}

Given $n \in \mathbb{N}$, the $2n \times (2n -1)$-matrix \IDn~can be written as follows:
\begin{equation*}
\begin{bmatrix}
 1 & 1 & 1 & 1 & \dots & 1 & 1 \\
 1 & 2 & 2 & 2 & \dots & 2 & 2 \\
 2 & 2 & 3 & 3 & \dots & 3 & 3 \\
 3 & 3 & 3 & 4 & \dots & 4 & 4 \\
 & & & \vdots & & &\\
 2n-2 & 2n-2 & 2n-2 & 2n-2 & \dots & 2n-2 & 2n-1 \\
 2n-1 & 2n-1 & 2n-1 & 2n-1 & \dots & 2n-1 & 2n-1 \\
\end{bmatrix}
\end{equation*}

\IDreal*
\begin{proof}
 $(\Rightarrow)$:
 Let $\mathcal{P}^*$ be a realization of \IDn.
 We call the agents from $\mathcal{P}^*$ by $a_1, a_2, \dots, a_{2n}$ and assume without loss of generality that the $i$-th row of \IDn\ belongs to~$a_i$ for every~$i \in [2n]$.
 We show by induction on $i$ that agent~$a_i$ is the $i$-th agent in the preferences of~$a_j$ for $j> i$ and the $(i-1)$-th agent in the preferences of~$a_j$ for $j < i$ (which is equivalent to $\mathcal{P}^*$ being derived from the master list $a_1 \succ a_2 \succ \dots \succ a_{2n}$).
 For~$i = 0$, there is nothing to show (as agent~$a_0$ does not exist).
 So fix~$i > 0$.
 The $i$-th row of~$\IDn$ contains the entry~``$i-1$" precisely $i-1$ times and the entry $i$ precisely~$n -i$ times.
 By the induction hypothesis, the preferences of~$a_j$ for $j > i$ start with~$a_1 \succ a_2 \succ \dots \succ a_{i-1}$, so $a_i$ cannot be the $(i-1)$-th agent of~$a_j$.
 Consequently, $a_i$ is the $(i-1)$-th agent in the preferences of~$a_j$ for every~$j < i$.
 It follows that $a_i$ is the $i$-th agent in the preferences of all remaining agents, that is, of agent~$a_j$ for every $j > i$.
 
 $(\Leftarrow)$:
 Let $\mathcal{P}$ be derived from the master list~$a_1 \succ a_2 \succ \dots \succ a_{2n}$.
 Then $a_i$ is the $(i-1)$-th agent in the preferences of~$a_j$ for $j< i$ and the $i$-th agent in the preferences of~$a_j$ for~$j> i$.
 It follows that the mutual attraction matrix~$R$ of~$\mathcal{P}$ fulfills that $R[i, j] = \begin{cases} i &  j \ge i\\
i - 1 & j < i
\end{cases}$, i.e., we have $R = \IDn$.
\end{proof}

\subsection{Mutual Agreement}
Given $n \in \mathbb{N}$, the $2n \times (2n -1)$-matrix \symn\ can be written as follows:
\begin{equation*}
\begin{bmatrix}
 1 & 2 & 3 & 4 & \dots & 2n- 2 & 2n-1 \\
 1 & 2 & 3 & 4 & \dots & 2n- 2 & 2n-1 \\
 1 & 2 & 3 & 4 & \dots & 2n- 2 & 2n-1 \\
 & & & & \vdots & &\\
 1 & 2 & 3 & 4 & \dots & 2n- 2 & 2n-1 \\
 1 & 2 & 3 & 4 & \dots & 2n- 2 & 2n-1 
\end{bmatrix}
\end{equation*}

\MAreal*
\begin{proof}
 We prove the statement by giving an injective function from the realizations of \symn\ to the set of Round-Robin tournaments as well as an injective function from Round-Robin tournaments to realizations of \symn.
 
 Formally, we interpret Round-Robin tournaments as colored 1-factorization of the complete graph~$K_{2n}$.
 \begin{definition}
 A \emph{colored 1-factorization} of the complete graph~$K_{2n}$ is a function $f:  E( K_{2n} ) \rightarrow [2n-1]$ such that $f^{-1} (j) $ is a perfect matching for every~$j \in [2n-1]$.
\end{definition}
Accordingly, in the following we speak of colored 1-factorization of~$K_{2n}$.
 
 We start by describing an injective function from the realizations of \symn\ to the set of colored 1-factorizations of~$K_{2n}$.
 Let $\mathcal{P}$ be a realization of \symn.
 We create a colored 1-factorization~$f$ of~$K_{2n}$ as follows (where we identify the vertices of~$K_{2n} $ with the agents of~$\mathcal{P}$).
 For every~$j \in [2n-1]$, we set $f (\{a, a'\}) = j$ if and only if $ v$ is the $j$-th agent in the preferences of~$w$.
 Since $\symn[i,j] =j$ for every~$j\in [2n-1]$, this is indeed a colored 1-factorization.
 The corresponding function is clearly injective.
 
 Next, we give an injective function from the set of colored 1-factorizations to the realizations of~\symn.
 Given a colored 1-factorization on $K_{2n} $, we create preferences for each vertex as follows:
 The $j$-th vertex in the preferences of some vertex~$v$ is the vertex~$w$ such that $f(\{v, w\}) = j$.
 Since we have a colored 1-factorization, these are well-defined and feasible preferences.
 Let $\mathcal{P}$ be the resulting preference profile.
 The function is clearly injective.
 It remains to show that the mutual attraction matrix~$R$ of~$\mathcal{P}$ is \symn.
 For every vertex~$v$ with vertex~$w$ at the $j$-th position in its preferences, we have $f(\{v, w\}) =j$, implying that also $w$ has $v$ in the $j$-th position in its preferences.
 Consequently, we have $R[i,j ] = j$ for every~$ i \in [2n]$ and $j \in [2n-1]$, i.e., $R = \symn$.
 
 We remark that in general, there may be multiple non-isomorphic 1-factorizations~\cite{MR1516644}
 and thus also multiple non-isomorphic realizations of~\symn.
\end{proof}

\subsection{Mutual Disagreement}
Given $n \in \mathbb{N}$, the $2n \times (2n -1)$-matrix \asymn\ is defined as follows:
\begin{equation*}
\begin{bmatrix}
 2n-1 & 2n - 2 & 2n - 3 & 2n- 4 & \dots & 2 & 1 \\
 2n-1 & 2n - 2 & 2n - 3 & 2n- 4 & \dots & 2 & 1 \\
 2n-1 & 2n - 2 & 2n - 3 & 2n- 4 & \dots & 2 & 1 \\
 & & \vdots & & & &\\
 2n-1 & 2n - 2 & 2n - 3 & 2n- 4 & \dots & 2 & 1 \\
 2n-1 & 2n - 2 & 2n - 3 & 2n- 4 & \dots & 2 & 1
\end{bmatrix}
\end{equation*}

    \MDreal*
\begin{proof}
We start by proving the first part.
 The following preferences on agents~$a_1, \dots, a_{2n}$ are a realization for \asymn:
 \[
  a_i : a_{i+1} \succ a_{i+2} \succ \dots \succ a_{2n} \succ a_1 \succ a_2 \succ \dots \succ a_{i - 1}\,.
 \]
 It remains to show that this is indeed a realization of~\asymn.
 So fix~$i \in [2n]$ and $j \in [2n-1]$.
 For the rest of the proof, all indices are taken modulo $2n$.
 The $j$-th position in the preferences of~$a_i$ is~$a_{i + j}$.
 In the preferences of~$a_{i+j}$, agent~$a_i$ is at the $(2n - j)$-th position.
 Consequently, the preference profile realizes \asymn.
 
 We now turn to the second part.   For $n = 3$, the following realization is not isomorphic to the one described above, but its mutual agreement matrix is \asymn:
 \begin{align*}
     a_1 & : a_2 \succ a_3 \succ a_4 \succ a_5 \succ a_6 \\
     a_2 & : a_4 \succ a_6 \succ a_5\succ a_3\succ a_1\\
     a_3&: a_5\succ a_2\succ a_6\succ a_1\succ a_4\\
     a_4 & : a_3\succ a_5\succ a_1\succ a_6\succ a_2\\
     a_5 & : a_6 \succ a_1\succ a_2\succ a_4\succ a_3\\
     a_6 &: a_1\succ a_4\succ a_3\succ a_2\succ a_5 \qedhere
 \end{align*}
\end{proof}

\subsection{Chaos}
Given $n \in \mathbb{N}$, the $2n \times (2n -1)$-matrix \unn\ can be written as follows:
\begin{equation*}
\begin{bmatrix}
 1 & 2 & 3 & 4 & \dots & 2n-2 & 2n- 1 \\
 1 & n+1 & 2 & n+2 & \dots & 2n-1 & n \\
 2 & n+2 & 3 & n+3 & \dots & 1 & n + 1 \\
 3 & n+3 & 4 & n+4 & \dots & 2 & n+2 \\
 4 & n+4 & 5 & n+5 & \dots & 3 & n+3 \\
 & & & & \vdots & &\\
 2n-2 & n - 1 & 2n-1 & n & \dots & 2n-3 & n-2 \\
 2n-1 & n & 1 & n+1 & \dots & 2n-2 & n-1
\end{bmatrix}
\end{equation*}

\CHreal*
\begin{proof}
We remark that for each $\ell \in [n]$ it holds that $\unn [i][2\ell -1] = i + \ell - 2 \mod 2n - 1$.
Further, for each $\ell \in [n-1]$ it holds that $\unn [i][2\ell] = i + \ell + n - 2\mod 2n - 1$.
First, we show that $n-1$, $n$, and $n+1 $ are coprime to~$2n-1$.
Any integer dividing $n-1$ and $2n-1$ also divides $2n-1 - 2\cdot (n-1) =1$.
Any integer dividing $n$ and $2n-1$ also divides $2\cdot n - (2n-1) =1$.
Any integer dividing $n+1$ and $2n-1$ also divides $2\cdot (n + 1) - (2n-1) =3$.
Since $3$ is prime and does not divide $2n-1$, it follows that $n+1$ and $2n-1$ are coprime.

Since $2n-1$ and $n$ are coprime, each row of \unn\ contains each number from~$[2n-1]$ exactly once.
Since the entries in the first column of rows 2 to~$2n$ are 1, 2, \dots, $2n-1$, it follows that each number from~$[2n-1]$ is contained exactly once in each row and once in each column of the submatrix arising from deleting the first row.
Consequently, for every $j \in [2n-1]$, there exists exactly one~$i \in \{2, 3, 4, \dots, 2n\}$ with $\unn[i, j] =j$.

Fix $i > 1$ and let~$j \in [2n-1]$.
Let $j^* := \unn[i][j]$ and assume that $j^* \neq j$.
We claim that there exists exactly one~$i^* \in \{2, 3, \dots, 2n\}$ such that $\unn[i^*][j^*] = j$ and that $i^* \neq i$.
Existence and uniqueness of~$i^*$ follows as $j$ is contained once in each column (so in particular also in column~$j^*$).
It remains to show that $i^* \neq i$.
So assume towards a contradiction that $i^* = i$.
We have $j^* = i + nj - n -1 \mod 2n-1$.
Since $j^* \neq j$, it follows that 
\begin{align}\label{ineq:unknown}
 i + (n -1 ) \cdot j - n -1 \neq 0 \mod 2n-1.
\end{align}
Since $\unn[i][j^*] = j$, we have $i + n (i + n j - n -1 ) - n-1 = j \mod 2n-1$ which is equivalent to
\begin{align}\label{eq:unknown}
    (n + 1) \cdot (i + (n-1)\cdot j -n - 1) = 0 \mod 2n-1
\end{align}
Since $n + 1 $ and $2n-1$ are coprime, \Cref{eq:unknown} implies that $i + (n-1)\cdot j-n-1 =0 \mod 2n-1$, contradicting \Cref{ineq:unknown}.

In the realization of \unn, agent~$a_i$ ranks at position~$j$ the agent~$i^*$ with~$\unn[i^*] [j^*] = j$.
The resulting preference profile clearly realizes \unn, so it remains to show that it is indeed a preference profile, i.e., each agent appears exactly once in the preferences of some other agent.
It suffices to show that no agent~$a_{i^*}$ appears twice in the preferences of an agent~$a_i$.
So assume towards a contradiction that $a_{i^*}$ appears at positions~$j_1$ and $j_2$ (where $j_1 \neq j_2$) in the preference of~$a_i$.
Thus, we have $j_1 = i^* + ni + n^2 j_1 - n^2 -2n -1 \mod 2n-1$ and $j_2 = i^* + ni + n^2 j_2 - n^2 -2n -1 \mod 2n-1$.
Consequently, we have $(n^2 -1) \cdot j_1 = i^* + ni - n^2 -2n -1 = (n^2 -1) j_2 \mod 2n-1$.
It follows that $(n-1 ) (n+ 1) \cdot (j_1 - j_2) = 0 \mod 2n-1$.
Since $n-1$ and $2n-1$ as well as $n+1$ and $2n-1$ are coprime, it follows that $j_1 = j_2 \mod 2n-1$, a contradiction.

The uniqueness of the realization is obvious.
\end{proof}

\subsection{Proof of \Cref{pr:calc}}
This section is devoted to proving the following statement:
\disss*

\begin{lemma}\label{lem:dist-sym-asym}
	$\retrodist (\symn, \asymn ) = 4\cdot (n-1)\cdot n^2$.
\end{lemma}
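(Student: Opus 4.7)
The plan is straightforward because both matrices in question have the simple structure that all rows are identical: every row of $\symn$ equals $(1, 2, \ldots, 2n-1)$, and every row of $\asymn$ equals $(2n-1, 2n-2, \ldots, 1)$. Consequently, for any bijection $\sigma \in \Pi([2n],[2n])$, the value $\sum_{i} \ell_1((\symn)_i, (\asymn)_{\sigma(i)})$ is independent of $\sigma$, so the minimum in the definition of $\retrodist$ is simply attained by the identity (or any other) bijection. This reduces the computation to a direct sum.

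First I would write out
\[
\retrodist(\symn, \asymn) = \sum_{i=1}^{2n} \sum_{j=1}^{2n-1} \bigl| \symn[i,j] - \asymn[i,j] \bigr| = 2n \sum_{j=1}^{2n-1} |j - (2n-j)| = 4n \sum_{j=1}^{2n-1} |j-n|.
\]
Then I would evaluate the inner sum by splitting at $j=n$:
\[
\sum_{j=1}^{2n-1} |j-n| = \sum_{j=1}^{n-1} (n-j) + \sum_{j=n+1}^{2n-1} (j-n) = 2 \sum_{k=1}^{n-1} k = n(n-1).
\]
Multiplying gives $\retrodist(\symn,\asymn) = 4n \cdot n(n-1) = 4(n-1)n^2$.

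There is essentially no obstacle here: the two matrices are ``row-constant,'' which trivializes the optimization over bijections, and the remaining arithmetic is a single telescoping-style computation. The only conceptual remark worth including is that this value matches the upper bound $4(n-1)n^2$ established earlier in \Cref{lem:max-dist}, which immediately implies that $\symn$ and $\asymn$ realize a diameter of the space of mutual attraction matrices of SR instances with $2n$ agents.
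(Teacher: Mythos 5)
Your proof is correct and follows essentially the same route as the paper's: both observe that all rows of $\symn$ and $\asymn$ are identical, so the minimization over bijections is vacuous, and then evaluate the per-row sum $\sum_{j=1}^{2n-1}|j-(2n-j)| = 2n(n-1)$ and multiply by the $2n$ rows. The closing remark that this matches the upper bound from \Cref{lem:max-dist} is also made in the paper immediately after the lemma.
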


\begin{proof}
	Independent of the mapping of agents, we get the following distance per row:
	\begin{align*}
	\sum_{j=1}^{2n-1} | j - (2n-j)| & = \sum_{j = 1}^{2n-1} |2j - 2n|= \sum_{j=1}^{n-1} (2n - 2j) + \sum_{j = n + 1}^{2n-1} (2j -2n) \\
	& = 2n \cdot (n-1) -  2 \sum_{j =1}^{n-1} j + 2 \sum_{j= 1}^{n-1} j  = 2 n \cdot (n-1)\,.
	\end{align*}
	Summing over all $2n$~rows proves the lemma.
\end{proof}

Note that \Cref{lem:max-dist,lem:dist-sym-asym} imply that the distance between \symn\ and \asymn\ is the maximum possible distance between any two realizable matrices.

\begin{lemma}\label{lem:dist-id-sym}
	$\retrodist (\IDn , \symn) = \frac{8}{3} n^3 - 4n^2 + \frac{4}{3} n$.
\end{lemma}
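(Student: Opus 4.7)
The key observation is that $\symn$ has identical rows: $(\symn)_i = (1, 2, \ldots, 2n-1)$ for every $i$. Consequently, every permutation $\sigma \in \Pi([2n], [2n])$ yields exactly the same value for the sum in \Cref{def:MAD}, so the minimization is trivial and we have
\[
\retrodist(\IDn, \symn) = \sum_{i=1}^{2n} \ell_1\bigl((\IDn)_i, (1, 2, \ldots, 2n-1)\bigr).
\]

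By definition, the $i$-th row of $\IDn$ has value $i-1$ at positions $1, 2, \ldots, i-1$ and value $i$ at positions $i, i+1, \ldots, 2n-1$. Thus the contribution of row $i$ to the above sum is
\[
\sum_{j=1}^{i-1} |i-1-j| + \sum_{j=i}^{2n-1} |i-j| = \sum_{k=0}^{i-2} k + \sum_{k=0}^{2n-1-i} k = \frac{(i-1)(i-2)}{2} + \frac{(2n-1-i)(2n-i)}{2}.
\]
Substituting $k = i-1$ in the first term and $k = 2n-i$ in the second, and summing over $i = 1, \ldots, 2n$, both contributions become the same sum $\sum_{k=0}^{2n-1} \tfrac{k(k-1)}{2}$, so the total is
\[
\retrodist(\IDn, \symn) = \sum_{k=0}^{2n-1} k(k-1) = \sum_{k=0}^{2n-1} k^2 - \sum_{k=0}^{2n-1} k.
\]

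Applying the standard formulas $\sum_{k=0}^{2n-1} k^2 = \tfrac{(2n-1)(2n)(4n-1)}{6}$ and $\sum_{k=0}^{2n-1} k = \tfrac{(2n-1)(2n)}{2}$, a short simplification gives $\tfrac{(2n-1)(2n)(4n-4)}{6} = \tfrac{4n(n-1)(2n-1)}{3} = \tfrac{8}{3}n^3 - 4n^2 + \tfrac{4}{3}n$, which is the claimed value. There is no genuine obstacle here since the uniform rows of $\symn$ collapse the matching problem to a purely computational exercise; the only care needed is bookkeeping of the two index ranges in the row-wise $\ell_1$ distance.
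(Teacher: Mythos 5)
Your proof is correct, and it takes a genuinely more self-contained route than the paper's. The paper also starts from the observation that all rows of $\symn$ are identical (so the minimization over $\sigma$ is vacuous) and writes down the per-row contribution, but then only notes that this contribution is a quadratic in $i$ and $n$, concludes that the row sum is a cubic in $n$, and pins down that cubic by interpolating through four explicitly computed values. You instead carry the summation through in closed form, exploiting the symmetry $i \mapsto 2n+1-i$ to collapse both triangular sums into $\sum_{k=0}^{2n-1} k(k-1)$ and finishing with the standard power-sum formulas. Your computation checks out at every step (e.g., for $n=3$ your row contributions are $10,6,4,4,6,10$ summing to $40$, which agrees with $\tfrac{8}{3}\cdot 27 - 4\cdot 9 + \tfrac{4}{3}\cdot 3 = 40$). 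What your approach buys is a complete derivation with no reliance on externally computed data points; this is worth something here, because the paper's proof actually lists $\retrodist(\ID^{2\cdot 3}, \text{MA}^{2\cdot 3}) = 32$ as one of its four interpolation points, whereas the correct value is $40$ (the stated final polynomial evaluates to $40$ at $n=3$, and direct computation confirms it), so one of the paper's anchor values is a typo that your closed-form argument sidesteps entirely.
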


\begin{proof}
	For the $i$-th row of~$\IDn$, the distance to any row from $\symn$ is
	\begin{align*}
	\sum_{j=1}^{i-1} ((i-1) - j) + \sum_{j=i}^{2n-1} (j -i) & = (i-1)^2 - \sum_{j=1}^{i-1} j + \sum_{j=1}^{2n-1 - i} j\\
	& = (i-1)^2 - 0.5 \cdot (i-1) \cdot i + 0.5 (2n-1-i) \cdot (2n -i)
	\end{align*}
    This is a polynomial of second degree in $i$ and~$n$.
    Thus, summing over all rows (i.e., from $i=1$ to $2n$) yields a polynomial of third degree in~$n$.
    This is uniquely determined by any four points, e.g., by~$\retrodist (\ID^{2\cdot 1}, \MA^{2\cdot 1}) = 0$, $\retrodist (\ID^{2\cdot 2}, \MA^{2\cdot 2}) =8$, $\retrodist (\ID^{2\cdot 3}, \MA^{2\cdot 3}) = 32$, and $\retrodist (\ID^{2\cdot 4}, \MA^{2\cdot 4}) =112$.
    Consequently, we have $\retrodist (\IDn , \symn) = \frac{8}{3} n^3 - 4n^2 + \frac{4}{3} n$.
\end{proof}

\begin{lemma}
	$\retrodist (\IDn , \asymn) =\frac{8}{3} n^3 - 2n^2 - \frac{2}{3} n$.
\end{lemma}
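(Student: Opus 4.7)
The plan is to mirror closely the approach used in the preceding lemma for $\retrodist(\IDn, \symn)$. Since every row of $\asymn$ is identical, namely $(2n-1, 2n-2, \ldots, 1)$, the quantity $\sum_{i} \ell_1\bigl((\IDn)_i, (\asymn)_{\sigma(i)}\bigr)$ does not depend on the bijection $\sigma$. Hence the mutual attraction distance is exactly the sum, over $i \in [2n]$, of the $\ell_1$-distance between the $i$-th row of $\IDn$ and any fixed row of~$\asymn$.

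Next, I would compute this row distance in closed form. Recall that $\IDn[i,j] = i-1$ for $j < i$ and $\IDn[i,j] = i$ for $j \geq i$, while $\asymn[i,j] = 2n - j$. Hence
$$d_i \;:=\; \sum_{j=1}^{i-1}\bigl|(i-1) - (2n-j)\bigr| \;+\; \sum_{j=i}^{2n-1}\bigl|i - (2n-j)\bigr|.$$
In the first sum, one checks that $(i-1) - (2n-j) \leq -2 < 0$ for every $j \in [1,i-1]$, so the absolute value opens cleanly. In the second sum, one splits at $j = 2n - i$: this index lies in the summation range $[i, 2n-1]$ precisely when $i \leq n$. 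In each of the two cases ($i \leq n$ and $i > n$) the resulting sum evaluates to a polynomial in $i$ and $n$ of total degree two.

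Summing $d_i$ over $i = 1, \ldots, 2n$ therefore produces a polynomial in $n$ of degree three (the two piecewise pieces glue together after summation, because the two quadratic expressions agree at the boundary $i = n$). Following the shortcut used in the proof of the previous lemma, this polynomial is uniquely determined by any four values. I would accordingly compute $\retrodist(\ID^{2}, \asym^{2}) = 0$, $\retrodist(\ID^{4}, \asym^{4}) = 12$, $\retrodist(\ID^{6}, \asym^{6})$, and $\retrodist(\ID^{8}, \asym^{8})$ directly from the matrix definitions, then interpolate and verify that the unique cubic passing through these four points is $\frac{8}{3}n^3 - 2n^2 - \frac{2}{3}n$.

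The main obstacle is simply the piecewise bookkeeping in the row-distance computation, which is why the polynomial interpolation trick is attractive: it reduces the proof to four straightforward arithmetic verifications plus the degree-three polynomial argument, completely bypassing the case split between $i \leq n$ and $i > n$. No conceptual difficulty arises beyond these routine calculations.
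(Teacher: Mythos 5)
Your overall strategy is exactly the paper's: observe that all rows of $\asymn$ are identical so the bijection $\sigma$ is irrelevant, argue that the distance from the $i$-th row of $\IDn$ to the common row of $\asymn$ is a piecewise polynomial of total degree two in $i$ and $n$ (with the case split at $i\le n$ versus $i>n$), conclude that the sum over the $2n$ rows is a cubic in $n$, and pin down that cubic by interpolating through four directly computed values. The four anchor values you list agree with the paper's (in its indexing, $\retrodist(\ID^{2\cdot 1},\asym^{2\cdot 1})=0$ and $\retrodist(\ID^{2\cdot 2},\asym^{2\cdot 2})=12$).

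However, one step as stated is false: the claim that $(i-1)-(2n-j)\le -2<0$ for every $j\in[1,i-1]$, so that the first sum ``opens cleanly.'' Taking $j=i-1$ gives $(i-1)-(2n-j)=2(i-n-1)$, which is nonnegative as soon as $i\ge n+1$. Concretely, for $n=2$ and $i=4$ the last row of $\ID^{4}$ is $(3,3,3)$, the common row of $\asym^{4}$ is $(3,2,1)$, the true row distance is $3$, but opening every absolute value in the first sum with a uniform sign yields $-3$. So for $i>n$ the first sum needs its own split (at $j=2n-i$), which is precisely what the paper does, writing the $i$-th row distance for $i>n$ as $\sum_{j=1}^{2n-i}(2n-j-(i-1))+\sum_{j=2n-i+1}^{i-1}\bigl((i-1)-(2n-j)\bigr)+\sum_{j=i}^{2n-1}\bigl(i-(2n-j)\bigr)$. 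With that correction each piece is still a polynomial of total degree two in $i$ and $n$, so your degree-three conclusion and the interpolation go through unchanged. Also, the remark about the two quadratics ``gluing'' at $i=n$ is unnecessary: each partial sum $\sum_{i\le n}$ and $\sum_{i>n}$ of a bivariate quadratic over a range of length $O(n)$ is already a degree-three polynomial in $n$ on its own, whether or not the pieces agree at the boundary.
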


\begin{proof}
	For any~$i \le n$, the distance of the $i$-th row of~$\IDn$ to any row from $\symn$ is
	\begin{align*}
	\sum_{j=1}^{i-1} & (2n -j - (i-1)) + \sum_{j=i}^{2n-i} (2n -j -i) + \sum_{j = 2n-i+1}^{2n-1} i - (2n-j) \\
	\end{align*}
	This is a polynomial of second degree in~$n$ and~$i$.

	For any $i > n$, the distance of the $i$-th row of~$\IDn$ to any row from $\symn$ is
	\begin{align*}
	\sum_{j=1}^{2n -i} & (2n -j - (i-1)) + \sum_{j=2n - i+ 1}^{i-1} (i-1 - (2n-j)) + \sum_{j = i}^{2n-1} i - (2n-j) \\
	& = \sum_{j=1}^{2n -i} (2n -j - (i-1)) +  \sum_{j = 2n -i + 1}^{2n-1} i - (2n-j) - (i - (2n-i + 1))
	\end{align*}
	Again, this is a polynomial of second degree in~$n$ and~$i$.
	
	Consequently, summing up the distance over all rows (i.e., summing over~$i$ from 1 to~$2n$) yields a polynomial of third degree in~$n$.
	A polynomial of third degree is uniquely characterized by any four points on the polynomial.
	Using $\retrodist (\ID^1, \asym^1) = 0$, $\retrodist (\ID^2, \asym^2) = 12$, $\retrodist (\ID^{3}, \asym^{3}) = 52$, and $\retrodist (\ID^{4}, \asym^{4}) = 136$, we get that 
	$\retrodist (\ID, \asymn) =  \frac{8}{3} n^3 - 2  n^2 - \frac{2}{3} n$.
\end{proof}

\begin{lemma}
	$\retrodist (\symn, \unn) =  \frac{8}{3} n^3 - 4  n^2 +\frac{4}{3} n$.
\end{lemma}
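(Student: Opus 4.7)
The plan leverages two structural facts: every row of $\symn$ equals the vector $v := (1, 2, \ldots, 2n-1)$, and the non-first rows of $\unn$ form a single orbit under the cyclic shift $x \mapsto x+1 \pmod{2n-1}$.

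First, since $\symn$ has all rows equal to $v$, the choice of bijection $\sigma$ in the definition of $\retrodist$ is immaterial, so
$$\retrodist(\symn, \unn) = \sum_{i=1}^{2n} \ell_1\bigl(v, (\unn)_i\bigr).$$
Row~$1$ of $\unn$ equals $v$ and contributes $0$, leaving the task of evaluating $\sum_{i=2}^{2n} \ell_1(v, (\unn)_i)$. Setting $m := 2n-1$, I would use the defining formula $\unn[i,j] \equiv i + nj - n - 1 \pmod m$ to conclude that for all $i \in \{2, \ldots, 2n\}$ and $j \in [m]$,
$$\unn[i,j] \equiv \unn[2,j] + (i-2) \pmod{m},$$
i.e., row $i$ is obtained from row $2$ by adding $i-2$ modulo $m$.

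Next, I would swap the order of summation:
$$\sum_{i=2}^{2n} \ell_1\bigl(v, (\unn)_i\bigr) = \sum_{j=1}^{m} \sum_{i=2}^{m+1} \bigl|\bigl((\unn[2,j] + i - 2)\bmod m\bigr) - j\bigr|.$$
The key observation is that for each fixed $j$, as $i$ runs over $\{2, \ldots, m+1\}$, the integers $\unn[2,j] + i - 2$ form a block of $m$ consecutive integers, so their residues modulo $m$ (taken in $[m]$) form a permutation of $[m]$. Hence the inner sum collapses to $\sum_{k=1}^{m}|k-j|$, independently of row~$2$, so that
$$\retrodist(\symn, \unn) = \sum_{j=1}^{m}\sum_{k=1}^{m}|k-j|.$$

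All that remains is the elementary evaluation
$$\sum_{j=1}^{m}\sum_{k=1}^{m}|k-j| = 2\sum_{d=1}^{m-1} d\,(m-d) = \frac{m(m-1)(m+1)}{3};$$
substituting $m = 2n-1$ gives $\frac{2n \cdot (2n-2) \cdot (2n-1)}{3} = \frac{8}{3}n^3 - 4n^2 + \frac{4}{3}n$, matching the claim. No step is really an obstacle; the only point deserving care is the verification that $m$ consecutive integers form a complete system of residues modulo $m$, which is immediate.
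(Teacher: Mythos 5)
Your proof is correct, but it takes a genuinely different route from the paper's. The paper also begins by noting that all rows of $\symn$ are identical (so the bijection is irrelevant) and that the first row of $\unn$ contributes nothing, but from there it writes out the contribution of the $i$-th row of $\unn$ as an explicit sum of absolute values with case-dependent breakpoints $j^*_{\text{odd}}, j^*_{\text{even}}$, observes that this is a polynomial of second degree in $n$ and $i$, concludes that the total over all rows is a cubic polynomial in $n$, and then pins down that cubic by interpolating through four explicitly computed values $\retrodist(\chaos^{2\cdot 1},\sym^{2\cdot 1})=0$, $\retrodist(\chaos^{2\cdot 3},\sym^{2\cdot 3})=40$, $\retrodist(\chaos^{2\cdot 4},\sym^{2\cdot 4})=112$, $\retrodist(\chaos^{2\cdot 6},\sym^{2\cdot 6})=448$. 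You instead exploit the cyclic structure of rows $2,\dots,2n$ of $\unn$ (row $i$ is row $2$ shifted by $i-2$ modulo $2n-1$), swap the order of summation, and use the fact that $m=2n-1$ consecutive integers form a complete residue system modulo $m$ to collapse each column's contribution to $\sum_{k=1}^m|k-j|$, yielding the closed form $\frac{m(m-1)(m+1)}{3}$ directly. Your argument is self-contained and arguably cleaner: it avoids both the piecewise bookkeeping and the interpolation step (whose justification that the row-sum really assembles into a single cubic in $n$ the paper dispatches with ``one easily verifies''), at the cost of needing the one structural observation about the orbit under cyclic shifts. Both are valid; yours gives the exact value by direct computation rather than by degree-counting plus sampled data points.
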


\begin{proof}
	As every row from \symn\ is identical, the mapping between the rows is irrelevant.
	The first row of \unn\ is identical to any row of \symn\ and thus does not contribute to the mutual attraction distance.
	Fix some~$i > 1$.
	Let $j^*_{\text{odd}} := \min \{n, 2n + 1 - i\}$ and $j^*_{\text{even}} := \min \{ n-1, n + 1 - i\}$.
	For~${i > 1}$, the $i$-th row of \unn\ contributes
	\begin{align*}
	\sum_{j = 1} ^{2n -1} & |j - \unn[i][j]| = \sum_{\ell = 1}^{n} |\unn[i][2\ell -1] - (2\ell -1)| +  \sum_{\ell = 1}^{n-1} |\unn[i][2\ell] - 2\ell|\\
	& = \sum_{\ell =1}^{j^*_{\text{odd}}} | i + \ell - 2 - 2\ell + 1| + \sum_{\ell =j^*_{\text{odd}}}^n |i + \ell -2 - (2n-1) - 2\ell + 1| \\
	& + \sum_{\ell = 1}^{j^*_{\text{even}}} | i + \ell + n -2 - 2\ell| + \sum_{\ell = j^*_{\text{even}}} | i + \ell + n -2 -(2n-1) -2 \ell|\\
	& = \sum_{\ell =1}^{j^*_{\text{odd}}} | i - \ell -1| + \sum_{\ell =j^*_{\text{odd}}}^n |i - \ell  - 2n + 1| + \sum_{\ell = 1}^{j^*_{\text{even}}} | i - \ell + n -2 | + \sum_{\ell = j^*_{\text{even}}} | i - \ell - n -1|\\
	\end{align*}
	One easily verifies that this is a polynomial of second degree in $n$ and $i$.
	Consequently, summing up the distance over all rows (i.e., summing over~$i$ from 1 to~$2n$) yields a polynomial of third degree in~$n$.
	A polynomial of third degree is uniquely characterized by any four points on the polynomial.
	Using $\retrodist (\chaos^{2\cdot 1}, \sym^{2\cdot 1}) = 0$, $\retrodist (\chaos^{2\cdot 3}, \sym^{2\cdot 3}) = 40$, $\retrodist (\chaos^{{2\cdot 4}}, \sym^{{2\cdot }4}) = 112$, and $\retrodist (\chaos^{{2\cdot 6}}, \sym^{{2\cdot 6}}) = 448$, we get that 
	$\retrodist (\symn, \unn) =  \frac{8}{3} n^3 - 4  n^2 +\frac{4}{3} n$.
\end{proof}

\begin{lemma}
	$\retrodist (\asymn, \unn) =  \frac{8}{3} n^3 - 2 n^2 - \frac{2}{3}n$.
\end{lemma}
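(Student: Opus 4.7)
The plan is to mimic the preceding lemma on $\retrodist(\symn, \unn)$ line by line. Since every row of $\asymn$ is the identical vector $(2n-1, 2n-2, \dots, 1)$, any bijection $\sigma \in \Pi([2n], [2n])$ yields the same sum, so
\[
\retrodist(\asymn, \unn) \;=\; \sum_{i=1}^{2n}\sum_{j=1}^{2n-1} \bigl|\unn[i][j] - (2n-j)\bigr|,
\]
and it suffices to analyze the inner sum for each fixed $i$ and then add. For $i=1$ I would use $\unn[1][j]=j$ to get $\sum_{j=1}^{2n-1}|2j-2n| = 2n(n-1)$, exactly as in the $\retrodist(\symn,\asymn)$ computation.

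For $i>1$ I would split the inner sum by parity, using the formulas $\unn[i][2\ell-1] = i + \ell - 2 \bmod (2n-1)$ and $\unn[i][2\ell] = i + \ell + n - 2 \bmod (2n-1)$ noted in the proof of \Cref{lem:CHreal}. Subtracting the targets $2n-(2\ell-1)$ and $2n-2\ell$ respectively, each absolute value reduces to a simple linear expression in $i$, $\ell$, and $n$; the modular reduction forces at most one threshold $\ell^{*}_{\text{odd}}, \ell^{*}_{\text{even}}$ per parity class (as a function of $i$) where the sign of the term flips. Plugging in these thresholds and evaluating the resulting finite arithmetic-progression sums shows that the total row contribution is a polynomial of degree two in $i$ and $n$ on each of the two natural $i$-ranges (say $i\le n$ and $i>n$).

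Summing such a degree-two expression over $i\in[2n]$ yields a polynomial of degree three in $n$, which is uniquely determined by its values at any four points. I would therefore verify the target formula by computing $\retrodist(\asym^{2},\chaos^{2})=0$ and three further values at $n$ for which $2n-1$ is not divisible by $3$ (so that $\unn$ is realizable and the computation is anchored to an actual instance), e.g.\ $n=3,4,6$, and then fit $\frac{8}{3}n^3-2n^2-\frac{2}{3}n$ through these four points, concluding equality for all $n$ by the degree bound.

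The main obstacle will be the bookkeeping of the case split forced by the modular reduction: the thresholds $\ell^{*}_{\text{odd}}, \ell^{*}_{\text{even}}$ depend on $i$, the sign inside each absolute value flips at different places for odd and even columns, and separately for $i\le n$ versus $i>n$ one obtains genuinely different closed forms (as already happened in the $\retrodist(\IDn,\asymn)$ calculation). The polynomial-interpolation shortcut sidesteps full symbolic simplification at the very end, but one still must justify that the total row contribution is polynomial of bounded degree on each sub-range; the cleanest way I foresee is to argue this range-by-range and then add two cubics.
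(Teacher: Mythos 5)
Your proposal is correct and follows essentially the same route as the paper: exploit that all rows of $\asymn$ coincide so the row matching is irrelevant, reduce each row's contribution to piecewise sums of absolute values of linear expressions via the modular formulas for the entries of $\unn$, observe that the total is a cubic polynomial in $n$, and pin it down by interpolating through four computed values. If anything you treat the first row more carefully than the paper, which asserts that it contributes nothing even though a row of $\asymn$ is the \emph{reverse} of the first row of $\unn$; since the paper's four interpolation anchors are the true distances, its final formula is nonetheless correct.
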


\begin{proof}
	As every row from \asymn\ is identical, the mapping between the rows is irrelevant.
	The first row of \unn\ is identical to any row of \asymn\ and thus does not contribute to the distance.
	Fix some~$i > 1$.
	Let $j^*_{\text{odd}} := \min \{n, 2n + 1 - i\}$ and $j^*_{\text{even}} := \min \{ n-1, n + 1 - i\}$.
	For $i > 1$, the $i$-th row of~\unn\ contributes
	\begin{align*}
	\sum_{j = 1} ^{2n -1} & |2n-j - \unn[i][j]| = \sum_{\ell = 1}^{n} |\unn[i][2\ell -1] - (2n - (2\ell -1))| +  \sum_{\ell = 1}^{n-1} |\unn[i][2\ell] - (2n - 2\ell)|\\
	& = \sum_{\ell =1}^{j^*_{\text{odd}}} | i + \ell - 2 - 2n + 2\ell - 1| + \sum_{\ell =j^*_{\text{odd}}}^n |i + \ell -2 - (2n-1) - 2n + 2\ell - 1| \\
	& + \sum_{\ell = 1}^{j^*_{\text{even}}} | i + \ell + n -2 - 2n + 2\ell| + \sum_{\ell = j^*_{\text{even}}} | i + \ell + n -2 -(2n-1) - 2n + 2 \ell|\\
	& = \sum_{\ell =1}^{j^*_{\text{odd}}} | i + 3 \ell -3 - 2n| + \sum_{\ell =j^*_{\text{odd}}}^n |i + 3 \ell  - 4n -2| + \sum_{\ell = 1}^{j^*_{\text{even}}} | i +3 \ell - n -2 | + \sum_{\ell = j^*_{\text{even}}} | i + 3 \ell - 3n -1|\\
	\end{align*}
	One easily verifies that this is a polynomial of second degree in $n$ and $i$.
	Consequently, summing up the distance over all rows yields a polynomial of third degree in~$n$.
	A polynomial of third degree is uniquely characterized by any four points on the polynomial.
	Using $\retrodist (\chaos^{2\cdot 1}, \asym^{2\cdot 1}) = 0$, $\retrodist (\chaos^{2\cdot 3}, \asym^{2\cdot 3}) = 52$, $\retrodist (\chaos^{{2\cdot 4}}, \asym^{{2\cdot }4}) = 136$, and $\retrodist (\chaos^{{2\cdot 6}}, \asym^{2\cdot {6}}) = 500$, we get that $\retrodist (\unn, \asymn) = \frac{8}{3} n^3 - 2 n^2 - \frac{2}{3}n$.
\end{proof}

\begin{lemma}
$\retrodist (\IDn, \unn) = \frac{8}{3} n^3 \pm O(n^2)$
\end{lemma}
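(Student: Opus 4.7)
I would prove the bound in the following strengthened form: for \emph{every} bijection $\sigma \in \Pi([2n], [2n])$, one has $\sum_{i = 1}^{2n} \ell_1(\IDn_i, \unn_{\sigma(i)}) = \tfrac{8}{3}n^3 \pm O(n^2)$. Taking the minimum over $\sigma$ then yields $\retrodist(\IDn, \unn) = \tfrac{8}{3}n^3 \pm O(n^2)$ immediately and, perhaps surprisingly, shows that the choice of $\sigma$ does not affect the leading term of the distance.

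The argument rests on two structural observations. First, each row of $\IDn$ is ``almost constant'': row $i$ has $(i - 1)$ entries equal to $i - 1$ and $(2n - i)$ entries equal to $i$, so writing $c_i$ for the vector with all entries equal to $i$, we have $\ell_1(\IDn_i, c_i) = i - 1$. Second, every row of $\unn$ is a permutation of $\{1, 2, \dots, 2n - 1\}$: row $1$ is this set in sorted order by definition, while for each $i' \ge 2$ the entries $\unn[i', j] = i' + nj - n - 1 \bmod (2n - 1)$ run through all of $\{1, \dots, 2n - 1\}$ as $j$ ranges over $[2n - 1]$, because $\gcd(n, 2n - 1) = 1$ (a fact already established inside the proof of \Cref{lem:CHreal} and needing no divisibility assumption on $2n-1$). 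Letting $f(i) := \sum_{v = 1}^{2n - 1}|v - i| = \binom{i}{2} + \binom{2n - i}{2}$, the second observation gives $\ell_1(\unn_{i'}, c_i) = f(i)$ for every row $i' \in [2n]$, a quantity independent of $i'$.

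I would then combine these two facts via the coordinate-wise triangle inequality
\begin{equation*}
\bigl||\unn[\sigma(i), j] - i| - |\IDn[i, j] - i|\bigr| \;\le\; |\IDn[i, j] - \unn[\sigma(i), j]| \;\le\; |\unn[\sigma(i), j] - i| + |\IDn[i, j] - i|.
\end{equation*}
Summing over $j \in [2n - 1]$ and invoking the two observations above, one obtains for every $i \in [2n]$ and every $\sigma$
\begin{equation*}
f(i) - (i - 1) \;\le\; \ell_1(\IDn_i, \unn_{\sigma(i)}) \;\le\; f(i) + (i - 1).
\end{equation*}
Finally, summing over $i \in [2n]$ and using $\sum_{i=1}^{2n}(i - 1) = n(2n - 1) = O(n^2)$ together with the hockey-stick identity $\sum_{i = 1}^{2n} f(i) = \binom{2n + 1}{3} + \binom{2n}{3} = \tfrac{8}{3} n^3 - 2 n^2 + \tfrac{n}{3}$ yields the required estimate. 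There is essentially no obstacle here; the key conceptual insight is that a matrix whose rows are near-constant has the same $\ell_1$-distance, up to a row-wise error of $O(n)$, to every row of a matrix whose rows are permutations of $\{1, \dots, 2n - 1\}$, which collapses the optimization over $\sigma$ to a row-by-row calculation that is independent of $\sigma$ up to lower-order terms.
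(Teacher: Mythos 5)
Your proof is correct and takes essentially the same route as the paper's: both approximate $\IDn$ by the constant-row matrix $\ID^*$ with $\ID^*[i][j]=i$ (at a cost of only $O(n^2)$) and then exploit that every row of $\unn$ is a permutation of $[2n-1]$, so that the $\ell_1$-distance to a constant row is the same for every row of $\unn$ and hence independent of the bijection $\sigma$. The only cosmetic difference is that the paper finishes by identifying $\retrodist(\ID^*,\unn)$ with $\retrodist(\ID^*,\symn)$ and reusing the already-computed value $\retrodist(\IDn,\symn)=\frac{8}{3}n^3-4n^2+\frac{4}{3}n$, whereas you evaluate $\sum_{i} f(i)$ directly; both computations agree on the leading term.
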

\begin{proof}
	We only proof the leading term.
	We have $\retrodist (\IDn, \ID^*) = O(n^2)$.
	Since every row of $\ID^*$ contains only one number and each row of $\symn$ as well as $\unn$ contain each number from~$[2n-1]$ exactly once, it follows that $\retrodist (\ID^*, \symn) = \retrodist (\ID^*, \unn)$.
	Thus, we have
	\begin{align*}
	\retrodist (\IDn, \unn) & = \retrodist (\ID^*, \unn) \pm O(n^2)\\
	& = \retrodist (\ID^*, \symn) \pm O(n^2)\\
	& = \retrodist (\IDn, \symn) \pm O(n^2)\\
	& = \frac{8}{3} n^3 \pm O(n^2) \qedhere
	\end{align*}
\end{proof}
	We conjecture that $\retrodist (\IDn, \unn) = \frac{8}{3} n^3 -3n^2 -\frac{5}{3} n + 2$.
	
\section{Additional Material for \Cref{sec:map}}

\paragraph{Quality of the Embedding.}
We now want to analyze whether the two-dimensional visualization of our dataset as a map adequately reflects the mutual attraction distances between instances. 
We consider two different quality measures for the embedding. 
First we compute for each pair of instances $(\mathcal{I},\mathcal{I}')$ its distortion
which is defined as the maximum of (a) the normalized mutual attraction distance between $\mathcal{I}$ and $\mathcal{I}'$ divided by the normalized Euclidean distance between the points representing $\mathcal{I}$ and $\mathcal{I}'$ on the map and (b) the normalized Euclidean distance between the points representing $\mathcal{I}$ and $\mathcal{I}'$ on the map divided by the normalized mutual attraction distance between $\mathcal{I}$ and $\mathcal{I}'$.
The average distortion is $1.8$, indicating that distances between instances are certainly not represented perfectly on the map. 
Nevertheless, this also underlines that the map creates a roughly correct picture of the space of SR instances (distances on the map are typically only ``off'' by a factor of two).
However, we want to remark here that some error is to be expected because our space of SR instances is naturally too complex to be perfectly embedded into two-dimensional space. 
In \Cref{fig:dist}, we analyze which of the instances on the map are particularly challenging to embed and are thus misplaced. 
We do so by coloring each point on the map according to the average distortion of all pairs involving this instance.
What we see here is that the instances that fall into the middle of the map are particularly problematic and that instances close to $\sym$ and $\asym$ are embedded nearly perfectly.

Moreover, as a slightly simpler measure we also consider for each pair of instances their normalized Euclidean distance on the map divided by their mutual attraction distance. 
We visualize the results as a histogram in \Cref{fig:hist}. 
What we see here is that instances are mostly placed ``too close to each other'' and that for a majority of instances the normalized Euclidean distance on the map is more than half of their mutual attraction distance. 

\begin{figure*}
\centering
\begin{minipage}{.48\textwidth}
   \centering
    \includegraphics[trim={0.1cm 0.1cm 0.1cm 0.1cm}, clip,width=8cm]{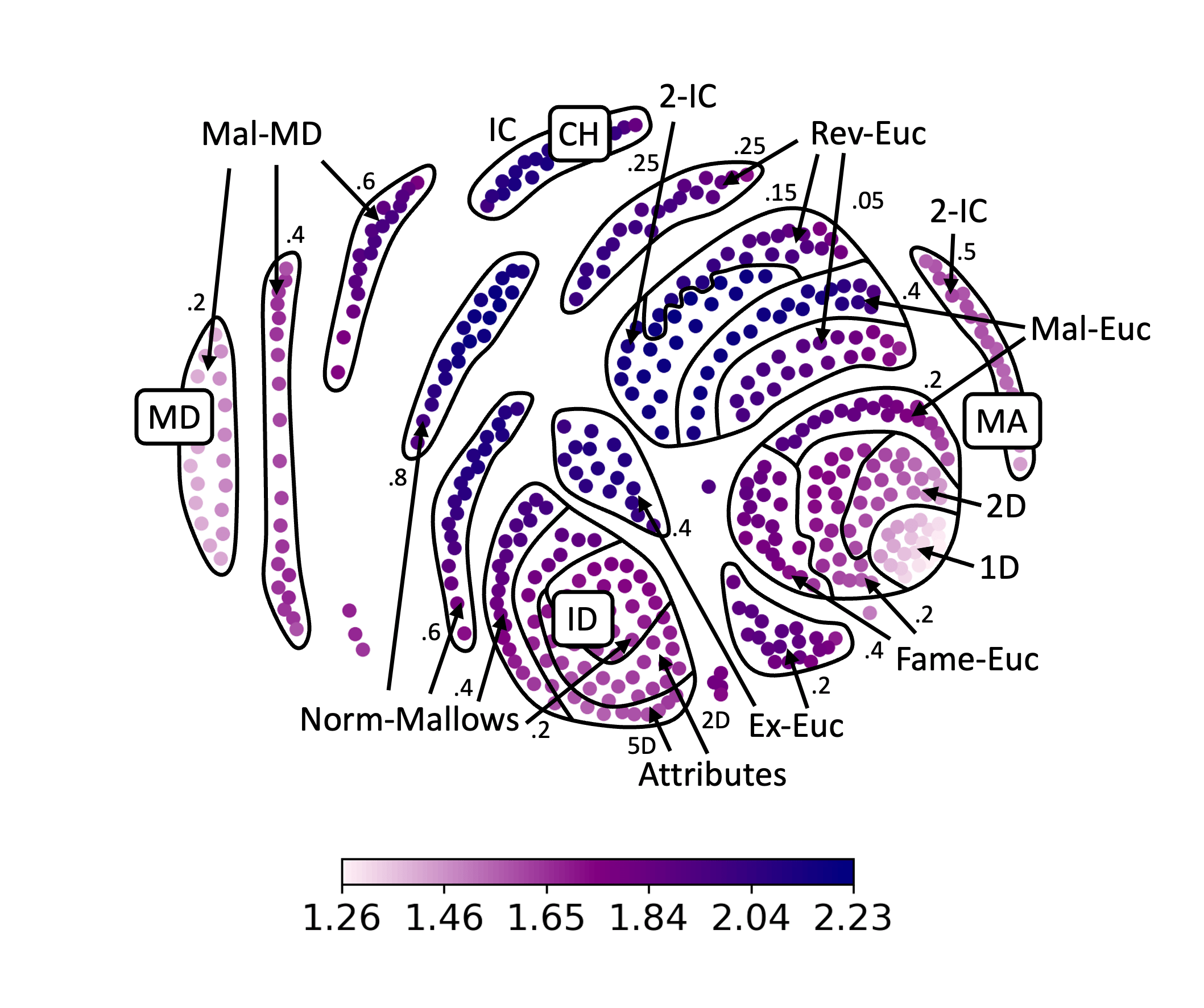}
    \caption{Average distortion for each instance on our map of SR instances for $200$ agents.}
  \label{fig:dist}
\end{minipage}\hfill
\begin{minipage}{.48\textwidth}
  \centering
  \includegraphics[width=8cm]{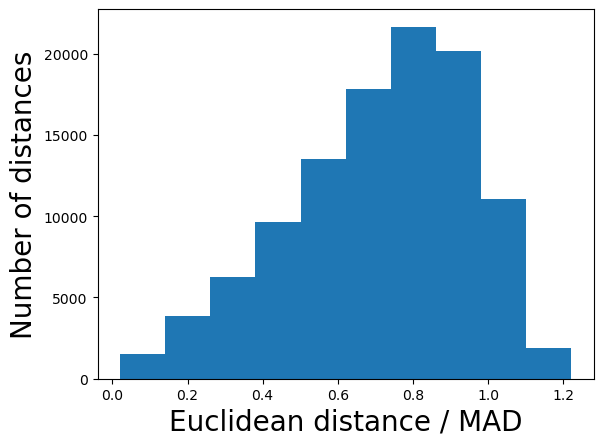}
  \caption{Histogram visualizing for instance pairs on the map their normalized Euclidean distance divided by the mutual attraction distance of the instances.}
  \label{fig:hist}
\end{minipage}
\end{figure*}

\paragraph{Map for Different Numbers of Agents.}
In \Cref{sub:creat}, we created a diverse test dataset for $200$ agents and visualized it as a map. In addition to this, we also created similar datasets for $102$\footnote{The reason  we consider $102$ instead of $100$ agents is because for $100$ agents the chaos matrix is not realizable.} and $50$ agents. 
The composition of the dataset is as in \Cref{sub:creat} so the same statistical cultures with the same parameters are used. 
Overall, all maps are very similar to each other. 
The only substantial difference between the map for $102$ agents (\Cref{fig:map_102}) and $200$ agents (\Cref{fig:mainMap}) is that instances sampled from  Mallows-Euclidean with $\normphi=0.2$ can be found both below and above the Euclidean instances in the map for $102$ agents. 
The map for $50$ agents (\Cref{fig:map_50}) differs a bit more from the map for $200$ agents in that the ``islands'' for the different cultures are a bit more scattered for $50$ agents. 
Moreover, instances sampled from $2$-IC with $p=0.25$ are placed on the bottom of the map.

\begin{figure*}
    \centering
\begin{subfigure}[b]{0.49\textwidth}
         \centering
         \includegraphics[width=7cm]{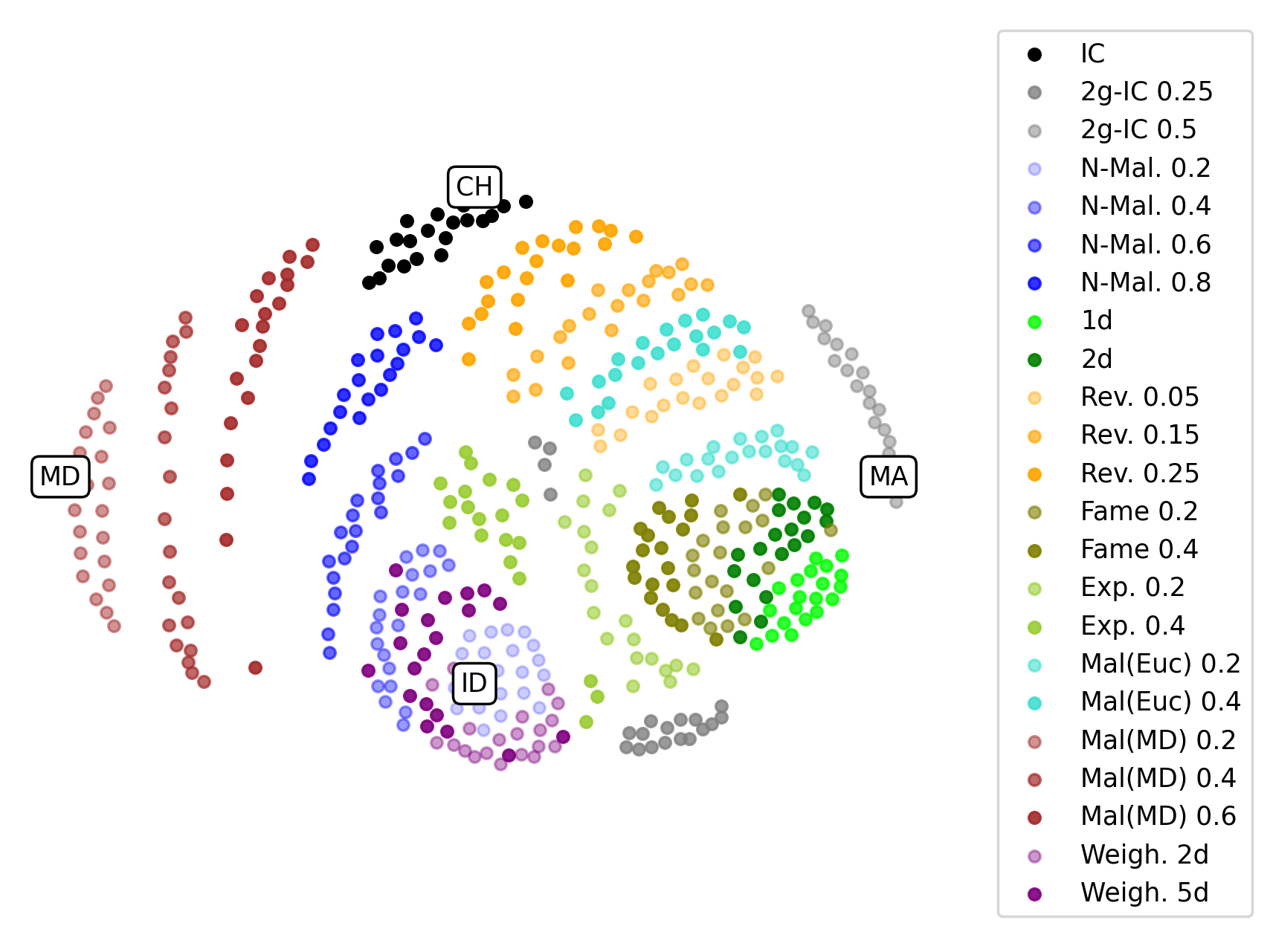}
         \caption{$50$ agents}
         \label{fig:map_50}
     \end{subfigure}
      \begin{subfigure}[b]{0.49\textwidth}
         \centering
         \includegraphics[width=7cm]{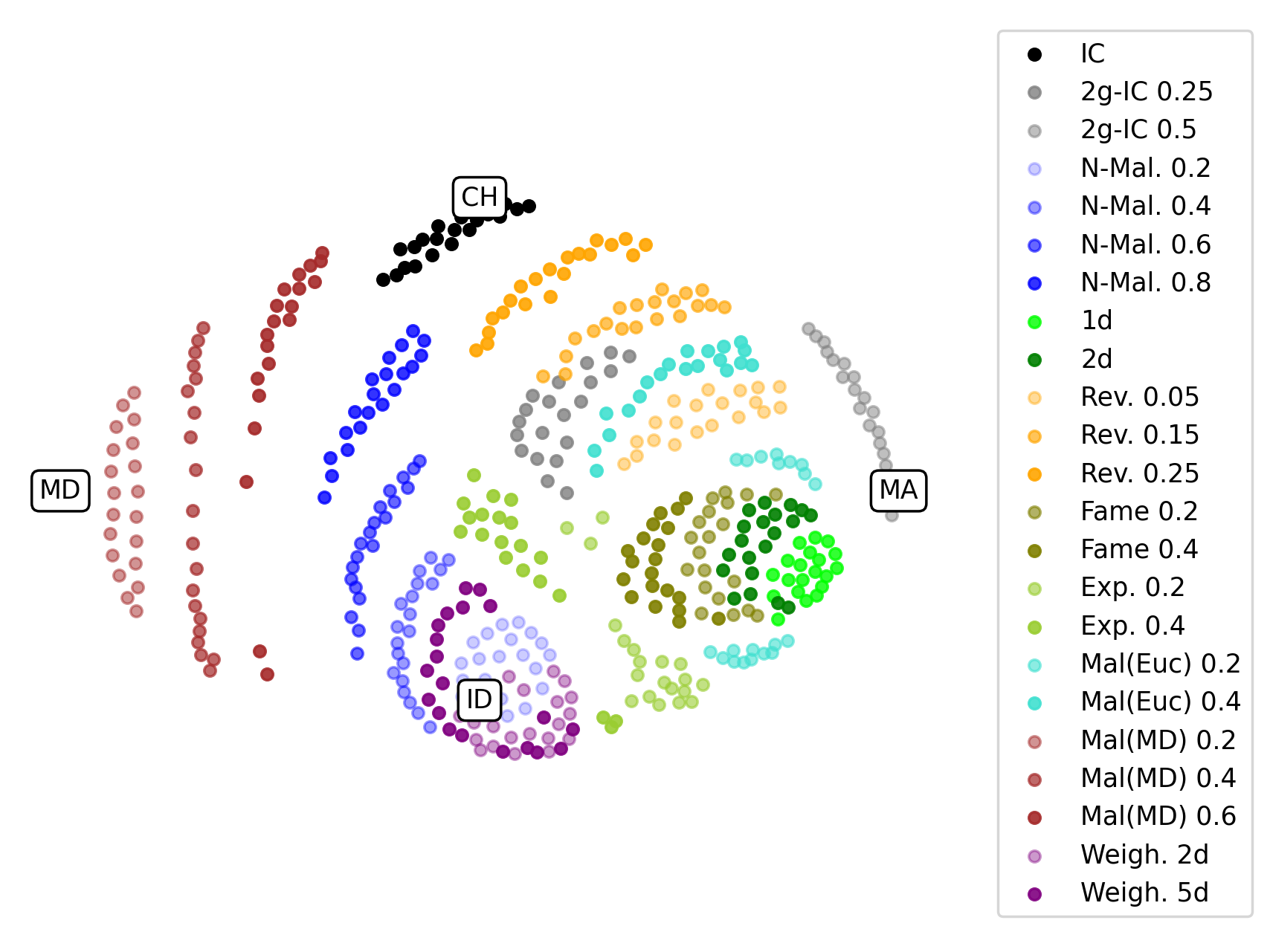}
         \caption{$102$ agents}
         \label{fig:map_102}
     \end{subfigure}
    \label{fig:other_maps}
    \caption{Map of SR instances for different numbers of agents.}
\end{figure*}

\end{document}